\journal{Applied and Computational Harmonic Analysis}
\newtheorem{mythm}{Theorem}
\newtheorem{myprop}{Proposition}
\newtheorem{mylemma}{Lemma}
\newtheorem{mydef}{Definition}
\newtheorem{mycorol}{Corollary}
\DeclareMathOperator{\diag}{diag}
\DeclareMathOperator{\vol}{vol}
\DeclareMathOperator{\Var}{Var}
\let\oldReturn\Return
\renewcommand{\Return}{\State\oldReturn}
\pgfplotsset{compat=1.3, every axis label/.append style={font=\scriptsize}, tick label style={font=\tiny}, legend style={font=\scriptsize}}
\newlength\figureheight 
\newlength\figurewidth 
\definecolor{lavander}{cmyk}{0,0.48,0,0}
\definecolor{violet}{cmyk}{0.79,0.88,0,0}
\definecolor{burntorange}{cmyk}{0,0.52,1,0}
\def\lav{lavander!90}
\def\oran{orange!30}
\tikzstyle{Node2}=[draw,circle,burntorange, left color=\oran,
\tikzstyle{Node1}=[draw,circle,lavander, left color=\lav,
\tikzstyle{light}=[draw,circle,violet, left color=violet,
\tikzset{
  LabelStyle/.style = { rectangle, rounded corners, draw,
                        minimum width = 2em, fill = yellow!50,
                        text = red, font = \bfseries },
  VertexStyle/.append style = { inner sep=5pt,
                                font = \Large\bfseries},
  EdgeStyle/.append style = {->, bend left} }   
\begin{document}

\title{Deformed Laplacians and spectral ranking in directed networks}
\author{M. Fanuel\footnote{michael.fanuel@esat.kuleuven.be} \ and J.A.K. Suykens\footnote{johan.suykens@esat.kuleuven.be},\\
\emph{\small KU Leuven, Department of Electrical Engineering (ESAT),}\\
\emph{\small STADIUS Center for Dynamical Systems, Signal Processing and Data Analytics,}\\
\emph{\small Kasteelpark Arenberg 10, B-3001 Leuven, Belgium}
}

\date{August 24, 2016}

\begin{abstract} 

Deformations of the combinatorial Laplacian are proposed, which generalize several existing Laplacians. As particular cases of this construction, the dilation Laplacians are shown to be useful tools for ranking in directed networks of pairwise comparisons. In the case of a connected graph, the entries eigenvector with the smallest eigenvalue of the dilation Laplacians have all the same sign, and provide directly a ranking score of its nodes. The ranking method, phrased in terms of a group synchronization problem, is applied to artificial and real data, and its performance is compared with other ranking strategies. A main feature of this approach is the presence of a deformation parameter enabling the emphasis of the top-$k$ objects in the ranking. Furthermore, inspired by these results, a family of random walks  interpolating between the undirected random walk and the Pagerank random walk is also proposed.
\end{abstract}
\begin{keyword}
Discrete Laplacians; Directed graphs; Ranking; Random walks; Synchronization
\end{keyword}
\maketitle

\section{Introduction}
The combinatorial Laplacian has been used extensively in applied mathematics and machine learning over the years. Indeed, discrete Laplacians are of fundamental importance whenever the available data can be organized as a graph. In the case of undirected graphs, the combinatorial Laplacian is directly used, for instance, for clustering \cite{ChungBook}, data visualization \cite{Coifman05geometricdiffusions}, or semisupervised learning \citep{BelkinManifold}. 
This paper deals with directed networks, and more specifically, graphs arising from a set of pairwise comparisons between objects, i.e., if object $i$ is preferred to object $j$ ($i\succ j$), there is a directed edge $i\to j$. Applications of the Laplacian introduced in this paper to the problem of ranking from a set of pairwise comparisons are considered.

Ranking from pairwise comparisons was addressed in the literature as a least-squares problem, from a random walk viewpoint, or in terms of a spectral problem. Let us mention some of these approaches.  Hodgerank \cite{YaoRanking} deals with the issue of finding a consistent ranking in the least square sense, by taking advantage of a discrete Hodge theory. A random walk-based approach was recently presented in \cite{Negahban}.
Alternatively, spectral ranking algorithms were also proposed recently in \cite{FogelSpectralRanking,Vigna}. More precisely, in \cite{FogelSpectralRanking}, the ranking is obtained by computing the second least eigenvector of a combinatorial Laplacian associated to a similarity matrix, yielding the algorithm Serialrank. Hence, because the so-called Fiedler vector has both positive and negative elements, the choice of the sign of this vector has to be done in order to minimize the number of upsets, i.e., the number of times an object is preferred to another object with higher ranking. The main assumption used to build this similarity matrix is that two objects preferred to the same objects are similar. Therefore, this algorithm relies on the hypothesis that many repeated comparisons are available for each object.
Another spectral ranking method, \mbox{Sync-Rank}, proposed in the paper \cite{Cucuringu,YuStella}, is based on the computation of the complex phases of an eigenvector of a Hermitian Matrix, related to the so-called connection Laplacian for $SO(2)$ (a semi-definite programming method is also discussed in the same paper). In the latter paper, a rotation is associated to each comparison, so that the objects are positioned by the algorithm on the circle. Then, the ranking is obtained by finding the cyclic permutation of the objects minimizing the number of upsets.

The method proposed in this work can be understood as an algorithm for finding a unique positive eigenvector of the dilation Laplacian with minimal eigenvalue. In fact, this spectral problem is a constrained least-squares problem which can be phrased as a constrained synchronization of dilations.  A deformation of the combinatorial Laplacian is introduced, which appears as a particular case of a construction generalizing many existing discrete Laplacians.  Two novel deformed Laplacians are proposed: the \emph{infinitesimal dilation Laplacian} and the  \emph{dilation Laplacian}, both depending on a parameter $g$ controlling their deformation from the combinatorial Laplacian. In other words, each of them is a one parameter family of Laplacians. The parameter $g$ can be interpreted as a coupling constant in the jargon of quantum physics.  
In the case of a weak coupling constant $g$, we show that the score given by the eigenvector with the smallest eigenvalue of the dilation Laplacian yields a ranking which coincides with the ranking obtained in the least-squares approach, up to a $\mathcal{O}(g^2)$ correction. For a larger value of $g$, the score ranks more accurately the objects in the top part of the ranking.
In this paper, we compare empirically the ranking obtained using the dilation Laplacian with other methods with respect to two viewpoints:
\begin{itemize}
\item[-]\emph{Criterion 1}. Given a known ranking, the efficiency of the method is assessed by evaluating how well the ranking is retrieved from missing or corrupted pairwise information. In this context, we assume that a true ranking exists. The accuracy of the ranking is then measured by using Kendall's $\tau$-distance with the ground truth. 
\item[-]\emph{Criterion 2}. In a practical perspective, where often  no ``true'' ranking  exists, and given only pairwise information, the accuracy of the method is evaluated by counting the number of disagreements or  ``upsets'' between the ranking of the objects and the known comparisons in the top part of this ranking. For instance, a low number of upsets is of course desirable in the context of sport tournaments.
\end{itemize}

Let us outline the organization and the main contributions of this paper.
In section~\ref{sec:DilationLaplacians}, the least-squares ranking problem (Hodgerank) is firstly reviewed and used as a motivation for the definition of the dilation Laplacians, yielding a connection between additive and multiplicative pairwise comparisons. After the definition of the dilation Laplacians, a spectral algorithm for ranking from pairwise comparisons is proposed, where the ranking score has a clear meaning, i.e., if the comparisons are seen as ``exchange rates'', then the score can be intuitively understood as a ``universal currency''.

As it was already mentioned, the dilation ranking is very similar to a least-squares problem and can be related to a synchronization problem over the dilation group as explained in Section~\ref{sec:AbsErrors}, where constraints on the sum of absolute errors are also given in the form of a Cheeger type inequality. Furthermore, for a larger value of the deformation parameter, a ranking score with fewer upsets in the top part is obtained, as illustrated for instance in Figure~\ref{Fig:League}. Furthermore, in artificial data sets with missing comparisons, we observe in the simulations the same effect which can be understood from Lemma~\ref{LemmaMajorization} in Section~\ref{sec:AbsErrors}.

The normalization of the combinatorial Laplacian is a customary technique in view of applications. Hence, normalized dilation Laplacians are also discussed in Section~\ref{sec:Robust} where a family of random walks interpolating between the undirected and Pagerank random walks on directed graphs is also proposed. 

Then, a generalization of a series of existing Laplacians is put forward in Section~\ref{sec:DeformedLaplacians} and further connections with several existing Laplacian are outlined in~\ref{AppDefLap}.
Finally, numerical simulations are presented in Section~\ref{sec:NumSim}, whereas all the proofs are given in~\ref{AppProof}.

\section{Dilation Laplacians and ranking from pairwise comparisons \label{sec:DilationLaplacians}}
\subsection{Preliminaries}
Consider a connected graph $\mathcal{G} = (V,E)$ with a set of $N$ nodes $V$ and a set of oriented edges $E$. For simplicity, we identify the set of nodes $V$ to the set of integers $\{1,\dots, N\}$.
An edge corresponds to an unordered pair of nodes $\{i,j\}\in E_u$, and an oriented edge to an ordered pair of nodes $e = [i,j]\in E$. The edge with the opposite orientation is denoted by $\bar{e} = [j,i]$. A symmetric weight $w_{ij}> 0$ is also associated to any pair of nodes $i$ and $j \in V$ connected by an edge, whereas $w_{ij}=0$ otherwise. It is common to consider the weight matrix $W$ with $w_{ij}$ as matrix elements, as well as the diagonal degree matrix given by $D_{ii} = \sum_{j\in V}w_{ij}$, which represents the volume taken by each node in the graph. Furthermore, the volume of the weighted graph is commonly defined as the sum of the degrees: $\vol(\mathcal{G}) =\sum_{i,j\in V}w_{ij}$.

We will also consider the skew-symmetric functions of the oriented edges $\Omega_E = \{X:E\to \mathbb{R} | X(e) = -X(\bar{e})\}$ and identify it with the set of skew-symmetric matrices of $\mathbb{R}^{N\times N}$ such that $X_{ij} =- X_{ji}$ if $[i,j]\in E$ and $X_{ij} =0$ otherwise. In the context of ranking~\cite{Tran,ELSNER}, both additive pairwise measurements \[\{a_{ij}\in \mathbb{R}| a_{ji}=-a_{ij} , \ {\rm for \ all}\  [i,j]\in E\},\] and multiplicative pairwise measurements given by \[\{s_{ij}\in \mathbb{R}_\star^+| s_{ji}=1/s_{ij} , \ {\rm for \ all}\ [i,j]\in E\},\]
will be discussed. Notice that, given multiplicative comparisons $s_{ji}=1/s_{ij}$, we can find a skew symmetric matrix $a$ such that $a_{ij}=-a_{ji}$, for all $[i,j]\in E$ such that $s_{ij} = \exp (a_{ij})$.
\subsection{Motivations, Hodgerank and Dilation Laplacians}

In order to develop an intuition, the case of ordinal comparisons is first considered, i.e., for each comparison $[i,j]\in E$, we have $w_{ij} = 1$ and 
\[
a_{ij} =\begin{cases}
 1 & \text{ if } i\succ j,\\
-1 & \text{ if } j\succ i,\\
 0 & \text{ if } i\sim j.
\end{cases}
\]
Therefore, the edge flow $a$ is skew-symmetric under a change of orientation of the edges.  In the case of cardinal comparisons, the definition is similar except that $a_{ij}\in\mathbb{R}$ rather than being valued in $\{-1,0,1\}$.
The least-squares ranking problem discussed in \cite{YaoRanking} consists in finding the score $f\in \mathbb{R}^{N}$ solving
\begin{equation}
\mathop{\rm minimize}_{f\in \mathbb{R}^N}\frac{1}{2}\sum_{\{i,j|[i,j]\in E\}}w_{ij}\Big(a_{ij} - [{\rm d}f]_{ij} \Big)^{2},\label{Hodegerank}
\end{equation}
where 
\begin{equation}
[{\rm d}f]_{ij} = f_j-f_i,\label{eq:discretegradient}
\end{equation}
is a discrete gradient, which can be implemented thanks to the incidence matrix ${\rm d}\in \mathbb{R}^{|E|\times N}$. By construction, the objective function includes one term for each undirected edge, that is, for each comparison. Furthermore, each node in the graph is treated  in the same way and each comparison has \emph{a priori} the same weight. Considering now the solution of this problem, Theorem $3$ of \cite{YaoRanking} shows that solutions to (\ref{Hodegerank}) have to satisfy the linear system
\begin{equation}
\mathcal{L}_0f = -{\rm div}\ a,\label{eq:LinearSystem}
\end{equation} 
with the combinatorial Laplacian $\mathcal{L}_0 = {\rm d}^\intercal{\rm d} = D-W$. The divergence operator is given in terms of the adjoint (or transposed) of the incidence matrix~(\ref{eq:discretegradient}) as follows:
$[- {\rm div} X]_i = [{\rm d}^\intercal X]_i = \sum_{j\in V}w_{ij} X_{ji}$. Obviously, there are infinitely many solutions to (\ref{eq:LinearSystem}) yielding equivalent ranking scores.

As shown in \cite{YaoRanking}, the minimal norm solution of the least-squares ranking problem is simply
\begin{equation}
f^\star_H = -\mathcal{L}_0^{\dagger}{\rm div}\ a,\label{HodegerankSolution}
\end{equation} 
where $\mathcal{L}_0^{\dagger}$ is the Moore-Penrose pseudo-inverse of $\mathcal{L}_0$. 
The divergence of $a$ which counts the wins minus the losses,
\begin{equation}
w^{\rm diff}_i = [{\rm div}\ a]_i,\label{eq:wdiff}
\end{equation}
is in fact the number of times $i\in V$ is better than other objects in all known comparisons minus the number of times $i$ is defeated in all available comparisons. Indeed, the point score $w^{\rm diff}_i$ already provides us with a basic ranking score but does not yet encompass the complete structure of the comparisons incorporated in the graph $\mathcal{G}$. This remark motivates the optimization problem for obtaining a ranking score vector described in Proposition \ref{PropConvexOptim}.

Denote by $v_{0}$ the normalized eigenvector of eigenvalue zero of the combinatorial Laplacian $\mathcal{L}_{0}$, that is $[v_{0}]_i = 1/\sqrt{N}$ for all $i\in V$. For convenience, we further define the diagonal matrix
$\mathcal{W}^{\rm diff} = \diag(w^{\rm diff})$. 
\begin{myprop}[Smoothing of a point score]\label{PropConvexOptim}
The minimization problem
\begin{equation}
\mathop{\rm minimize}_{f\in \mathbb{R}^N}\frac{1}{2} f^\intercal\mathcal{L}_{0} f - g f^\intercal\mathcal{W}^{\rm diff} v_{0},~{\rm s.t.}~ v_{0}^\intercal f = 0,\label{ConvexOptimizationProblem2}
\end{equation}
has a unique solution given by $f^{\star} = g\mathcal{L}_{0}^{\dagger}\mathcal{W}^{\rm diff}v_{0}$.
\end{myprop}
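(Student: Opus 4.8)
The plan is to treat (\ref{ConvexOptimizationProblem2}) as a convex quadratic program with a single linear equality constraint, and to exploit the spectral structure of $\mathcal{L}_0$. First I would observe that the objective $\frac{1}{2}f^\intercal\mathcal{L}_0 f - g f^\intercal\mathcal{W}^{\rm diff}v_0$ is convex, because $\mathcal{L}_0 = {\rm d}^\intercal{\rm d}$ is symmetric positive semidefinite; moreover, since $\mathcal{G}$ is connected, $\ker\mathcal{L}_0 = \mathrm{span}(v_0)$, so on the feasible subspace $\{f : v_0^\intercal f = 0\} = v_0^\perp$ the quadratic form is strictly positive definite. This already guarantees that a minimizer, if it exists, is unique, and that it can be found from the stationarity conditions.

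To locate it, I would introduce a Lagrange multiplier $\lambda$ for the constraint and write the stationarity condition $\mathcal{L}_0 f - g\mathcal{W}^{\rm diff}v_0 + \lambda v_0 = 0$, that is $\mathcal{L}_0 f = g\mathcal{W}^{\rm diff}v_0 - \lambda v_0$. Since $\mathcal{L}_0$ is symmetric with $\mathrm{range}(\mathcal{L}_0) = (\ker\mathcal{L}_0)^\perp = v_0^\perp$, the Fredholm solvability condition requires the right-hand side to lie in $v_0^\perp$; taking the inner product with $v_0$ and using $v_0^\intercal v_0 = 1$ yields $\lambda = g\, v_0^\intercal\mathcal{W}^{\rm diff}v_0$.

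The key step is then to show $\lambda = 0$. Here I would use that $v_0 = N^{-1/2}\mathbf 1$ and that $w^{\rm diff} = {\rm div}\, a = -{\rm d}^\intercal a$, whence $v_0^\intercal\mathcal{W}^{\rm diff}v_0 = N^{-1}\sum_i w^{\rm diff}_i = N^{-1}\mathbf 1^\intercal {\rm div}\, a = -N^{-1}({\rm d}\,\mathbf 1)^\intercal a = 0$, because ${\rm d}\,\mathbf 1 = 0$ (the discrete gradient of a constant vanishes, by (\ref{eq:discretegradient})). Equivalently, summed over all nodes the total number of wins equals the total number of losses. With $\lambda = 0$ the stationarity equation reduces to $\mathcal{L}_0 f = g\mathcal{W}^{\rm diff}v_0$, whose right-hand side indeed lies in $v_0^\perp = \mathrm{range}(\mathcal{L}_0)$.

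Finally, I would recover $f^\star$ through the pseudo-inverse: among all solutions of $\mathcal{L}_0 f = g\mathcal{W}^{\rm diff}v_0$ the one satisfying the constraint is $f^\star = g\mathcal{L}_0^\dagger\mathcal{W}^{\rm diff}v_0$, using $\mathcal{L}_0\mathcal{L}_0^\dagger = P_{v_0^\perp}$ together with the solvability established above to check that it solves the equation, and $\mathrm{range}(\mathcal{L}_0^\dagger) = v_0^\perp$ to check $v_0^\intercal f^\star = 0$. Since the program is convex and admits this feasible stationary point, it is the global minimizer, and strict convexity of the objective on $v_0^\perp$ makes it the unique one. I expect the only real obstacle to be the vanishing of the multiplier $\lambda$; the remaining identities are routine consequences of the properties of $\mathcal{L}_0^\dagger$.
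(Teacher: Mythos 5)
Your proposal is correct and follows essentially the same route as the paper: both treat (\ref{ConvexOptimizationProblem2}) as a convex quadratic program that is strictly convex on $v_0^\perp$ (using connectedness so that $\ker\mathcal{L}_0 = \mathrm{span}(v_0)$), reduce to the first-order conditions $\mathcal{L}_0 f = g\mathcal{W}^{\rm diff}v_0$ with $v_0^\intercal f = 0$, and identify $f^\star = g\mathcal{L}_0^\dagger\mathcal{W}^{\rm diff}v_0$ via the Moore--Penrose pseudo-inverse. The one point where you are more careful than the paper's proof is the Lagrange multiplier: the paper states the stationarity equation without the multiplier term, implicitly relying on the identity $v_0^\intercal w^{\rm diff} = 0$ noted in the surrounding text, whereas you derive $\lambda = g\,v_0^\intercal\mathcal{W}^{\rm diff}v_0 = 0$ explicitly from $\mathrm{d}\,\mathbf{1} = 0$ -- a welcome clarification, not a different method.
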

The objective function~(\ref{ConvexOptimizationProblem2}) includes a (smoothing) regularization term with the combinatorial Laplacian and an inner product $-f^\intercal\mathcal{W}^{\rm diff} v_{0} = -f^\intercal w^{\rm diff}/\sqrt{N}$ which enforces the alignment of the score $f$ with the point score $w^{\rm diff}$. Indeed, the first term in~(\ref{ConvexOptimizationProblem2}) is a sum of squared differences
\[f^\intercal \mathcal{L}_{0} f = \frac{1}{2}\sum_{i,j\in V}w_{ij}(f_i-f_j)^2,\]
and therefore, it tends to smoothen the solution since it promotes the components of $f^\star$ along the eigenvectors of small eigenvalues which corresponds to the most constant functions on the graph. This feature is fundamental in the context of manifold learning~\cite{BelkinManifold}. The constraint $v_{0}^\intercal f = 0$ selects a unique solution since any $f+\alpha v_{0}$ for all $\alpha\in\mathbb{R}$ would otherwise be an equivalent solution. Notice also that $v_0^\intercal w_{\rm diff} =0$.

More importantly, the minimum norm solution of Hodgerank (\ref{HodegerankSolution}) gives a ranking score proportional to the score obtained as the unique solution of a quadratic program given in Proposition \ref{PropConvexOptim}. Numerically, the ranking score of Hodgerank is then obtained as the solution of a linear system up to a shift by a constant vector. 

The formulation given in Proposition \ref{PropConvexOptim} renders explicit the role of the combinatorial Laplacian in Hodgerank and introduces a parameter $g>0$ whose role is only to scale the value of the score. We firstly notice that the same ranking score can be obtained (in approximation) from an eigenvector problem involving a deformation of the combinatorial Laplacian $\mathcal{L}_{g}$, where $g$ is now seen as a deformation parameter. A major feature of this matrix is that it is built in order to have the power series expansion
$
\mathcal{L}_{g} = \mathcal{L}_{0} - g\mathcal{W}^{\rm diff} + \mathcal{O}(g^2).
$
Let us now introduce the dilation Laplacian which is a symmetric, positive semi-definite matrix satisfying the previous power series expansion.
\begin{mydef}[Dilation Laplacian]\label{Def1}
Let $g>0$. The dilation Laplacian $\mathcal{L}_{g} = \mathcal{L}_{g}(a,W)$ is given by
\begin{equation}
\big[\mathcal{L}_{g}(a,W) v\big]_i \triangleq \big[(\mathcal{D}_{g}-W )v\big]_i  = \sum_{j \in V}w_{ij}\big(e^{g a_{ji}}v_i-v_j\big),\label{DilationLaplacian}
\end{equation}
with the diagonal matrix $\big[\mathcal{D}_{g}\big]_{ii} =\sum_{j\in V }w_{ij}e^{g a_{ji}}$, for all $i\in V$.
\end{mydef}
For convenience, in the absence of ambiguity we will write $\mathcal{L}_{g}$ rather than $\mathcal{L}_{g}(a,W)$. In contrast with the combinatorial Laplacian, notice that $\mathcal{L}_{g}$ is positive semi-definite but it is in general \emph{not}  diagonally dominant as explained in the next section (see, e.g.(\ref{eq:ObjReOrg})).
A trivial consequence of Definition~\ref{Def1} given in Lemma~\ref{LemmaMoorePenrose} is that the least eigenvector of the dilation Laplacian gives the same ranking score as Hodgerank up to an irrelevant constant and $\mathcal{O}(g^2)$ corrections.
\begin{mylemma}[Perturbative expansion]\label{LemmaMoorePenrose}
The eigenvector of the dilation Laplacian of the smallest eigenvalue satisfying $\mathcal{L}_{g}v^{(g)}_{0} = \lambda^{(g)}_0 v^{(g)}_{0}$ admits the following series expansion in powers of $g$,
\begin{equation}
v^{(g)}_{0} = v_{0} +g \mathcal{L}_{0}^{\dagger}\mathcal{W}^{\rm diff}v_{0} + \mathcal{O}(g^2).\label{eq:SeriesEigenVec}
\end{equation}
\end{mylemma}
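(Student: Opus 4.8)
The plan is to treat this as a standard first-order Rayleigh--Schr\"odinger perturbation computation for a self-adjoint family, the only nontrivial input being the analytic dependence of the bottom eigenpair on $g$. First I would record the structural facts that make this legitimate. The matrix $\mathcal{L}_{g}=\mathcal{D}_{g}-W$ is symmetric, since $\mathcal{D}_{g}$ is diagonal and $W$ is a symmetric weight matrix, and each of its entries is an entire function of $g$ because the only $g$-dependence sits in the factors $e^{g a_{ji}}$. At $g=0$ it reduces to the combinatorial Laplacian $\mathcal{L}_{0}$, whose smallest eigenvalue is $0$ and, because $\mathcal{G}$ is connected, is \emph{simple}, with $\ker\mathcal{L}_{0}=\mathrm{span}(v_{0})$. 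By Rellich's theorem on the analytic perturbation of a simple eigenvalue of a self-adjoint analytic family, there exist a scalar $\lambda^{(g)}_{0}$ and a vector $v^{(g)}_{0}$, both analytic in $g$ near $0$, satisfying $\mathcal{L}_{g}v^{(g)}_{0}=\lambda^{(g)}_{0}v^{(g)}_{0}$ with $\lambda^{(0)}_{0}=0$ and $v^{(0)}_{0}=v_{0}$. This is exactly what guarantees that the power-series expansion claimed in the statement exists.

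Next I would insert the ans\"atze $v^{(g)}_{0}=v_{0}+g\,v^{(1)}+\mathcal{O}(g^{2})$ and $\lambda^{(g)}_{0}=g\,\lambda^{(1)}+\mathcal{O}(g^{2})$ into the eigenvalue equation, using the expansion $\mathcal{L}_{g}=\mathcal{L}_{0}-g\,\mathcal{W}^{\rm diff}+\mathcal{O}(g^{2})$ recorded before Definition~\ref{Def1}. The order-$g^{0}$ terms reproduce $\mathcal{L}_{0}v_{0}=0$, and collecting the coefficient of $g^{1}$ gives the first-order equation
\begin{equation}
\mathcal{L}_{0}v^{(1)}=\mathcal{W}^{\rm diff}v_{0}+\lambda^{(1)}v_{0}.\label{eq:firstorder}
\end{equation}

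I would then pin down $\lambda^{(1)}$ by projecting~(\ref{eq:firstorder}) onto $v_{0}$. Self-adjointness of $\mathcal{L}_{0}$ together with $\mathcal{L}_{0}v_{0}=0$ gives $v_{0}^{\intercal}\mathcal{L}_{0}v^{(1)}=(\mathcal{L}_{0}v_{0})^{\intercal}v^{(1)}=0$, so $0=v_{0}^{\intercal}\mathcal{W}^{\rm diff}v_{0}+\lambda^{(1)}$. Since $v_{0}=\mathbf{1}/\sqrt{N}$ we have $\mathcal{W}^{\rm diff}v_{0}=w^{\rm diff}/\sqrt{N}$, and the identity $v_{0}^{\intercal}w^{\rm diff}=0$ noted after Proposition~\ref{PropConvexOptim} yields $\lambda^{(1)}=0$. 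Equation~(\ref{eq:firstorder}) then collapses to $\mathcal{L}_{0}v^{(1)}=\mathcal{W}^{\rm diff}v_{0}$. The same orthogonality shows the right-hand side lies in $\mathrm{range}(\mathcal{L}_{0})=(\ker\mathcal{L}_{0})^{\perp}=v_{0}^{\perp}$, so the equation is solvable and its general solution is $v^{(1)}=\mathcal{L}_{0}^{\dagger}\mathcal{W}^{\rm diff}v_{0}+c\,v_{0}$ for some $c\in\mathbb{R}$. Fixing the intermediate normalization $v_{0}^{\intercal}v^{(1)}=0$ (equivalently, normalizing $v^{(g)}_{0}$ to first order) and using $\mathcal{L}_{0}^{\dagger}v_{0}=0$, which gives $v_{0}^{\intercal}\mathcal{L}_{0}^{\dagger}\mathcal{W}^{\rm diff}v_{0}=0$, forces $c=0$ and hence $v^{(1)}=\mathcal{L}_{0}^{\dagger}\mathcal{W}^{\rm diff}v_{0}$, which is precisely~(\ref{eq:SeriesEigenVec}).

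The main obstacle is the first step: the whole argument rests on the bottom eigenvalue of $\mathcal{L}_{0}$ being simple, so that the eigenprojection is analytic and the naive order-by-order matching is valid. This is where connectedness of $\mathcal{G}$ is genuinely used; once analyticity of the simple eigenpair is secured, everything else is routine bookkeeping of powers of $g$. A secondary point to handle carefully is the normalization convention that removes the homogeneous component $c\,v_{0}$, since the eigenvector is only defined up to scale and a different convention would alter the $v_{0}$-component of $v^{(1)}$ without affecting the resulting ranking.
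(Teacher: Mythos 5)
Your proposal is correct and takes essentially the same route as the paper's proof: expand $\mathcal{L}_{g}$, $v^{(g)}_{0}$ and $\lambda^{(g)}_{0}$ in powers of $g$, match the order-$g$ terms, use $v_{0}^{\intercal}\mathcal{W}^{\rm diff}v_{0}=0$ to conclude $\lambda^{(1)}=0$, and remove the homogeneous $v_{0}$-component via the orthogonality normalization $v_{0}^{\intercal}v^{(1)}=0$, yielding $v^{(1)}=\mathcal{L}_{0}^{\dagger}\mathcal{W}^{\rm diff}v_{0}$. The one genuine improvement is that you justify the existence of the power series at the outset (simplicity of the zero eigenvalue of $\mathcal{L}_{0}$ on a connected graph plus Rellich's theorem for self-adjoint analytic families), a point the paper's proof tacitly assumes when it posits the expansions.
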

The dilation Laplacian~(\ref{DilationLaplacian}) is one possible choice of deformed Laplacian with the desirable property of positive definiteness and symmetry.
Clearly, it is possible to define another symmetric positive semi-definite Laplacian which includes only $g^2$ terms and no higher powers, while its least eigenvector also satisfies~(\ref{eq:SeriesEigenVec}).
\begin{mydef}[Infinitesimal dilation Laplacian]
Let $0<g<1$.  The infinitesimal dilation Laplacian is defined by
\begin{equation}
\big[{L}^{\rm inf}_{g}v\big]_{i}
 \triangleq\sum_{j\in V}w_{ij}\Big\{\big(1+ga_{ij}/2\big)^2v_i -\big(1-(ga_{ij}/2)^2\big) v_j\Big\},\label{eq:InfLap}
\end{equation}
for all $v\in \mathbb{R}^N$ and $i\in V$.
\end{mydef}
Notwithstanding, the dilation Laplacian has an interesting additional property whenever $a_{ij} = h_i-h_j$, i.e, the data provide us with a globally consistent ranking given by the score function $h\in \mathbb{R}^N$, which is associated to a discrete potential $-h$. Indeed, when the ranking can be exactly found because $a = -{\rm d}h$ with ${\rm d}$ defined in~(\ref{eq:discretegradient}), the output of Hodgerank is the solution of a linear system yielding the vector $h$, while the least eigenvector of the dilation Laplacian provides us with $\exp(g h)$, where the exponential is taken element-wise. This fact is summarized in Proposition~\ref{Prop2}.
\begin{myprop}[Existence of a zero eigenvalue]\label{Prop2}
The dilation Laplacian (\ref{DilationLaplacian}) has an eigenvector with zero eigenvalue if and only if $a = -{\rm d}h$, i.e., if there exists $h\in\mathbb{R}^{N}$ such that  \mbox{$a_{ij} = h_i-h_j$} for all $[i,j]\in E$. Then, this eigenvector is given by
$v^{(g)}_{0} =  c \times \exp(g h)$, where $c\in \mathbb{R}$.
\end{myprop}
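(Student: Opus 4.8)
\section*{Proof proposal}

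The plan is to reduce both implications to a single algebraic identity that expresses the Dirichlet form of $\mathcal{L}_{g}$ as a sum of squares over the edges. First I would compute $v^\intercal\mathcal{L}_{g}v$ directly from Definition~\ref{Def1}, obtaining $\sum_{i,j}w_{ij}\big(e^{g a_{ji}}v_i^2 - v_iv_j\big)$. Pairing the contribution of the ordered edge $[i,j]$ with that of $[j,i]$ and using the skew-symmetry $a_{ij}=-a_{ji}$ together with $w_{ij}=w_{ji}$, the cross terms combine into a perfect square, giving
\begin{equation}
v^\intercal\mathcal{L}_{g}v = \sum_{\{i,j\}\in E_u} w_{ij}\Big(e^{g a_{ji}/2}\,v_i - e^{-g a_{ji}/2}\,v_j\Big)^2 .\label{eq:SOSplan}
\end{equation}
Recognizing the identity $\big(e^{\alpha/2}x-e^{-\alpha/2}y\big)^2 = e^{\alpha}x^2 - 2xy + e^{-\alpha}y^2$ is the one non-mechanical step; it also re-proves that $\mathcal{L}_{g}$ is positive semi-definite, which I would invoke below.

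For the \emph{if} direction, assume $a=-{\rm d}h$, so that $a_{ji}=h_j-h_i$ for every edge, and set $v_i = c\,e^{g h_i}$. Substituting into~(\ref{DilationLaplacian}), each summand becomes $w_{ij}\big(e^{g(h_j-h_i)}c\,e^{gh_i}-c\,e^{gh_j}\big)=w_{ij}\big(c\,e^{gh_j}-c\,e^{gh_j}\big)=0$, hence $\mathcal{L}_{g}v=0$ and $v=c\exp(gh)$ is an eigenvector of eigenvalue zero; equivalently, each bracket in~(\ref{eq:SOSplan}) vanishes term by term.

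For the \emph{only if} direction, suppose $\mathcal{L}_{g}v=0$ with $v\neq 0$. Since $\mathcal{L}_{g}$ is symmetric and positive semi-definite, $\mathcal{L}_{g}v=0$ is equivalent to $v^\intercal\mathcal{L}_{g}v=0$; by~(\ref{eq:SOSplan}) and $w_{ij}>0$ on edges, every bracket must vanish, yielding $e^{g a_{ji}/2}v_i = e^{-g a_{ji}/2}v_j$, i.e.\ $v_j = e^{g a_{ji}}v_i$ for each $\{i,j\}\in E_u$. Because $e^{g a_{ji}}>0$ and $\mathcal{G}$ is connected, propagating this relation along a spanning tree shows that either $v\equiv 0$ (excluded) or all entries $v_i$ are nonzero with a common sign. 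I would then define $h_i = g^{-1}\log|v_i|$, so that the edge relation reads $g a_{ji} = \log|v_j|-\log|v_i| = g(h_j-h_i)$, hence $a_{ij}=h_i-h_j$, which is exactly $a=-{\rm d}h$, and $v=c\exp(gh)$ with $c=\pm 1$ the common sign (the overall scale being free because the kernel is one-dimensional on a connected graph).

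The step I expect to be the main obstacle is the passage from the vanishing of the Dirichlet form to a globally well-defined potential: connectivity is used both to rule out zero entries and to ensure the locally defined logarithms assemble consistently. In the present setting this consistency is automatic, since $h_i=g^{-1}\log|v_i|$ is a single-valued function of the (already single-valued) eigenvector, so no separate curl-free condition on $a$ around cycles needs to be verified; the only care required is the sign/nonvanishing argument above.
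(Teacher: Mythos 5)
Your proposal is correct and follows essentially the same route as the paper's proof: the same sum-of-squares identity for the Dirichlet form, the same direct substitution for the ``if'' direction, and the same vanishing-of-each-term, sign-propagation, and logarithm argument ($h_i = g^{-1}\log|v_i|$) for the ``only if'' direction. If anything, you are slightly more explicit than the paper in justifying that $\mathcal{L}_g v = 0$ is equivalent to $v^\intercal\mathcal{L}_g v = 0$ via positive semi-definiteness and in spelling out the connectivity/spanning-tree propagation, both of which the paper uses implicitly.
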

Let us discuss the connection between the dilation Laplacian and Hodgerank when the edge flow $a$ has inconsistencies.
The decomposition $a_{ij} = h_i-h_j+\epsilon_{ij}$ can be chosen to be the Hodge decomposition as given by the minimal norm solution of Hodgerank, where $h$ is the solution to~(\ref{HodegerankSolution}) and $\epsilon_{ij} =-\epsilon_{ji}$ is an inconsistent edge flow. Then, Lemma~\ref{LemmaInconsistent} states that the dilation Laplacian associated with $a_{ij}$ is related to the dilation Laplacian associated with the inconsistent part of $a$ and with rescaled weights $w_{ij}e^{g(h_i+h_j)}$.
\begin{mylemma} \label{LemmaInconsistent} Let $a\in \Omega_E$ be an edge flow satisfying the Hodge decomposition $a_{ij} = h_i-h_j+\epsilon_{ij}$  with $\epsilon_{ij} =-\epsilon_{ji}$ for all  $[i,j]\in E$. Then, we have
$
\diag(e^{gh})\mathcal{L}_g(a,W)\diag(e^{gh}) = \mathcal{L}_g\big(\epsilon,\diag(e^{gh})W\diag(e^{gh})\big),
$
with $g\in \mathbb{R}$.
\end{mylemma}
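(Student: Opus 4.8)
The plan is to prove the identity by a direct entrywise computation, reading the matrix form of the dilation Laplacian straight off Definition~\ref{Def1}. From~(\ref{DilationLaplacian}), the matrix $\mathcal{L}_g(a,W)$ has off-diagonal entries $[\mathcal{L}_g(a,W)]_{ij} = -w_{ij}$ for $i\neq j$, and diagonal entries $[\mathcal{L}_g(a,W)]_{ii} = \sum_{k\in V} w_{ik}\,e^{g a_{ki}}$. Since $\diag(e^{gh})$ is diagonal, conjugating by it simply multiplies the $(i,j)$ entry by $e^{g h_i}e^{g h_j}$, so the whole argument reduces to comparing the off-diagonal scalars and the diagonal scalars on each side. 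I would write $\tilde W \triangleq \diag(e^{gh})\,W\,\diag(e^{gh})$, i.e. $\tilde w_{ij}=e^{g(h_i+h_j)}w_{ij}$, and match the two matrices block by block.

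For the off-diagonal entries the computation is immediate and does not involve the decomposition of $a$ at all: for $i\neq j$ one finds
\[
\big[\diag(e^{gh})\,\mathcal{L}_g(a,W)\,\diag(e^{gh})\big]_{ij} = e^{g h_i}(-w_{ij})\,e^{g h_j} = -\,e^{g h_i} w_{ij}\, e^{g h_j} = -\tilde w_{ij},
\]
which is exactly the off-diagonal entry of $\mathcal{L}_g(\epsilon,\tilde W)$, since by Definition~\ref{Def1} the off-diagonal part of \emph{any} dilation Laplacian $\mathcal{L}_g(\,\cdot\,,\tilde W)$ equals $-\tilde W$ regardless of the edge flow.

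The Hodge decomposition enters only through the diagonal. Here I would substitute $a_{ki} = h_k - h_i + \epsilon_{ki}$ into the diagonal entry and absorb the two factors $e^{g h_i}$ coming from the conjugation:
\[
\big[\diag(e^{gh})\,\mathcal{L}_g(a,W)\,\diag(e^{gh})\big]_{ii} = e^{2 g h_i}\sum_{k\in V} w_{ik}\, e^{g a_{ki}} = \sum_{k\in V} e^{g h_i} w_{ik}\, e^{g h_k}\, e^{g \epsilon_{ki}},
\]
where the key cancellation is $e^{2 g h_i}\,e^{g(h_k - h_i)} = e^{g h_i} e^{g h_k}$. The right-hand side is precisely $\sum_{k} \tilde w_{ik}\, e^{g\epsilon_{ki}}$, the diagonal entry of $\mathcal{L}_g(\epsilon,\tilde W)$. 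Matching the diagonal and off-diagonal parts then yields the claimed equality.

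There is essentially no genuine obstacle: the statement is a congruence (gauge) transformation of the Laplacian by the diagonal positive matrix $\diag(e^{gh})$, and the only care needed is the bookkeeping of indices together with a check that both sides are bona fide dilation Laplacians. The latter holds because $\epsilon_{ij} = -\epsilon_{ji}$ makes $\epsilon$ a valid skew-symmetric edge flow, and $\tilde w_{ij} = e^{g(h_i+h_j)} w_{ij}$ inherits symmetry and positivity from $W$, so that $\mathcal{L}_g(\epsilon,\tilde W)$ is well defined for every $g\in\mathbb{R}$. If one prefers a coordinate-free phrasing, the same fact follows by observing that replacing $a$ by $a+{\rm d}h$ in $\mathcal{D}_g$ rescales each exponential weight $e^{g a_{ki}}$ by $e^{g(h_k-h_i)}$, which is exactly the combined effect of conjugating $\mathcal{L}_g$ by $\diag(e^{gh})$ and reweighting $W$ to $\tilde W$.
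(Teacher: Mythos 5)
Your proof is correct and follows essentially the same route as the paper: both are one-step direct computations in which conjugation by $\diag(e^{gh})$ cancels the gradient part of the Hodge decomposition via the identity $e^{g a_{ji}}e^{g h_i}=e^{g h_j}e^{g\epsilon_{ji}}$, leaving exactly the reweighted matrix $\diag(e^{gh})\,W\,\diag(e^{gh})$. The only difference is presentational: the paper verifies the identity by evaluating $[\diag(e^{gh})\mathcal{L}_g(a,W)\diag(e^{gh})v]_i$ on a test vector in a single line, whereas you compare matrix entries, separating the off-diagonal part (which never sees the decomposition) from the diagonal part.
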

The renormalization of the weight matrix $\diag(e^{gh})W\diag(e^{gh})$ gives an increased importance to the edges $\{i,j\}$ involving comparisons between alternatives $i,j\in V$ with a large score $h$ as given by Hodgerank.

We provide now a generalization of the dilation and infinitesimal dilation Laplacians which has the feature that its least eigenvector has no sign change. Indeed, given the set of strictly positive reals $\{s_{ij} | \ s_{ij}>0 \ {\rm for \ all}\ [i,j]\in E\}$, we define the symmetric positive semi-definite matrix $L_{s}$ given by
\begin{equation}
\big[L_{s}v\big]_i\triangleq\sum_{j\in V}w_{ij}s_{ji}\Big(s_{ji}v_i-s_{ij}v_j\Big),\label{DeformedLaplacian}
\end{equation}
for all $i\in V$. Clearly, the matrix of~(\ref{DeformedLaplacian}) is in general not only the sum of the combinatorial Laplacian with a diagonal matrix. In order to adapt this definition in the context of ranking, the key idea of the construction is to choose a positive real number for each edge such that $s_{ij}>s_{ji}$ if the edge is $i\to j$.
In particular, if we choose $s_{ij}= 1+ga_{ij}/2$ and $0<g<1$, the deformed Laplacian~(\ref{DeformedLaplacian}) is the infinitesimal dilation Laplacian~(\ref{eq:InfLap}). Furthermore, in order to have the property $s_{ij} = 1/s_{ji}$, we can also consider the choice $s_{ij}= \exp( ga_{ij}/2)$, yielding the dilation Laplacian~(\ref{DilationLaplacian}). 

In the literature, several methods propose to obtain a ranking score from the top eigenvector of a matrix with non-negative entries. By Frobenius-Perron theorem, the entries of this top eigenvector have all the same sign. In view of this remark, the construction (\ref{DeformedLaplacian}) also has the attractive feature that the entries of its least eigenvector always have the same sign as stated in Theorem \ref{Thm1}. The proof of this property does not depend on Frobenius-Perron theorem.
\begin{mythm}[Positivity of the least eigenvector]\label{Thm1}
Let $\{s_{ij} | \ s_{ij}>0 \ {\rm for \ all}\ [i,j]\in E\}$ be a set of positive reals indexed by the oriented edges $[i,j]\in E$ in the connected graph $\mathcal{G}$ and let $L_s$ be the deformed Laplacian given in~(\ref{DeformedLaplacian}). Then, there exists a unique normalized eigenvector $v_0\in \mathbb{R}^N$ of $L_s$  associated to its smallest eigenvalue such that $[v_0]_i>0$ for all $i\in V$.
\end{mythm}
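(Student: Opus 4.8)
The plan is to argue variationally through the Rayleigh quotient of the smallest eigenvalue, extracting positivity from a sum-of-squares representation of $L_s$ together with the connectedness of $\mathcal{G}$; in line with the remark preceding the statement, Frobenius--Perron theory is not needed. First I would record the quadratic form of $L_s$: a direct rearrangement using $w_{ij}=w_{ji}$ gives
\[
v^\intercal L_s v = \frac12\sum_{i,j\in V} w_{ij}\bigl(s_{ji}v_i - s_{ij}v_j\bigr)^2 .
\]
Hence $L_s$ is symmetric positive semi-definite, its smallest eigenvalue $\lambda_0\ge 0$ equals $\min_v v^\intercal L_s v/\|v\|^2$, and the minimum is attained exactly on the associated eigenspace.

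The crucial step is then the following. Let $v_0$ be any eigenvector for $\lambda_0$ and let $|v_0|$ be its entrywise absolute value. Since all $s_{ij}>0$ we have $s_{ji}|v_i|=|s_{ji}v_i|$, so the elementary inequality $(|a|-|b|)^2\le(a-b)^2$ yields, edge by edge, $\bigl(s_{ji}|v_i|-s_{ij}|v_j|\bigr)^2\le\bigl(s_{ji}v_i-s_{ij}v_j\bigr)^2$, and therefore $|v_0|^\intercal L_s|v_0|\le v_0^\intercal L_s v_0$. As $\||v_0|\|=\|v_0\|$, the vector $u=|v_0|$ also minimizes the Rayleigh quotient and is thus a \emph{nonnegative} eigenvector of $\lambda_0$.

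I would promote nonnegativity to strict positivity using connectedness. At a node $k$ with $u_k=0$, the eigenvalue equation $[L_s u]_k=\lambda_0 u_k$ reduces to $\sum_j w_{kj}s_{jk}s_{kj}u_j=0$; since every summand is nonnegative, each neighbour $j$ of $k$ (i.e.\ $w_{kj}>0$) must satisfy $u_j=0$. Propagating this along edges, connectedness forces $u\equiv 0$, contradicting $u\neq 0$. Hence $[u]_i>0$ for all $i$, which produces the claimed strictly positive eigenvector.

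Finally comes uniqueness, i.e.\ simplicity of $\lambda_0$, which is where the most care is needed and is the main obstacle. Examining the equality case of the edge-by-edge inequality shows it is saturated only when $v_iv_j\ge 0$ on every edge; combined with connectedness and the strict positivity just established, this forces every eigenvector of $\lambda_0$ to be nowhere zero and of one sign. If the eigenspace had dimension $\ge 2$, I would take two independent eigenvectors $u,u'$, normalize $u'$ to be entrywise positive, fix a node $k$, and form $w=u-(u_k/u'_k)\,u'$; then $w$ is a nonzero eigenvector for $\lambda_0$ with $w_k=0$, contradicting that no such eigenvector can vanish at a node. Thus $\lambda_0$ is simple, and its unique normalized positive representative is the desired $v_0$. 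Controlling this equality/vanishing argument — rather than the positivity itself — is the delicate point.
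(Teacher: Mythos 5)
Your proof is correct, and while it opens exactly as the paper does, its second half takes a genuinely different route. The shared part is the sum-of-squares form of the quadratic form and the absolute-value trick $(s_{ji}|v_i|-s_{ij}|v_j|)^2\le(s_{ji}v_i-s_{ij}v_j)^2$, which shows that $|v_0|$ is also a minimizer of the Rayleigh quotient and hence a nonnegative eigenvector; the paper's first step is the same argument phrased as a contradiction. You diverge at strict positivity: the paper stays purely variational, replacing a zero entry $v^\star_{i_\star}=0$ by a carefully chosen small value $0<u\le (s_{ji_\star}/s_{i_\star j})v^\star_j$ and showing the Rayleigh quotient improves (with a separate construction for several vanishing nodes), whereas you invoke the eigenvalue equation at a zero node, $\sum_j w_{kj}s_{jk}s_{kj}u_j=0$, and propagate vanishing along edges by connectedness --- a discrete strong maximum principle in the Perron style. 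Your version is shorter, makes strictness automatic, and avoids the delicate choice of the perturbation value (the paper's displayed conclusion $F(v^\star)\ge F(\tilde v^\star)$ is, as written, only a non-strict inequality, so your route also sidesteps that rough edge); the cost is that you must use that minimizers of the Rayleigh quotient genuinely satisfy the eigen-equation, which you state. For simplicity of $\lambda_0$ the paper picks a second eigenvector orthogonal to $v^\star$ and derives a contradiction with its one-signedness, while you first upgrade the equality case ($v_iv_j\ge 0$ on every edge, hence every $\lambda_0$-eigenvector is nowhere zero and of one sign) and then run the standard linear-combination argument with $w=u-(u_k/u'_k)u'$ vanishing at a node; both are valid, yours requiring the extra equality analysis but yielding the stronger intermediate statement that \emph{every} eigenvector of $\lambda_0$, not just the minimizing one you constructed, is signed. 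Both arguments respect the paper's stated intention of avoiding the Frobenius--Perron theorem.
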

Theorem~\ref{Thm1} is useful since it guarantees that the components of the least eigenvector yield an unambiguous ranking score. In particular, if the deformed Laplacian $L_s$ is taken to be the dilation Laplacian~(\ref{DilationLaplacian}), it can be decomposed as $\mathcal{L}_g = \mathcal{L}_0 + V^{(g)}$, with $V^{(g)}$ diagonal, yielding a discrete analogue of the Schr\"odinger equation (see~\ref{AppWitten}).  Then, Theorem~\ref{Thm1} can be interpreted as a discrete version of the classical result stating that the lowest energy state of the Schr\"odinger equation in a confining potential has no sign changes.
Trivially, the ranking scores $v\in \mathbb{R}^{N}$ and $\alpha v \in \mathbb{R}^{N}$ with $\alpha>0$ yield the same ranking.

\subsection{Dilation ranking}
To sum up, the ranking of the compared objects are computed thanks to Algorithm~\ref{Alg1}. From the practical perspective, a natural question concerns the choice of the value for the deformation parameter $g$. Consider a simple example, illustrated in Figure \ref{Fig:Line}, of a directed line graph. In this example, the dilation ranking gives 
\begin{equation}
[v^{(g)}_0]_1 = e^{g}[v^{(g)}_0]_2 = \dots = e^{g(N-1)}[v^{(g)}_0]_N.\label{eq:Line}
\end{equation}
The scores of the first and the $k$-th objects are proportional, the proportionality factor being $\exp ( g(k-1))$.
Therefore, we require $g = \log(c)/(k-1)$, where $c>0$ and $1<k\leq N$ are constants defined by the user. Intuitively, $c$ is the factor between the first and the $k$-th objects in the ranking in the case of the line graph of Figure \ref{Fig:Line}. The values that we recommend are summarized in Table~\ref{Tab:Choiceg} and in particular, we consider $g>0$.

\begin{algorithm}[h]
\caption{Dilation Ranking \label{Alg1}}
\begin{algorithmic}[1]
\Require Pairwise comparisons $E$ and $\{a_{ij}\}_{[i,j]\in E}$; a real $g$ (see Table~\ref{Tab:Choiceg});
\State compute the dilation Laplacian matrix $\mathcal{L}_g$ given in (\ref{DilationLaplacian});
\State compute the least eigenvector $v_0^{(g)}$ of $\mathcal{L}_g$, such that $\|v_0^{(g)}\|_2 = 1$;
\Return the score vector ${\rm sign}(g)|v_0^{(g)}|$ and the ranking obtained by sorting $\{{\rm sign}(g)|[v_0^{(g)}]_i|\}_{i\in V}$.
\end{algorithmic}
\end{algorithm}
\begin{table}[h]
\centering
\begin{tabular}{c c}
Parameter & Feature\\
\hline
\rule{0pt}{3ex}$g =0$ & No ranking, constant score: $v_0 = 1/\sqrt{N}$.\\
$g \ll 1$ & Hodgerank: $v_0^{(g)} = v_0 +g \mathcal{L}_{0}^{\dagger}\mathcal{W}^{\rm diff}v_0 + \mathcal{O}(g^2)$.\\
$g\approx 0.1/(N-1)$ & Recommended value.\\
$g> 0.1/(N-1)$ & Fewer upsets in the top-$k$ objects.\\
\hline
\end{tabular}
\caption{Summary of the empirical effects of the choice of $g$ on the ranking given by $\mathcal{L}_g$ for ordinal comparisons.\label{Tab:Choiceg}}
\end{table} 
\begin{figure}[h]
\centering
\begin{tikzpicture}[auto, thick]
\foreach \place/\name in {{(1,1)/a}, {(2,1)/b}, {(3,1)/c}, {(5,1)/d}}
    \node[Node2] (\name) at \place {};
\foreach \source/\dest in {a/b,b/c}
    \path[->,very thick] (\source) edge (\dest);
\draw (1,1) node{$1$};
\draw (2,1) node{$2$};
\draw (3,1) node{$3$};
\draw (5,1) node{$N$};
\draw[->,very thick] (3+0.3,1) --(3+0.6,1);
\draw[->,very thick] (5-0.6,1) --(5-0.3,1);
\draw (4,1) node{$\cdots$};
\end{tikzpicture}
\caption{Line graph with $N$ vertices.\label{Fig:Line}}
\end{figure}
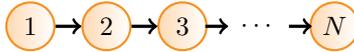
The dilation Laplacian can be built from pairwise comparisons, both in the case of ordinal and cardinal comparisons: 
\begin{itemize}
\item Ordinal comparisons: for each known comparison $[i,j]\in E$, define $w_{ij} = 1$. Furthermore, if $i\succ j$, define  $a_{ij} = 1 = -a_{ji}$ and $a_{ij} = 0$ if $i\sim j$. If there is no comparison available between $i$ and $j\in V$, define $w_{ij} =0$. Choose the Laplacian $\mathcal{L}_g$ with a deformation parameter $g$ choosing according to the criterion given in Table~\ref{Tab:Choiceg}.
\item Cardinal comparisons: for each known cardinal measurement corresponding to $[i,j]\in E$, define $w_{ij} = 1$. If the data provides us with an ``exchange rate'' $s_{ij}>0$ between $i$ and $j\in V$, define simply $\exp (a_{ij}) = s_{ij}$ (where $g = 1$) and use the dilation Laplacian $\mathcal{L}_g$. 
\end{itemize}

\subsection{Remarks about the information contained in the pairwise measurements}
In general, given the graphical nature of the available data, it can be expected that the quality of the retrieved ranking depends on the connectivity of the graph, i.e., how well the alternatives are compared with each other. Indeed, if the graph $\mathcal{G}$ has a bottleneck between two more connected regions, the retrieval problem will be made more difficult due to the lack of comparisons between the well-connected regions. In~\cite{JMLR:osting}(Proposition 4.1), it is shown that the variance of the least squares estimator for the ranking, called here Hodgerank, is proportional to the Moore-Penrose pseudo-inverse of the Laplacian.  This result holds for the data model given by
$
a_{ij} = h_j-h_i+X_{ij},
$
for each comparison $\{i,j\}\in E_u$ where $X_{ij}$ is a random variable such that $\mathbb{E}(X_{ij}) =0$ and $\Var(X_{ij}) = \sigma^2/w_{ij}$. More precisely, if we denote the least-squares estimator by $\hat{f}_H$, then the covariance matrix of this random vector is
$\Var(\hat{f}_H) = \sigma^2 \mathcal{L}_0^\dagger
$. Hence, the variance of the estimator can be made smaller by choosing well-connected measurement graphs so that their spectral gap, given by the second least eigenvalue of $\mathcal{L}_0$, is large. It is also proved in~\cite{JMLR:osting,Osting2016540,Howard} that the Fisher information matrix is proportional to the combinatorial Laplacian $\mathcal{L}_0$. Furthermore, the pseudo-inverse of the combinatorial Laplacian also has a central role in the context of the synchronization of rotations where it gives the Cram\'er-Rao lower bound on the variance of unbiased estimators~\cite{BoumalSyncRot}.

\section{Constraints on the squared and absolute errors\label{sec:AbsErrors}}
In this section, we consider only the case of the dilation Laplacian corresponding to the choice $s_{ij}= \exp (ga_{ij}/2)$ in (\ref{DeformedLaplacian}) and discuss the following aspects:
What is the connection between the eigenvector problem and the dilation group $\mathbb{R}_\star^+ = \{s\in \mathbb{R}|s>0\}$?
The eigenvector problem minimizes a sum of squared residuals. What can we tell about the sum of absolute residuals?

A possible interpretation of the ranking problem involves the group $\Gamma = (\mathbb{R}_\star^+,\times)$ of strictly positive reals for the multiplication, acting on the Hilbert space $\Gamma = \mathbb{R}$ with the scalar multiplication, i.e. for $\gamma = s \in \mathbb{R}_\star^+$ and $v = x\in \mathbb{R}$, we have the group action $\gamma\cdot v = sx$. Naturally, the group element mapped to an edge is merely $\gamma_{ij} = s_{ij}\in \mathbb{R}_\star^+$ with the property $s_{ji} = s_{ij}^{-1}$.
We choose $v_+ \in \mathbb{R}$ with $v_+>0$, so that the orbit of this element is $\Gamma\cdot v_+ = \mathbb{R}_\star^+$. Since $\mathbb{R}_\star^+$ is a non-compact group, we have to complement the problem with an additional constraint. The group synchronization problem is rephrased as the least-squares minimization problem
\begin{equation}
\min_{v\in (\mathbb{R}_\star^+)^N}\frac{1}{2}\sum_{i,j=1}^{N}w_{ij}s_{ji}\Big( v_i  -s_{ij}v_j\Big)^{2}, \ {\rm s.t.}\ \sum_{i=1}^{N} v_i^{2} = 1,\label{ObjectiveSynchroS}
\end{equation} 
where the objective function is a sum of squared residuals involving exactly one term for each edge $\{i,j\}$, since we have
$
s_{ji}\big(v_i  - s_{ij}v_j\big)^{2} = s_{ij}\big(v_j  - s_{ji}v_i\big)^{2}.
$
Recall that we can find a skew symmetric matrix $a$ such that $a_{ij}=-a_{ji}$, for all $[i,j]\in E$ such that $s_{ij} = \exp (g a_{ij})$.
Naturally, the quadratic form in the objective~(\ref{ObjectiveSynchroS}) is associated to the dilation Laplacian,
motivating the connection with the dilation group $\mathbb{R}_\star^+$.
Classically, the lowest eigenvector is obtained as the solution of the minimization of
\begin{equation}
v^\intercal\mathcal{L}_g v = \frac{1}{2}\sum_{i,j=1}^{N}w_{ij}s_{ji}\Big( v_i  -s_{ij}v_j\Big)^{2},\label{eq:ObjReOrg}
\end{equation} 
over the vectors $v\in \mathbb{R}^{N}$ subject to the constraint $v^\intercal v = 1$.
\subsection{Frustration and group potentials}
The least-squares ranking problem~(\ref{Hodegerank}), for additive pairwise comparisons, involves an objective function of $f\in\mathbb{R}^{N}$ which is invariant under a constant shift: $f_i\mapsto f_i+ c$ for all $i\in V$. For the case of multiplicative pairwise comparisons, it is necessary to define an objective function of the score vector $v\in (\mathbb{R}_\star^+)^{N}$ with an invariance with respect to a constant scaling $v\mapsto \alpha v$, with $\alpha>0$.
Inspired by \cite{Bandeira_OdCheeger}, the vector $v\in (\mathbb{R}_\star^+)^{N}$ is called a group potential and we introduce its frustration related to the corresponding Rayleigh quotient, which is interpreted as a normalized sum of squared residuals.
\begin{mydef}[$\ell^2$-frustration]
The $\ell^2$-frustration of a $\mathbb{R}_\star^+$-potential $v\in (\mathbb{R}_\star^+)^{N}$ is 
\begin{equation}
\eta^{[2]}_{a,W}(v) \triangleq\frac{1}{2} \frac{\sum_{i,j=1}^{N}w_{ij}s_{ji}\big(v_i  - s_{ij}v_j\big)^{2}}{ \sum_{i=1}^{N} v_i^{2}}.\label{eq:frustrationpotential}
\end{equation}
Where there is no ambiguity, we will simply write $\eta^{[2]}(v) = \eta^{[2]}_{a,W}(v)$.
\end{mydef}
Because of the connection with the dilation Laplacian, Proposition~\ref{Prop2} can be simply rephrased as a property of the frustration. Indeed, if the data provides us with $a_{ij} = h_i-h_j$ or $s_{ij}  = v_i/v_j$, then the frustration vanishes and conversely. Since, for a given $g>0$, we can write $s_{ij} = \exp (g a_{ij})$ for all oriented edge $[i,j]\in E$ and $v_i = \exp(g h_i)$ for all $i\in V$, the property $s_{ij}  = v_i/v_j$ for all $[i,j]\in E$ is equivalent to $a_{ij} = h_i-h_j$ for all $[i,j]\in E$. 
\begin{mylemma}\label{Lem:frustration}
The $\ell^2$-frustration vanishes for some $v\in (\mathbb{R}_\star^+)^N$, i.e., $\eta^{[2]}_{a,W}(v) = 0$ if and only if $s_{ij}v_j  = v_i$, with $s_{ij} = 1/s_{ji}$, for all oriented edge $[i,j]\in E$.
\end{mylemma}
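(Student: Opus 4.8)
The plan is to argue directly from the definition of the $\ell^2$-frustration, exploiting that every summand in the numerator is a product of non-negative factors. Since $v\in(\mathbb{R}_\star^+)^N$ means $v_i>0$ for all $i\in V$, the denominator $\sum_{i=1}^N v_i^2$ is strictly positive, so $\eta^{[2]}_{a,W}(v)=0$ holds precisely when the numerator
\[
\sum_{i,j=1}^N w_{ij}\,s_{ji}\,\big(v_i - s_{ij}v_j\big)^2
\]
vanishes. This reduces the biconditional to a statement about a single non-negative sum.

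First I would observe that each term in this sum is non-negative: by construction $w_{ij}\geq 0$ (with $w_{ij}=0$ exactly when $\{i,j\}\notin E_u$), the factor $s_{ji}>0$ since the $s$'s are strictly positive reals, and $(v_i-s_{ij}v_j)^2\geq 0$. A finite sum of non-negative terms is zero if and only if every term is zero. Hence the frustration vanishes if and only if $w_{ij}s_{ji}(v_i-s_{ij}v_j)^2=0$ for every pair $(i,j)$.

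Next I would separate the inactive pairs from the genuine edges. For a pair with $w_{ij}=0$ the corresponding term is automatically zero and imposes no constraint, whereas for $[i,j]\in E$ we have $w_{ij}s_{ji}>0$, so that term vanishes if and only if $v_i=s_{ij}v_j$. Collecting these equivalences over all oriented edges gives exactly the stated condition $s_{ij}v_j=v_i$ for all $[i,j]\in E$; conversely, if $v_i=s_{ij}v_j$ holds on every edge then every summand vanishes and therefore $\eta^{[2]}_{a,W}(v)=0$. For consistency I would also note that the relation $s_{ij}=1/s_{ji}$ renders the condition orientation-independent, since $v_i=s_{ij}v_j$ is equivalent to $v_j=s_{ji}v_i$, matching the symmetry $s_{ji}(v_i-s_{ij}v_j)^2=s_{ij}(v_j-s_{ji}v_i)^2$ already recorded in the text; this lemma is thus the frustration-language restatement of the zero-eigenvalue criterion of Proposition~\ref{Prop2}.

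There is essentially no hard step in this argument: the claim is a direct consequence of the positivity of the weights together with the squared-residual structure of the numerator. The only point requiring a line of care is the clean separation of the vanishing edge-terms, which carry no information, from the active ones, which force the multiplicative consistency $v_i=s_{ij}v_j$; everything else is the elementary fact that a sum of non-negative reals vanishes only when each term does.
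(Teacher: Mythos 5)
Your proof is correct, but it takes a more direct route than the paper. The paper disposes of this lemma in one line, citing Proposition~\ref{Prop2}: since $\eta^{[2]}_{a,W}(v)$ is the Rayleigh quotient of the positive semi-definite matrix $\mathcal{L}_g$, its vanishing at a positive $v$ means $v$ is a zero eigenvector, and Proposition~\ref{Prop2} then yields $a=-{\rm d}h$ with $v=c\,e^{gh}$, which under the parametrization $s_{ij}=e^{ga_{ij}}$, $v_i=e^{gh_i}$ is exactly the edge-wise condition $s_{ij}v_j=v_i$ (the paragraph preceding the lemma spells out this translation). You instead argue straight from the definition~(\ref{eq:frustrationpotential}): the denominator is strictly positive since $v\in(\mathbb{R}_\star^+)^N$, the numerator is a finite sum of non-negative terms $w_{ij}s_{ji}(v_i-s_{ij}v_j)^2$, and such a sum vanishes iff every term does, with $w_{ij}=0$ terms imposing nothing and genuine edges forcing $v_i=s_{ij}v_j$. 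This is in fact the same term-by-term vanishing argument the paper deploys \emph{inside} the $(\Rightarrow)$ direction of the proof of Proposition~\ref{Prop2}, so the mathematical core coincides; what your version buys is self-containedness and a bit of extra generality. It needs neither the exponential parametrization $s_{ij}=e^{ga_{ij}}$ nor connectivity of $\mathcal{G}$ (which Proposition~\ref{Prop2} uses to establish sign constancy of the zero eigenvector, a step rendered unnecessary here because $v$ is assumed positive), and it works verbatim for any positive reals $s_{ij}$ with $s_{ij}=1/s_{ji}$. The paper's route buys brevity by reusing an already-proven result, at the cost of invoking the eigenvector machinery for what is, as you observe, an elementary pointwise statement. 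Your closing remark on orientation-independence via $s_{ji}(v_i-s_{ij}v_j)^2=s_{ij}(v_j-s_{ji}v_i)^2$ is a sound consistency check and matches the identity already recorded after~(\ref{ObjectiveSynchroS}).
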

Hence, for a given set of pairwise measurements and its corresponding graph, we define its $\ell^2$-frustration constant as the minimal frustration achieved over all possible choices of group potentials.
\begin{mydef}[$\ell^2$-frustration constant]
The $\ell^2$-frustration constant is defined by
\begin{equation}
\eta^{[2]}_{a,W} = \min\big\{\eta^{[2]}_{a,W}(v)\ | \ v\in (\mathbb{R}_\star^+)^N\big\},\label{eq:frustration}
\end{equation}
where the frustration of a $\mathbb{R}_\star^+$-potential is given in (\ref{eq:frustrationpotential}).
\end{mydef} 
We can now show that computing the $\ell^2$-frustration constant is equivalent to the computation of the least eigenvalue of the dilation Laplacian. Indeed, a spectral relaxation of this minimization problem is simply obtained by
\begin{equation}
\mathop{\rm minimize}_{\{v\in \mathbb{R}^{N}|v\neq 0\}}\eta^{[2]}_{a,W}(v),\label{eq:MinimizeFrustration}
\end{equation} 
where the feasible set has been enlarged to $\mathbb{R}^{N}_\star$.
Looking for a solution of this problem is equivalent to the computation of the lowest eigenvector of the  dilation Laplacian (\ref{DilationLaplacian}) because the objective
\[\eta^{[2]}_{a,W}(v) = \frac{v^\intercal \mathcal{L}_g(a,W) v}{v^\intercal v},
\]
is the Rayleigh quotient.
A straightforward consequence is that the frustration constant is given by the smallest eigenvalue of the dilation Laplacian. Thus, Proposition \ref{CheegerEqual} states that the smallest eigenvalue of the dilation Laplacian determines the frustration of the $\mathbb{R}^+_\star$-potential.
\begin{myprop}\label{CheegerEqual}
Let $\lambda_0^{(g)}\geq 0$ be the smallest eigenvalue of $\mathcal{L}_g$. We have
$\lambda_0^{(g)} = \eta^{[2]}_{a,W}$.
\end{myprop}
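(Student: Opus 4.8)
The plan is to recognize both quantities as solutions of the \emph{same} variational problem and to reduce everything to a comparison of feasible sets. The key ingredients are the Rayleigh-quotient identity already recorded above,
\[
\eta^{[2]}_{a,W}(v) = \frac{v^\intercal \mathcal{L}_g v}{v^\intercal v},
\]
together with the standard Courant--Fischer characterization of the smallest eigenvalue,
\[
\lambda_0^{(g)} = \min_{\{v\in\mathbb{R}^N\,|\,v\neq 0\}} \frac{v^\intercal \mathcal{L}_g v}{v^\intercal v}.
\]
The proposition then amounts to the statement that minimizing this common Rayleigh quotient over the punctured space $\{v\neq 0\}$ and over the strictly positive orthant $(\mathbb{R}_\star^+)^N$ produces the same value.

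First I would establish the easy inequality $\lambda_0^{(g)} \leq \eta^{[2]}_{a,W}$. Because $(\mathbb{R}_\star^+)^N$ is contained in $\{v\in\mathbb{R}^N\,|\,v\neq 0\}$ and the objective is literally the same Rayleigh quotient in both problems, minimizing over the smaller set can only enlarge the infimum. Hence
\[
\eta^{[2]}_{a,W} = \min_{v\in(\mathbb{R}_\star^+)^N}\eta^{[2]}_{a,W}(v) \;\geq\; \min_{v\neq 0}\frac{v^\intercal\mathcal{L}_g v}{v^\intercal v} = \lambda_0^{(g)}.
\]

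For the reverse inequality $\eta^{[2]}_{a,W} \leq \lambda_0^{(g)}$ I would invoke Theorem~\ref{Thm1}. That result guarantees a normalized eigenvector $v_0$ of $\mathcal{L}_g$ associated with the smallest eigenvalue $\lambda_0^{(g)}$ whose entries are all strictly positive, $[v_0]_i>0$. In particular $v_0\in(\mathbb{R}_\star^+)^N$, so it is an admissible group potential for the frustration problem, and its frustration coincides with its Rayleigh quotient, giving $\eta^{[2]}_{a,W}(v_0) = v_0^\intercal\mathcal{L}_g v_0 / v_0^\intercal v_0 = \lambda_0^{(g)}$. Minimizing over all positive potentials then yields $\eta^{[2]}_{a,W} \leq \eta^{[2]}_{a,W}(v_0) = \lambda_0^{(g)}$, and combining the two inequalities closes the argument.

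The only delicate point, and the reason the statement is not wholly immediate, is that the frustration constant is defined as a minimum over the \emph{open} positive orthant $(\mathbb{R}_\star^+)^N$: this is a non-compact set on which a continuous function need not attain its infimum, and over which the infimum of a Rayleigh quotient could a priori be strictly larger than over the whole space. Theorem~\ref{Thm1} is precisely what dissolves this difficulty, since it certifies that a global minimizer of the Rayleigh quotient can be chosen inside $(\mathbb{R}_\star^+)^N$; the restricted minimum is therefore both attained and equal to the unrestricted one. Thus the entire substance of the proposition rests on the positivity of the least eigenvector, with the remainder following from the scale-invariance of the objective and the usual variational description of $\lambda_0^{(g)}$.
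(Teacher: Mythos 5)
Your proof is correct and follows essentially the same route as the paper: the inclusion $(\mathbb{R}_\star^+)^N\subset\{v\in\mathbb{R}^N\,|\,v\neq 0\}$ gives $\lambda_0^{(g)}\leq\eta^{[2]}_{a,W}$, and Theorem~\ref{Thm1} supplies a strictly positive least eigenvector that is feasible for the frustration problem, yielding the reverse inequality. Your closing observation that the positivity of the least eigenvector is what guarantees the minimum over the open orthant is attained (and equals the unrestricted one) is a welcome clarification, but it does not change the substance of the argument.
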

Again, ranking from pairwise comparisons is conveniently formulated in this context. Indeed, if $i$ and $j$ are compared, and for instance $i$ is better than $j$, then $s_{ij}>1$ (otherwise $s_{ij}<1$) and we wish the positive score $v_i$ to be larger than the positive score $v_j$. Hence, we ask $v_i \sim s_{ij} v_j $, which can be formulated more precisely as a synchronization of dilations.

The least-squares methods involve a sum of squared residuals which are known to amplify errors compared to a sum of absolute values of the residuals.
Actually, the sum of the absolute errors can also be considered in order to quantitatively evaluate the frustration of a group potential. Therefore, we now introduce  the $\ell^1$-frustration which gives less importance to larger residuals with respect to the $\ell^2$-frustration.
\begin{mydef}[$\ell^1$-frustration]
The $\ell^1$-frustration of a $\mathbb{R}_\star^+$-potential $v\in (\mathbb{R}_\star^+)^N$ is 
\begin{equation}
\eta^{[1]}_{a,W}(v) \triangleq\frac{1}{2} \frac{\sum_{i,j=1}^{N}w_{ij}|s^{1/2}_{ij}v_j  - s^{-1/2}_{ij}v_i|}{ \sum_{i=1}^{N} v_i}.\label{eq:frustrationl1potential}
\end{equation}
The  $\ell^1$-frustration constant is defined by
$\eta_{a,W}^{[1]} = \min\big\{\eta^{[1]}_{a,W}(v)\ |\ v\in (\mathbb{R}_\star^+)^N\big\}$.
\end{mydef}

By computing the lowest eigenvector of the dilation Laplacian (corresponding to a $\ell^2$-frustration constant), we obtain an approximate $\mathbb{R}_\star^+$-potential providing an upper bound on the $\ell^1$-frustration constant, interpreted as the performance of the spectral method.
We also provide a lower bound for the $\ell^1$-frustration constant. By analogy with~\cite{Bandeira_OdCheeger}, we call these bounds Cheeger type inequalities because of the resemblance with the Cheeger inequality relating the normalized cut of a graph and the second least eigenvalue of the combinatorial Laplacian.
\begin{mythm}[Cheeger type inequality]\label{Thm2}
Let $a\in \Omega_E$ and $s_{ij} = \exp(ga_{ij}/2)$ for all $[i,j]\in E$ with $g>0$. We have\begin{equation}
\frac{1}{N(1+\sqrt{s_{\rm max}})}\frac{\lambda^{(g)}_0}{{\rm vol}(\mathcal{G})}\leq \frac{\eta_{a,W}^{[1]}}{{\rm vol}(\mathcal{G})}\leq \sqrt{\frac{\lambda^{(g)}_0}{2{\rm vol}(\mathcal{G})}},\label{eq:frustrationCheeger}
\end{equation}
where $1\leq s_{\rm max} = \max\{s_{ij}\ | \ \{i,j\}\in E_u \}$.
\end{mythm}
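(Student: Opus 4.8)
The plan is to prove the two inequalities separately, treating the right-hand bound as the ``easy'' direction (a performance guarantee for the spectral method) and the left-hand bound as the ``hard'' direction. Throughout I would write $r_{ij} = s_{ij}^{-1/2}v_i - s_{ij}^{1/2}v_j$ for the per-edge residual, so that the numerators of the $\ell^2$- and $\ell^1$-frustrations are $\tfrac12\sum_{i,j}w_{ij}r_{ij}^2$ and $\tfrac12\sum_{i,j}w_{ij}|r_{ij}|$ respectively; note $s_{ji}(v_i - s_{ij}v_j)^2 = r_{ij}^2$ using $s_{ij}s_{ji}=1$, so both frustrations are built from the \emph{same} $r_{ij}$. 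I would also use the Rayleigh-quotient identity preceding Proposition~\ref{CheegerEqual}, in the form $\sum_{i,j}w_{ij}r_{ij}^2 = 2\,v^\intercal\mathcal{L}_g v$, together with $\sum_{i,j}w_{ij} = \vol(\mathcal{G})$ and the scale invariance $\eta^{[p]}_{a,W}(\alpha v)=\eta^{[p]}_{a,W}(v)$ for $\alpha>0$.

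For the upper bound I would feed the least eigenvector itself into the $\ell^1$-frustration. By Theorem~\ref{Thm1} the normalized eigenvector $v_0$ associated to $\lambda_0^{(g)}$ has strictly positive entries, hence is an admissible $\mathbb{R}_\star^+$-potential, so $\eta^{[1]}_{a,W}\le\eta^{[1]}_{a,W}(v_0)$. Cauchy--Schwarz on the edge sum gives $\sum_{i,j}w_{ij}|r_{ij}| \le \big(\sum_{i,j}w_{ij}\big)^{1/2}\big(\sum_{i,j}w_{ij}r_{ij}^2\big)^{1/2} = \big(2\lambda_0^{(g)}\vol(\mathcal{G})\big)^{1/2}$, where I used $v_0^\intercal\mathcal{L}_g v_0 = \lambda_0^{(g)}$. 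Since $\|v_0\|_1\ge\|v_0\|_2 = 1$, the denominator $\sum_i[v_0]_i$ is at least one, so $\eta^{[1]}_{a,W}(v_0)\le\tfrac12\sqrt{2\lambda_0^{(g)}\vol(\mathcal{G})}=\sqrt{\lambda_0^{(g)}\vol(\mathcal{G})/2}$, and dividing by $\vol(\mathcal{G})$ yields exactly $\eta^{[1]}_{a,W}/\vol(\mathcal{G}) \le \sqrt{\lambda_0^{(g)}/(2\vol(\mathcal{G}))}$.

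For the lower bound I would compare the two frustrations pointwise. For any $v\in(\mathbb{R}_\star^+)^N$, the quotient $\eta^{[2]}_{a,W}(v)/\eta^{[1]}_{a,W}(v)$ factors as $\big(\sum_{i,j}w_{ij}r_{ij}^2 / \sum_{i,j}w_{ij}|r_{ij}|\big)\cdot\big(\sum_i v_i / \sum_i v_i^2\big)$. The first factor is a weighted average of the $|r_{ij}|$, hence at most $\max_{i,j}|r_{ij}|$; since $s_{ij}s_{ji}=1$ forces one of $s_{ij}^{\pm1/2}$ to be $\le1$ and the other $\le\sqrt{s_{\rm max}}$, the triangle inequality gives $|r_{ij}|\le(1+\sqrt{s_{\rm max}})\max_k v_k$. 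For the second factor, $\sum_i v_i^2\ge(\max_k v_k)^2$ and $\sum_i v_i\le N\max_k v_k$, so $\sum_i v_i/\sum_i v_i^2\le N/\max_k v_k$. Multiplying, the $\max_k v_k$ cancels and I obtain $\eta^{[2]}_{a,W}(v)\le N(1+\sqrt{s_{\rm max}})\,\eta^{[1]}_{a,W}(v)$ for every positive $v$. By Proposition~\ref{CheegerEqual} and the spectral relaxation, $\lambda_0^{(g)}=\eta^{[2]}_{a,W}=\min_{v\neq0}\eta^{[2]}_{a,W}(v)\le\eta^{[2]}_{a,W}(v)$, so $\lambda_0^{(g)}\le N(1+\sqrt{s_{\rm max}})\eta^{[1]}_{a,W}(v)$ holds for all $v$; taking the infimum of the right-hand side over $v$ gives $\lambda_0^{(g)}\le N(1+\sqrt{s_{\rm max}})\eta^{[1]}_{a,W}$, which after division by $N(1+\sqrt{s_{\rm max}})\vol(\mathcal{G})$ is the left inequality.

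The routine half is the upper bound, which is simply Cauchy--Schwarz combined with the positivity guaranteed by Theorem~\ref{Thm1}. The delicate part is the lower bound, and in particular recovering the exact constant: one must bound $\max_{i,j}|r_{ij}|$ and the norm ratio $\sum_i v_i/\sum_i v_i^2$ by the \emph{same} quantity $\max_k v_k$ so that it cancels, and the factor $N$ must come precisely from the crude pair $\sum_i v_i^2\ge(\max_k v_k)^2$ and $\sum_i v_i\le N\max_k v_k$, while the factor $1+\sqrt{s_{\rm max}}$ hinges on the observation that for each edge one of $s_{ij}^{\pm1/2}$ is at most $1$. I would expect the only real subtlety to be that the $\ell^1$-minimizer need not be attained on the open positive orthant; this is sidestepped by keeping $\lambda_0^{(g)}\le N(1+\sqrt{s_{\rm max}})\eta^{[1]}_{a,W}(v)$ valid for every feasible $v$ and only then passing to the infimum.
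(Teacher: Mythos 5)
Your proposal is correct and takes essentially the same route as the paper: the upper bound is the paper's Cauchy--Schwarz argument applied to the positive least eigenvector guaranteed by Theorem~\ref{Thm1} (your normalization trick $\sum_i [v_0]_i \ge \|v_0\|_2 = 1$ standing in for the paper's explicit use of $\|v\|_2\le\|v\|_1$), and the lower bound is the paper's pointwise comparison $\eta^{[2]}_{a,W}(v)\le N(1+\sqrt{s_{\rm max}})\,\eta^{[1]}_{a,W}(v)$ for every positive $v$ followed by passing to the minimum. The only cosmetic difference is that you cancel through $\max_k v_k$ where the paper bounds entries by $\|v\|_2$ and uses $\frac{1}{N}\sum_i v_i\le \|v\|_2$, yielding identical constants either way.
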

The proof exploits the techniques used in \cite{Bandeira_OdCheeger}.
Being linear in the weights $w_{ij}$, the frustration can potentially rise if the weights increase or if the number of comparison increases.
Since the dilation Laplacian or the frustrations  are not normalized, the inequality (\ref{eq:frustrationCheeger}) involves ratios of the eigenvalues and frustration constants with the volume of the graph, $\vol(\mathcal{G}) =\sum_{i,j\in V}w_{ij}$. This fact ensures that the inequality keeps the same form if the all weights are rescaled by a positive factor, i.e., $w'_{ij} = \rho w_{ij}$ with $\rho>0$ for all  $\{i,j\}\in E_u$.
\subsection{Hodge decomposition, inconsistencies and emphasis of the top part of the ranking}

Taking into account the obstruction to a perfect ranking, an upper bound on $\lambda_0^{(g)}$ can be obtained starting from the solution of Hodgerank. Let $h$ be the solution of~(\ref{HodegerankSolution}) and let $a_{ij} = h_i-h_j + \epsilon_{ij}$ be the corresponding Hodge decomposition, where we remind that the skew-symmetric edgeflow  $\epsilon_{ij}$ represents the obstruction to obtain a perfect ranking, that is, it quantifies the inconsistency of the data. Then, the sum of squared residuals for the Hodgerank solution
\begin{equation}
\epsilon_H^2 = \frac{1}{2}\sum_{i,j\in V}w_{ij}\epsilon_{ij}^2,\label{eq:resHodge}
\end{equation}
gives the minimum error achieved by the least squares method. It is important to emphasize here that the sum of residuals in (\ref{eq:resHodge}) involves a priori the same weight for each error term. 
Based on the Hodgerank score $h$, the natural $\mathbb{R}_\star^+$-potential $e^{gh}$ has a frustration which is not too large provided that the error done by Hodgerank~(\ref{eq:resHodge}) is  small. This remark is stated more precisely by Proposition~\ref{UpperboundLeastEigHodge} which extends the result of Lemma~\ref{Lem:frustration}.
\begin{myprop}\label{UpperboundLeastEigHodge}
Let $0<g<1$ and $h$ be the solution of Hodgerank such that $a_{ij} = h_i-h_j + \epsilon_{ij}$. Then, we have
\[
\lambda_0^{(g)}\leq \frac{1}{2} \sum_{i,j\in V}w_{ij}\frac{e^{g(h_i+h_j)}}{\sum_{k\in V} e^{2gh_k}}4\sinh^{2}(\frac{g\epsilon_{ij}}{2})\leq g^2\epsilon_H^2/2+ \mathcal{O}(g^4),
\]
where the sum of squared residuals $\epsilon_H^2$ is independent of $g$ and given in~(\ref{eq:resHodge}).
\end{myprop}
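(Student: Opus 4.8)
The plan is to prove the two inequalities separately: the left one is a Rayleigh-quotient bound obtained by evaluating the $\ell^2$-frustration on a single, carefully chosen trial potential, and the right one is an elementary constant bound followed by a Taylor expansion in $g$.

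For the left inequality, recall from Proposition~\ref{CheegerEqual} that $\lambda_0^{(g)} = \eta^{[2]}_{a,W} = \min\{\eta^{[2]}_{a,W}(v)\mid v\in(\mathbb{R}_\star^+)^N\}$, so it suffices to exhibit one admissible potential whose frustration equals the middle expression. The natural candidate is the Hodgerank potential $v = e^{gh}$, with $v_i = e^{gh_i}>0$. First I would compute the numerator $v^\intercal\mathcal{L}_g(a,W)v$. Writing $v = \diag(e^{gh})\mathbf{1}$ with $\mathbf{1}$ the all-ones vector and invoking Lemma~\ref{LemmaInconsistent}, this becomes $v^\intercal\mathcal{L}_g(a,W)v = \mathbf{1}^\intercal\mathcal{L}_g(\epsilon,\widetilde W)\mathbf{1}$, where $\widetilde W = \diag(e^{gh})W\diag(e^{gh})$ has entries $\widetilde w_{ij} = e^{g(h_i+h_j)}w_{ij}$. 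Evaluating the quadratic form~(\ref{eq:ObjReOrg}) of this dilation Laplacian at the all-ones vector, each edge contributes $e^{g\epsilon_{ij}} + e^{-g\epsilon_{ij}} - 2 = 4\sinh^2(g\epsilon_{ij}/2)$, so the numerator equals $\frac{1}{2}\sum_{i,j}e^{g(h_i+h_j)}w_{ij}\,4\sinh^2(g\epsilon_{ij}/2)$. Dividing by $v^\intercal v = \sum_k e^{2gh_k}$ reproduces the middle expression exactly, which is therefore an attained value of the frustration and hence an upper bound on $\lambda_0^{(g)}$.

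For the right inequality, the key observation is to bound the coefficient $\frac{e^{g(h_i+h_j)}}{\sum_k e^{2gh_k}}$ by the constant $1/2$, \emph{independently of} $g$. This follows by retaining only the two terms $i$ and $j$ in the denominator and applying the arithmetic--geometric mean inequality, $\sum_k e^{2gh_k}\ge e^{2gh_i}+e^{2gh_j}\ge 2e^{g(h_i+h_j)}$; the diagonal terms $i=j$ contribute nothing since $\epsilon_{ii}=0$, so retaining only two terms is legitimate. Substituting this bound collapses the middle expression to $\sum_{i,j}w_{ij}\sinh^2(g\epsilon_{ij}/2)$. Then I would Taylor expand $\sinh^2(g\epsilon_{ij}/2) = \frac{1}{4}g^2\epsilon_{ij}^2 + \mathcal{O}(g^4)$ and sum, using $\sum_{i,j}w_{ij}\epsilon_{ij}^2 = 2\epsilon_H^2$ from~(\ref{eq:resHodge}), to reach $\frac{g^2}{4}\cdot 2\epsilon_H^2 + \mathcal{O}(g^4) = g^2\epsilon_H^2/2 + \mathcal{O}(g^4)$. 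Since $\epsilon$, and hence $\epsilon_H^2$, are fixed by the Hodge decomposition and do not depend on $g$, the remainder is uniform over the finite edge set for $0<g<1$.

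The subtle point --- and the reason the error term is $\mathcal{O}(g^4)$ rather than $\mathcal{O}(g^3)$ --- is precisely the use of the crude, $g$-independent bound $\frac{e^{g(h_i+h_j)}}{\sum_k e^{2gh_k}}\le 1/2$ instead of Taylor expanding this coefficient. Expanding the coefficient would introduce an $\mathcal{O}(g)$ correction and therefore an odd $\mathcal{O}(g^3)$ term in the final estimate; bounding it by a constant confines all of the $g$-dependence to $\sinh^2(g\epsilon_{ij}/2)$, which is an even function of $g$ and whose expansion consequently contains only even powers. The only care needed is to confirm that the $i=j$ contributions vanish, so that the two-term lower bound on the denominator is valid, and that the Taylor remainder is uniform, both of which are immediate for a finite graph with a fixed inconsistent flow $\epsilon$.
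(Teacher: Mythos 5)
Your proposal is correct and follows essentially the same route as the paper's own proof, which likewise bounds $\lambda_0^{(g)}$ by the Rayleigh quotient (frustration) at the trial potential $e^{gh}$, applies the arithmetic--geometric mean bound $2e^{g(h_i+h_j)}\leq e^{2gh_i}+e^{2gh_j}\leq\sum_{k\in V}e^{2gh_k}$ to the coefficient, and then Taylor expands $\sinh^2(g\epsilon_{ij}/2)$. Your write-up merely fills in details the paper leaves implicit (the reduction via Lemma~\ref{LemmaInconsistent}, the vanishing of the diagonal terms, and the parity observation explaining the $\mathcal{O}(g^4)$ remainder), all of which are sound.
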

Indeed, the potential $e^{gh}$ gives a good approximation to the least eigenvector $v_0^{(g)}$ whenever $g$ and $\epsilon_H^2$ are small.
In contrast with Hodgerank where the loss function is simply quadratic, we observe from Proposition~\ref{UpperboundLeastEigHodge} that it is the function $\sinh^2(\cdot/2)$ which appears and  that the errors are reweighted by a factor $e^{g(h_i+h_j)}/\sum_{k\in V} e^{2gh_k}$,
which tends to penalize more the errors involving comparisons between $i$ and $j$ when both $h_i$ and $h_j$ are large.

\begin{mylemma}\label{LemmaMajorization}  Let $h\in \mathbb{R}^N$ be the solution of Hodgerank such that $a = -{\rm d}h + \epsilon$ and let $v,\bar{v}_g\in (\mathbb{R}_\star^+)^N$ such that $v=\diag(e^{gh})\bar{v}_g$, as well as the renormalized weight matrix
$\bar{W} = \diag(\frac{e^{gh}}{\|e^{gh}\|_2})W\diag(\frac{e^{gh}}{\|e^{gh}\|_2})$.
Then, the following inequality holds
\[
\frac{v^\intercal\mathcal{L}_g(a,W)v}{v^\intercal v}\geq \frac{\bar{v}^\intercal_g\mathcal{L}_g(\epsilon,\bar{W})\bar{v}_g}{\bar{v}_g^\intercal\bar{v}_g},\]
and, in particular, the frustration constants satisfy
$
\eta_{a,W}^{[2]}\geq \eta_{\epsilon,\bar{W}}^{[2]}.
$
\end{mylemma}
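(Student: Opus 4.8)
The plan is to collapse the whole statement to an elementary pointwise estimate by invoking the conjugation identity of Lemma~\ref{LemmaInconsistent}. Write $\Lambda = \diag(e^{gh})$, so that the hypothesis reads $v = \Lambda\bar{v}_g$ and the renormalized weight is $\bar{W} = \|e^{gh}\|_2^{-2}\,\Lambda W\Lambda$. First I would rewrite the numerator of the left-hand side: since
$v^\intercal\mathcal{L}_g(a,W)v = \bar{v}_g^\intercal\big(\Lambda\mathcal{L}_g(a,W)\Lambda\big)\bar{v}_g$, Lemma~\ref{LemmaInconsistent} gives $\Lambda\mathcal{L}_g(a,W)\Lambda = \mathcal{L}_g(\epsilon,\Lambda W\Lambda)$. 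Because the weights enter~(\ref{DilationLaplacian}) linearly, the map $W\mapsto\mathcal{L}_g(\epsilon,W)$ is homogeneous of degree one in the weight matrix, so scaling by the scalar $\|e^{gh}\|_2^2$ yields $\mathcal{L}_g(\epsilon,\Lambda W\Lambda) = \|e^{gh}\|_2^2\,\mathcal{L}_g(\epsilon,\bar{W})$. Hence the numerator equals $\|e^{gh}\|_2^2\,\bar{v}_g^\intercal\mathcal{L}_g(\epsilon,\bar{W})\bar{v}_g$, while the denominator is simply $v^\intercal v = \bar{v}_g^\intercal\Lambda^2\bar{v}_g$.

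With these two identities the claimed inequality becomes
\[
\frac{\|e^{gh}\|_2^2\,\bar{v}_g^\intercal\mathcal{L}_g(\epsilon,\bar{W})\bar{v}_g}{\bar{v}_g^\intercal\Lambda^2\bar{v}_g}\ \geq\ \frac{\bar{v}_g^\intercal\mathcal{L}_g(\epsilon,\bar{W})\bar{v}_g}{\bar{v}_g^\intercal\bar{v}_g}.
\]
Since $\mathcal{L}_g(\epsilon,\bar{W})$ is positive semi-definite, the common factor $\bar{v}_g^\intercal\mathcal{L}_g(\epsilon,\bar{W})\bar{v}_g\geq 0$ can be cancelled (the inequality being trivial when it vanishes), reducing the entire statement to the scalar comparison $\|e^{gh}\|_2^2\,\bar{v}_g^\intercal\bar{v}_g\geq\bar{v}_g^\intercal\Lambda^2\bar{v}_g$. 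This is immediate from the entrywise bound $\|e^{gh}\|_2^2 = \sum_k e^{2gh_k}\geq e^{2gh_i}$, valid for every $i$ because all summands are positive: multiplying by $[\bar{v}_g]_i^2\geq 0$ and summing over $i$ gives exactly the desired estimate.

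To pass from this pointwise Rayleigh-quotient inequality to the frustration constants, I would use that $\bar{v}_g\mapsto\Lambda\bar{v}_g$ is a bijection of $(\mathbb{R}_\star^+)^N$ onto itself. For every $\bar{v}_g\in(\mathbb{R}_\star^+)^N$ the quotient inequality reads $\eta^{[2]}_{\epsilon,\bar{W}}(\bar{v}_g)\leq\eta^{[2]}_{a,W}(\Lambda\bar{v}_g)$; taking the infimum over $\bar{v}_g$ on both sides and using the bijection on the right gives $\eta^{[2]}_{\epsilon,\bar{W}}\leq\inf_v\eta^{[2]}_{a,W}(v)=\eta^{[2]}_{a,W}$, which is the stated bound on the constants.

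I do not expect any genuine obstacle: the only point requiring attention is the normalization of $\bar{W}$. Dividing by $\|e^{gh}\|_2^2$ is precisely what produces the scalar factor $\|e^{gh}\|_2^2$ in the numerator, and this factor is exactly what is needed to dominate the weighting $\Lambda^2 = \diag(e^{2gh_i})$ that survives in the denominator. Everything else is bookkeeping resting on Lemma~\ref{LemmaInconsistent}, together with the linearity and positive semi-definiteness of the dilation Laplacian.
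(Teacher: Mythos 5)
Your proof is correct and essentially the paper's own argument: the conjugation identity you import from Lemma~\ref{LemmaInconsistent} is exactly the computation the paper redoes inline (rewriting the numerator as $\sum_{i,j}w_{ij}e^{g(h_i+h_j)}(e^{g\epsilon_{ji}/2}[\bar{v}_g]_i-e^{g\epsilon_{ij}/2}[\bar{v}_g]_j)^2$), and your scalar comparison $\|e^{gh}\|_2^2\,\bar{v}_g^\intercal\bar{v}_g\geq\bar{v}_g^\intercal\Lambda^2\bar{v}_g$ is precisely the paper's key inequality on the denominator. Your explicit handling of the degenerate case $\bar{v}_g^\intercal\mathcal{L}_g(\epsilon,\bar{W})\bar{v}_g=0$ and of the passage to frustration constants via the bijection $\bar{v}_g\mapsto\Lambda\bar{v}_g$ merely spells out steps the paper leaves implicit.
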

In fact, the vector $\bar{v}_g=\diag(e^{-gh})v>0$ is an approximation of the $\mathbb{R}_\star^+$-potential minimizing the frustration given by $\epsilon$ on the same graph with renormalized  weights \[\bar{w}_{ij} = w_{ij}\frac{e^{g(h_i+h_j)}}{\sum_{k\in V}e^{2gh_k}}.\]
The weights $\bar{w}_{ij}$ give more importance to the comparisons involving objects $i$ and $j$  when the sum of their ranking scores $h_i+h_j$ is large. To summarize, given the solution to Hodgerank $h$, it is possible to improve the score in the top part of the ranking by calculating the least eigenvector of $\mathcal{L}_g(\epsilon,\bar{W})$, which minimizes 
$
\eta_{\epsilon,\bar{W}}^{[2]}(v) = v^\intercal\mathcal{L}_g(\epsilon,\bar{W})v/v^\intercal v
$ over $v\in (\mathbb{R}_\star^+)^N$. This procedure would involve first solving the linear system of Hodgerank and then solving an eigenvalue problem. Rather than computing the Hodgerank score $h$, we propose to minimize a surrogate function which majorizes the objective as explained in Lemma~\ref{LemmaMajorization}, so that only one eigenvector computation is needed. Indeed, the least eigenvector $v_0^{(g)}$ of $\mathcal{L}_g(a,W)$ provides the approximation $\bar{v}_0^{(g)} =\diag(e^{-gh})v_0^{(g)}$ of the least eigenvector of $\mathcal{L}_g(\epsilon,\bar{W})$. More precisely, we have
\[
\lambda_0^{(g)}\Big(\mathcal{L}_g(a,W)\Big) = \min_{v\in(\mathbb{R}_\star^+)^N}  \frac{v^\intercal\mathcal{L}_g(a,W)v}{v^\intercal v}\geq \frac{\bar{v}_0^{(g)\intercal}\mathcal{L}_g(\epsilon,\bar{W})\bar{v}_0^{(g)}}{\bar{v}_0^{(g)\intercal}\bar{v}_0^{(g)}}=\eta_{\epsilon,\bar{W}}^{[2]}(\bar{v}_0^{(g)}) \geq \eta_{\epsilon,\bar{W}}^{[2]}\geq 0.
\]
This gives a reason why we expect that the ranking score  $v_0^{(g)}$ has a reduced number of upsets in its top part.

 \section{Normalization of the dilation Laplacian and random walk ranking\label{sec:Robust}}
 
It is well-known in the context of spectral clustering that normalized versions of the combinatorial Laplacian yield improved results. Hence, we mention in this section how a normalization of the  dilation Laplacian can be defined.
Consider cardinal comparisons $a_{ij} = h_i-h_j + \epsilon_{ij}$ and a $\mathbb{R}_\star^{+}$-potential $v_i = e^{h_i}$. We define the following objective function
\begin{equation}
\frac{v^\intercal\mathcal{L}_g v}{v^\intercal\mathcal{D}_{g} v} =\frac{1}{2} \frac{\sum_{i,j\in V}w_{ij}e^{h_i+h_j}\mbox{ sinh}^2(\frac{h_i-h_j-g a_{ij}}{2})}{\sum_{i\in V}[\mathcal{D}_{g}]_{ii} e^{2h_i}},\label{eq:ObjGen}
\end{equation}
with the deformed degree $[\mathcal{D}_{g}]_{ii}$ introduced in (\ref{DilationLaplacian}).
Indeed, the objective (\ref{eq:ObjGen}) is associated to the generalized eigenvalue problem $\mathcal{L}_g f_0 = \lambda_0 \mathcal{D}_{g}f_0$.
For all $g\geq 0$, the sum of the eigenvalues of the dilation Laplacian is given by ${\rm Tr}(\mathcal{L}_{g}) = {\rm Tr}(\mathcal{D}_{g})$.
\begin{mythm}\label{Thm3}
There exists a solution $v^\star\in \mathbb{R}^{N}$ to the generalized eigenvalue problem
\begin{equation}
\mathop{\rm minimize}_{\{v\in \mathbb{R}^{N}|v\neq 0\}}\frac{v^\intercal\mathcal{L}_g v}{v^\intercal \mathcal{D}_{g} v},\label{eq:GeneralizedEigen}
\end{equation}
such that $v^\star_{i}>0$ for all $i\in V$.
\end{mythm}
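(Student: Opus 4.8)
The plan is to reduce the generalized eigenvalue problem to an ordinary one and then transport the positivity statement of Theorem~\ref{Thm1} through this reduction. First I would note that, on a connected graph, the deformed degree $[\mathcal{D}_{g}]_{ii} = \sum_{j\in V}w_{ij}e^{ga_{ji}}$ is strictly positive for every $i$ (each node has at least one neighbour, $w_{ij}>0$ on edges, and the exponential is positive), so $\mathcal{D}_{g}$ is a positive-definite diagonal matrix with a diagonal positive-definite square root $\mathcal{D}_{g}^{1/2}$. Setting $u = \mathcal{D}_{g}^{1/2}v$, the generalized Rayleigh quotient in (\ref{eq:GeneralizedEigen}) becomes an ordinary one,
\[
\frac{v^\intercal\mathcal{L}_{g} v}{v^\intercal\mathcal{D}_{g} v} = \frac{u^\intercal \tilde{\mathcal{L}}_{g} u}{u^\intercal u},\qquad \tilde{\mathcal{L}}_{g} := \mathcal{D}_{g}^{-1/2}\mathcal{L}_{g}\mathcal{D}_{g}^{-1/2},
\]
so the problem is equivalent to minimizing the Rayleigh quotient of the symmetric matrix $\tilde{\mathcal{L}}_{g}$; its least eigenvector $u^\star$ then yields $v^\star = \mathcal{D}_{g}^{-1/2}u^\star$.

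Next I would check that $\tilde{\mathcal{L}}_{g}$ inherits exactly the structure needed for a positivity argument. As a congruence of $\mathcal{L}_{g}$ by the invertible diagonal matrix $\mathcal{D}_{g}^{-1/2}$, it is symmetric and positive semi-definite. Its diagonal entries are $[\mathcal{L}_{g}]_{ii}/[\mathcal{D}_{g}]_{ii}=1$, while its off-diagonal entries are $[\tilde{\mathcal{L}}_{g}]_{ij} = -w_{ij}/\sqrt{[\mathcal{D}_{g}]_{ii}[\mathcal{D}_{g}]_{jj}}\le 0$ and share the support pattern of the adjacency of $\mathcal{G}$. Hence $\tilde{\mathcal{L}}_{g}$ has non-positive off-diagonal entries and, since $\mathcal{G}$ is connected, is irreducible --- precisely the hypotheses under which the proof of Theorem~\ref{Thm1} operates.

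Then I would run that same argument on $\tilde{\mathcal{L}}_{g}$. Because the off-diagonal entries are non-positive, passing from any $u$ to its entrywise modulus $|u|$ does not increase the quadratic form (each cross term $-w_{ij}u_iu_j/\sqrt{[\mathcal{D}_{g}]_{ii}[\mathcal{D}_{g}]_{jj}}$ only decreases or stays equal when $u_iu_j$ is replaced by $|u_i||u_j|$) while leaving the denominator unchanged; therefore a minimizer can be chosen with $u^\star\ge 0$, and being a minimizer it is a genuine $\lambda_0$-eigenvector. Finally, evaluating the eigenvalue equation $\tilde{\mathcal{L}}_{g}u^\star=\lambda_0 u^\star$ at a hypothetical vanishing coordinate and invoking irreducibility, I would propagate the zero to all neighbours and, by connectedness, to the whole graph, contradicting $u^\star\neq 0$; this forces $u^\star_i>0$ for all $i$. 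Since $\mathcal{D}_{g}^{-1/2}$ is diagonal with strictly positive entries, $v^\star = \mathcal{D}_{g}^{-1/2}u^\star$ is strictly positive as well.

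The main obstacle I anticipate is the passage from non-negativity to strict positivity: the implication $u^\star\ge 0 \Rightarrow u^\star>0$ rests on the irreducibility of $\tilde{\mathcal{L}}_{g}$ inherited from the connectedness of $\mathcal{G}$, and one must also confirm that the non-negative vector produced by the modulus trick is itself an eigenvector for the smallest eigenvalue. Everything else --- the change of variables and the preservation of symmetry, positive semi-definiteness and the off-diagonal sign pattern under the diagonal congruence --- is routine, precisely because $\mathcal{D}_{g}^{-1/2}$ is diagonal with strictly positive entries.
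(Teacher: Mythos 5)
Your proof is correct, but it takes a genuinely different route from the paper's. The paper's own proof of Theorem~\ref{Thm3} is a direct adaptation of its variational proof of Theorem~\ref{Thm1}: it writes out the generalized Rayleigh quotient $R(v)$ explicitly and argues, as there, that flipping negative entries to positive values strictly decreases the numerator while the denominator does not decrease, so a minimizer with all positive entries exists --- no change of variables, no eigen-equation. You instead exploit the specific structure $\mathcal{L}_g = \mathcal{D}_g - W$: since $\mathcal{D}_g$ is diagonal and positive definite on a connected graph, the congruence $\tilde{\mathcal{L}}_g = \mathcal{D}_g^{-1/2}\mathcal{L}_g\mathcal{D}_g^{-1/2}$ reduces the problem to an ordinary symmetric eigenproblem whose off-diagonal entries are exactly $-w_{ij}/\sqrt{[\mathcal{D}_g]_{ii}[\mathcal{D}_g]_{jj}}\le 0$ (note this uses $s_{ij}s_{ji}=1$ for the dilation Laplacian, so the off-diagonal of $\mathcal{L}_g$ is just $-W$), and then you run a Perron--Frobenius-style argument: the modulus trick gives a non-negative minimizer, which as a Rayleigh minimizer is a genuine $\lambda_0$-eigenvector, and evaluating the eigen-equation at a hypothetical zero coordinate propagates the zero through the connected graph, forcing strict positivity. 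What your route buys is a fully rigorous strict-positivity step --- the paper's corresponding step (inherited from the proof of Theorem~\ref{Thm1}) is a more delicate variational perturbation of the zero coordinates, and the paper's proof of Theorem~\ref{Thm3} itself is only a two-line gesture toward that argument; your eigen-equation propagation is the cleaner and more standard mechanism. What the paper's route buys is generality and economy: the variational argument never needs the eigen-equation, never needs $\mathcal{D}_g^{-1/2}$, and applies verbatim to the general deformed Laplacian $L_s$ of~(\ref{DeformedLaplacian}), whereas your reduction leans on the dilation-specific identity that makes the off-diagonal block equal to $-W$ (though in fact the sign pattern $-w_{ij}s_{ij}s_{ji}\le 0$ would persist in the general case too, so your argument could be extended).
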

In fact, Theorem~\ref{Thm3} justifies a posteriori the definition of the objective function (\ref{eq:ObjGen}). The generalized eigenvalue problem becomes equivalent to the normalization of the dilation Laplacian. In this spirit, we propose a random walk defined by the transition probability to jump from $j$ to $i$
\begin{equation}
P_g(i|j) = \frac{w_{ji}e^{g a_{ji}/2}}{\sum_{k\in V}w_{jk}e^{g a_{jk}/2}},\label{eq:TransitionMatrix}
\end{equation}
which defines a stochastic matrix. In the limit $g\to 0$, the standard random walk on undirected graphs is recovered. Interestingly, in the case of a graph with only directed edges, i.e. $a_{ij}\in\{-1,1\}$, and assuming that any node has at least one outgoing edge, we find the Pagerank random walk without teleportation, i.e. $\lim_{g\to +\infty} P_g(i|j) = w_{ji}/w^{\rm out}_{j}$,  if  $j\to i$
and $P_g(i|j)=0$ otherwise, where we have defined $w^{\rm out}_{i} = \sum_{\{j|i\succ j\}}w_{ij}a_{ij}$. For a finite $g$ the transition matrix defined by (\ref{eq:TransitionMatrix}) is irreducible. Therefore, there exist a unique Frobenius-Perron eigenvector satisfying $\sum_{j\in V}\pi_j P_g(i|j) = \pi_i$.
We observe empirically in Figure~\ref{Fig:MUN} that the ranking score given by $\pi$ in Algorithm~\ref{AlgRW} is more robust to noise in the case of cardinal comparisons. Incidentally, the Perron vector $\pi$ becomes a centrality score for large values of $g$.
On the contrary, if we define the positive vector 
\begin{equation}
\mu =\mathcal{D}_g^{-1}\pi,\label{eq:mu}
\end{equation}
we find that it satisfies the relations
$\sum_{j\in V}w_{ij}(e^{g a_{ji}/2}\mu_j  -e^{g a_{ij}/2}\mu_i)  = 0,$
for all $i\in V$.
Therefore, we expect the score $\mu$ to solve more accurately the ranking problem in the case of ordinal comparison with a low level of noise.
\begin{algorithm}[h]
\caption{Random walk based ranking }\label{AlgRW}
\begin{algorithmic}[1]
\Require Pairwise comparisons $E$ and $\{a_{ij}\}_{[i,j]\in E}$; a positive real $g$;
\State compute the transition matrix $P_g$ given in (\ref{eq:TransitionMatrix});
\State compute its top eigenvector $\pi$ and $\mu$ given in (\ref{eq:mu});
\Return the centrality score $|\pi_i|$  and the ranking score $|\mu_i|$ for all $i\in V$, as well as the rankings obtained by sorting the scores.
\end{algorithmic}
\end{algorithm}

\section{Deformed Laplacians \label{sec:DeformedLaplacians}}
The dilation Laplacians are in fact special cases of a general construction. Indeed, we can introduce a generalization of several existing Laplacians. Let $T_{[i,j]}\in \mathbb{C}^{n\times n}$ be a set of invertible complex matrix associated to the set of oriented edge $[i,j]\in E$. Notice that the positive integer $n>0$ is not the number of nodes in the graph. Each matrix $T_{[i,j]}$ is actually identified with a linear operator on the finite dimensional Hilbert space $\mathbb{C}^{n}$. We also define the set of functions $\Omega_V = \{f:V\to \mathbb{C}^n\}$, endowed with the canonical inner product $\langle f, f' \rangle_{V} = \sum_{i\in V}f^*_i f_i$, where $(\cdot)^*$ denotes the Hermitian conjugate of a complex matrix.
Similarly, we consider the $\mathbb{C}^{n}$-valued functions on the oriented edges which are skew-symmetric with respect to a change of the orientation, i.e., $X_{ji} = -X_{ij}$, for all edges $[i,j]$. The inner product on the space $\Omega_E = \{X:E\to \mathbb{C}^n | X(e) = -X(\bar{e})\}$ is defined by
\begin{equation}
\langle X, Y \rangle_{E} = \frac{1}{2}\sum_{e=[i,j]\in E} w(e)X^*(e) Y(e).\label{eq:EdgeInner}
\end{equation}
Notice the factor $1/2$ in (\ref{eq:EdgeInner}), which is present in order to avoid a double counting since $w_{s}(e)X^*(e) Y(e) = w_{s}(\bar{e})X^*(\bar{e}) Y(\bar{e})$ for all $e\in E$.
Based on this definition, we define a general deformed Laplacian.
\begin{mydef}[Deformed Laplacian]
For a given non-singular matrix $T_{[i,j]}\in \mathbb{C}^{n\times n}$ associated to each oriented edge $[i,j]$, the deformed Laplacian is defined by
\begin{equation}
\big[L_{T}v\big]_i\triangleq\sum_{j\in V}w_{ij}T^{*}_{[j,i]}\Big(T_{[j,i]}v_i-T_{[i,j]}v_j\Big),
\end{equation}
for all $i\in V$ and all functions $v\in \Omega_V$.
\end{mydef}
Noticeably, the combinatorial Laplacian $\mathcal{L}_{0}$ is obtained by choosing $n=1$ and  \mbox{$T_{[i,j]}=1$}.
\begin{mylemma}
The deformed Laplacian $L_{T}:\Omega_V\to\Omega_V$ is given by $L_{T} = D^*_T D_T $, where  $D_T: \Omega_V\to\Omega_E$ (deformed gradient) is given by 
\[\big[D_T v\big]_{ji}= T_{[j,i]}v_i-T_{[i,j]}v_j, {\rm~for~all~} [i,j]\in E {\rm~and~}v\in \Omega_V,\] and its adjoint $D^*_T: \Omega_E\to\Omega_V$ (deformed divergence) is
\[\big[D^*_T X\big]_i= \sum_{j\in V}w_{ij}T^{*}_{(j,i)}X_{ji}, {\rm~for~all~} i\in V {\rm~and~} X\in \Omega_E.\]
\end{mylemma}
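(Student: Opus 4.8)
The plan is to establish the factorization in two stages: first verify that the stated operator $D_T^*$ genuinely is the Hilbert-space adjoint of $D_T$ with respect to the inner products $\langle\cdot,\cdot\rangle_V$ on $\Omega_V$ and $\langle\cdot,\cdot\rangle_E$ on $\Omega_E$, and then observe that the composition $D_T^*D_T$ reproduces $L_T$ by direct inspection.

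Before anything else I would check that $D_T$ actually maps into $\Omega_E$, i.e.\ that $D_T v$ is skew-symmetric under orientation reversal. From the defining formula $[D_T v]_{ji} = T_{[j,i]}v_i - T_{[i,j]}v_j$, interchanging the roles of $i$ and $j$ gives $[D_T v]_{ij} = T_{[i,j]}v_j - T_{[j,i]}v_i = -[D_T v]_{ji}$, so $D_T v \in \Omega_E$; the same skew-symmetry is assumed for the argument $X$.

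The main step is the adjoint identity $\langle D_T v, X\rangle_E = \langle v, D_T^* X\rangle_V$. I would expand the left-hand side using the edge inner product~(\ref{eq:EdgeInner}), substitute $[D_T v]_{ij} = T_{[i,j]}v_j - T_{[j,i]}v_i$, take Hermitian conjugates, and regroup the resulting double sum by the node index $k$ at which $v$ is evaluated. The factor $1/2$ in~(\ref{eq:EdgeInner}) and the skew-symmetry $X_{kj} = -X_{jk}$ conspire: after renaming the dummy edge index in one of the two terms (using $w_{ij}=w_{ji}$ and that the sum runs over both orientations), the two contributions to the coefficient of a fixed $v_k^*$ coincide, so the $1/2$ cancels and one is left with $\sum_{j}w_{kj}T_{[j,k]}^* X_{jk}$, which is exactly $[D_T^* X]_k$. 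This bookkeeping---tracking the orientation reversal, the symmetry of $w_{ij}$, and the $1/2$ together with the sign from skew-symmetry---is the only delicate part; everything else is purely formal manipulation of finite sums over $\mathbb{C}^n$.

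Once the adjoint relation is verified, the factorization is immediate. By definition, $[D_T^* D_T v]_i = \sum_{j\in V}w_{ij}T_{[j,i]}^*[D_T v]_{ji} = \sum_{j\in V}w_{ij}T_{[j,i]}^*\big(T_{[j,i]}v_i - T_{[i,j]}v_j\big)$, which coincides termwise with the defining formula for $[L_T v]_i$. Hence $L_T = D_T^* D_T$, and as a bonus this representation makes transparent that $L_T$ is self-adjoint and positive semi-definite, since $\langle v, L_T v\rangle_V = \langle D_T v, D_T v\rangle_E \geq 0$.
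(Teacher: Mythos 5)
Your proof is correct, and it supplies precisely the routine verification that the paper omits: this lemma is stated without proof in Section~5 and does not appear among the proofs collected in the appendix, the factorization evidently being regarded as immediate from the definitions. Your handling of the one genuinely delicate point --- that the factor $1/2$ in the edge inner product~(\ref{eq:EdgeInner}) cancels against the double counting of orientations, using the skew-symmetry $X_{ij}=-X_{ji}$ together with $w_{ij}=w_{ji}$, leaving exactly $[D_T^*X]_i=\sum_{j\in V}w_{ij}T^*_{[j,i]}X_{ji}$ --- is exactly right, and your closing observation that the factorization yields $\langle v,L_Tv\rangle_V=\langle D_Tv,D_Tv\rangle_E\geq 0$ is what the paper's subsequent corollary records.
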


\begin{mycorol}
The deformed Laplacian (\ref{DeformedLaplacian}) is positive semi-definite and self-adjoint. In particular, we have the energy functional, for all $v \in \Omega_V$,

\[\langle v,L_{T}v\rangle_{V}=\frac{1}{2}\sum_{i,j\in V}w_{ij}\Big\|T_{[j,i]}v_i-T_{[i,j]}v_j\Big\|^2_{2},
\]
where $\|\cdot \|_2$ is the $2$-norm of vectors of $\mathbb{C}^n$.
\end{mycorol}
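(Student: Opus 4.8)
The plan is to deduce everything directly from the factorization $L_T = D_T^* D_T$ established in the preceding lemma, so that no fresh computation with the explicit matrix entries is required. First I would record that self-adjointness is immediate: for any $v, v' \in \Omega_V$,
\[
\langle v', L_T v\rangle_V = \langle v', D_T^* D_T v\rangle_V = \langle D_T v', D_T v\rangle_E = \langle D_T^* D_T v', v\rangle_V = \langle L_T v', v\rangle_V,
\]
where the middle equalities use only the defining adjunction $\langle D_T v', X\rangle_E = \langle v', D_T^* X\rangle_V$ together with the Hermitian symmetry of $\langle\cdot,\cdot\rangle_E$. Hence $L_T = L_T^*$. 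One should check in passing that the Hermitian conjugate $T^*_{[j,i]}$ appearing in the definition of $L_T$ is exactly the factor produced by the adjoint $D_T^*$ of the lemma, which holds by construction.

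Next, for positive semi-definiteness I would specialize to $v' = v$, giving
\[
\langle v, L_T v\rangle_V = \langle D_T v, D_T v\rangle_E \geq 0,
\]
since $\langle\cdot,\cdot\rangle_E$ is a genuine positive inner product on $\Omega_E$ by virtue of the strictly positive weights $w(e) > 0$. This already delivers both structural claims and exhibits the energy as a squared norm.

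The remaining task is to expand $\langle D_T v, D_T v\rangle_E$ into the stated edge sum. Here I would simply substitute the explicit gradient $[D_T v]_{ji} = T_{[j,i]} v_i - T_{[i,j]} v_j$ into the edge inner product~(\ref{eq:EdgeInner}). One first verifies that $D_T v$ indeed lies in $\Omega_E$, i.e.\ that it is skew-symmetric under orientation reversal, since $[D_T v]_{ij} = T_{[i,j]} v_j - T_{[j,i]} v_i = -[D_T v]_{ji}$. Then, rewriting the sum over oriented edges $e = [i,j]\in E$ as a sum over ordered pairs $i,j\in V$ weighted by $w_{ij}$, the factor $1/2$ in~(\ref{eq:EdgeInner}) compensates the double counting of each undirected edge and yields
\[
\langle v, L_T v\rangle_V = \frac{1}{2}\sum_{i,j\in V} w_{ij}\,\big\|T_{[j,i]} v_i - T_{[i,j]} v_j\big\|_2^2,
\]
as claimed.

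I do not expect any genuine obstacle, since the entire content is packaged in the lemma $L_T = D_T^* D_T$. The only point demanding care is the bookkeeping of the factor $1/2$ and the passage between the edge-set index $e\in E$ and the node-pair index $(i,j)\in V\times V$ in~(\ref{eq:EdgeInner}); everything else is a mechanical application of the adjunction and the positivity of the weights.
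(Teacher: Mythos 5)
Your proposal is correct and follows exactly the route the paper intends: the corollary is stated as an immediate consequence of the factorization $L_{T}=D_{T}^{*}D_{T}$ from the preceding lemma (the paper supplies no separate proof), and your argument --- self-adjointness and non-negativity via the adjunction $\langle D_{T}v',X\rangle_{E}=\langle v',D_{T}^{*}X\rangle_{V}$, then expansion of $\langle D_{T}v,D_{T}v\rangle_{E}$ with the factor $1/2$ of~(\ref{eq:EdgeInner}) absorbing the double count over the two orientations of each edge --- is precisely that derivation, including the worthwhile check that $D_{T}v$ is skew-symmetric and hence lies in $\Omega_{E}$.
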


\begin{table*}[b]\centering
\begin{tabularx}{\textwidth}{X c  c  c  c}
   Laplacian & Notation &  functions in $\Omega_V$   & edge data  & $T_{[i,j]}$ \\
   \hline
   combinatorial \cite{ChungBook} & $\mathcal{L}_{0}$ & $\mathbb{R}$-valued   & no &$1$ \\
   signed \cite{Kunegis}& $\mathcal{L}^{[\pi]}_{0}$  & $\mathbb{C}$-valued & $s_{ij} = \pm 1\in \mathbb{Z}_2$  &$\exp ( {\rm i}(1-s_{ij}) \pi /4)$ \\
   connection \cite{SingerWu}& $\nabla^2$ & $\mathbb{R}^d$-valued   & $O_{ij} =\exp (o_{ij})\in SO(d)$& $\exp (o_{ij}/2)$ \\
   magnetic \cite{Shubin}& $\mathcal{L}^{[\theta]}_{0}$ & $\mathbb{C}$-valued  & $a_{ij} = -a_{ji}$ &$\exp ( {\rm i} \theta a_{ij}/2 )$ \\
   \emph{dilation}  & $\mathcal{L}_{g}$&$\mathbb{R}$-valued &$a_{ij} = -a_{ji}$ &$\exp (ga_{ij}/2)$ \\
   \emph{infinitesimal dilation}  & $\mathcal{L}^{\rm inf}_{g}$ & $\mathbb{R}$-valued &$a_{ij} = -a_{ji}$ &$1+ga_{ij}/2$ \\
    \emph{dilation magnetic}  & $\mathcal{L}^{[\theta]}_{g}$ & $\mathbb{C}$-valued &$a_{ij} = -a_{ji}$ &$\exp ((g+{\rm i} \theta)a_{ij}/2)$ \\  
      \hline
\end{tabularx}

\caption{The self-adjoint Laplacians as particular cases of the deformed Laplacian $L_g$ (\ref{DeformedLaplacian}).\label{Table1}}
\end{table*}
The definition (\ref{DeformedLaplacian}) is actually a deformed Laplacian generalizing the combinatorial Laplacian  \cite{ChungBook}, the magnetic Laplacian \cite{deVerdiere,Berkolaiko,Shubin}, the signed Laplacian \cite{Kunegis} and the Connection Laplacian \cite{SingerWu} for $SO(d)$. These special cases are summarized in Table \ref{Table1}, whereas additional details can be found in \ref{AppDefLap}.
To some extent, the vector bundle Laplacian of \cite{KenyonVectorBundle} is analogous to (\ref{DeformedLaplacian}) and coincides with the magnetic Laplacian in the case of unitary parallel transporter and functions valued in $\mathbb{C}$. Some formal similarities with the work of \mbox{E. Witten} \cite{WittenMorse,CyconBook} and \mbox{R. Forman} \cite{FormanWittenMorse} are also explained in \ref{AppDefLap}.

\section{Numerical simulations \label{sec:NumSim}}
In this section, the ranking methods are discussed and illustrated both on artificial and real data.
The ranking process always requires to sort the ranking scores. In the case of two objects  with an equal ranking score, the tie is broken randomly.
 
\subsection{Ordinal comparisons}
Different artificial examples involving ordinal comparisons are discussed in the presence of missing comparisons and corruptions.
For clarity, we do not include in this section the results obtained with the random walk method of Algorithm~\ref{AlgRW}, since the ranking obtained from the ranking score $\mu$ are very similar to the dilation Laplacian method. The centrality score $\pi$ gives better results in the presence of corrupted cardinal comparisons.

\subsubsection{Missing comparisons}
We compare the rankings obtained from artificial pairwise comparisons datasets by the dilation Laplacian, following the empirical methodology of \cite{FogelSpectralRanking}. The idea is to construct an ordered comparison matrix associated to a known ranking, with ordinal comparisons. Then, it is modified by removing edges uniformly at random with the constraint that the graph remains connected. 
The rankings obtained are compared  with the ``correct'' ranking using Kendall's $\tau$-distance, which counts the number of pairwise differences between the rankings, divided by the total number of pairwise comparisons. 
\begin{figure}[t]
\setlength\figureheight{5cm} 
\setlength\figurewidth{0.4\textwidth}
%
%
\definecolor{mycolor1}{rgb}{1.00000,0.00000,1.00000}%
\begin{tikzpicture}

\begin{axis}[%
width=\figurewidth,
height=\figureheight,
at={(0.769in,0.47in)},
scale only axis,
xmin=0,
xmax=0.75,
xlabel={\% missing},
ymin=0,
ymax=0.035,
ylabel={Kendall's $\tau$-distance},
axis background/.style={fill=white},
axis x line*=bottom,
axis y line*=left
]
\addplot [color=red,solid,forget plot]
  table[row sep=crcr]{%
0.01	0.000131313131313131\\
0.06	0.00194141414141414\\
0.11	0.00362828282828283\\
0.16	0.00487878787878788\\
0.21	0.00608686868686869\\
0.26	0.00767474747474747\\
0.31	0.00867676767676768\\
0.36	0.00995555555555556\\
0.41	0.0112363636363636\\
0.46	0.0127454545454545\\
0.51	0.0141151515151515\\
0.56	0.016240404040404\\
0.61	0.0182909090909091\\
0.66	0.0200909090909091\\
0.71	0.0236565656565657\\
};
\addplot [color=red,only marks,mark=x,mark size=3,mark options={solid},forget plot]
plot [error bars/.cd, y dir = both, y explicit]
table[row sep=crcr, x index=0, y index=1, y error index=2]{%
0.01	0.000131313131313131	0.000114079280669855	0.000114079280669855\\
0.06	0.00194141414141414	0.000401605734042942	0.000401605734042942\\
0.11	0.00362828282828283	0.000432342112303691	0.000432342112303691\\
0.16	0.00487878787878788	0.000673203090645274	0.000673203090645274\\
0.21	0.00608686868686869	0.000858024238205688	0.000858024238205688\\
0.26	0.00767474747474747	0.000780686452261126	0.000780686452261126\\
0.31	0.00867676767676768	0.000914742277450885	0.000914742277450885\\
0.36	0.00995555555555556	0.00101936792238567	0.00101936792238567\\
0.41	0.0112363636363636	0.00112453427419595	0.00112453427419595\\
0.46	0.0127454545454545	0.00116261328682411	0.00116261328682411\\
0.51	0.0141151515151515	0.0013127046862579	0.0013127046862579\\
0.56	0.016240404040404	0.00162775295951159	0.00162775295951159\\
0.61	0.0182909090909091	0.00166388015656885	0.00166388015656885\\
0.66	0.0200909090909091	0.00178223822475925	0.00178223822475925\\
0.71	0.0236565656565657	0.00232031760372199	0.00232031760372199\\
};
\addplot [color=blue,solid,forget plot]
  table[row sep=crcr]{%
0.01	0.000127272727272727\\
0.06	0.00189292929292929\\
0.11	0.00361212121212121\\
0.16	0.00483434343434344\\
0.21	0.00601414141414141\\
0.26	0.00751313131313131\\
0.31	0.00854949494949495\\
0.36	0.00985050505050506\\
0.41	0.0111818181818182\\
0.46	0.0126181818181818\\
0.51	0.014020202020202\\
0.56	0.0164020202020202\\
0.61	0.0186222222222222\\
0.66	0.0223535353535354\\
0.71	0.0268828282828283\\
};
\addplot [color=blue,only marks,mark=o,mark size=3,mark options={solid},forget plot]
plot [error bars/.cd, y dir = both, y explicit]
table[row sep=crcr, x index=0, y index=1, y error index=2]{%
0.01	0.000127272727272727	9.74640391963538e-05	9.74640391963538e-05\\
0.06	0.00189292929292929	0.000397184672726895	0.000397184672726895\\
0.11	0.00361212121212121	0.000411475599898912	0.000411475599898912\\
0.16	0.00483434343434344	0.000643717752002129	0.000643717752002129\\
0.21	0.00601414141414141	0.000887700429244752	0.000887700429244752\\
0.26	0.00751313131313131	0.000846315118038099	0.000846315118038099\\
0.31	0.00854949494949495	0.000882654624417497	0.000882654624417497\\
0.36	0.00985050505050506	0.000920413241686702	0.000920413241686702\\
0.41	0.0111818181818182	0.00115191832569863	0.00115191832569863\\
0.46	0.0126181818181818	0.00129546494297961	0.00129546494297961\\
0.51	0.014020202020202	0.00142309501713583	0.00142309501713583\\
0.56	0.0164020202020202	0.00237187389847942	0.00237187389847942\\
0.61	0.0186222222222222	0.0026933350378941	0.0026933350378941\\
0.66	0.0223535353535354	0.00442774593976423	0.00442774593976423\\
0.71	0.0268828282828283	0.00522812825056164	0.00522812825056164\\
};
\addplot [color=mycolor1,solid,forget plot]
  table[row sep=crcr]{%
0.01	0.000268686868686869\\
0.06	0.00292525252525253\\
0.11	0.00507272727272728\\
0.16	0.00657777777777778\\
0.21	0.00809696969696969\\
0.26	0.0100141414141414\\
0.31	0.0114545454545455\\
0.36	0.012989898989899\\
0.41	0.0149030303030303\\
0.46	0.0165454545454546\\
0.51	0.0181636363636364\\
0.56	0.0210262626262626\\
0.61	0.0239010101010101\\
0.66	0.0260484848484848\\
0.71	0.0301959595959596\\
};
\addplot [color=mycolor1,only marks,mark=triangle,mark size=3,mark options={solid},forget plot]
plot [error bars/.cd, y dir = both, y explicit]
table[row sep=crcr, x index=0, y index=1, y error index=2]{%
0.01	0.000268686868686869	0.000152180221105163	0.000152180221105163\\
0.06	0.00292525252525253	0.000442867463323943	0.000442867463323943\\
0.11	0.00507272727272728	0.000612312289844248	0.000612312289844248\\
0.16	0.00657777777777778	0.000837845955318153	0.000837845955318153\\
0.21	0.00809696969696969	0.000955733310175703	0.000955733310175703\\
0.26	0.0100141414141414	0.000919682227689518	0.000919682227689518\\
0.31	0.0114545454545455	0.00121469526728629	0.00121469526728629\\
0.36	0.012989898989899	0.00130206394871349	0.00130206394871349\\
0.41	0.0149030303030303	0.00161528655964096	0.00161528655964096\\
0.46	0.0165454545454546	0.00150183109163148	0.00150183109163148\\
0.51	0.0181636363636364	0.00159876354882428	0.00159876354882428\\
0.56	0.0210262626262626	0.00214686028342357	0.00214686028342357\\
0.61	0.0239010101010101	0.00156761901467745	0.00156761901467745\\
0.66	0.0260484848484848	0.00235352075537616	0.00235352075537616\\
0.71	0.0301959595959596	0.00266798518779637	0.00266798518779637\\
};
\addplot [color=green,solid,forget plot]
  table[row sep=crcr]{%
0.01	0.000135353535353535\\
0.06	0.00192323232323232\\
0.11	0.00362626262626262\\
0.16	0.0048949494949495\\
0.21	0.00604848484848485\\
0.26	0.00758585858585858\\
0.31	0.00863232323232323\\
0.36	0.00987070707070708\\
0.41	0.0112121212121212\\
0.46	0.0126666666666667\\
0.51	0.0140545454545455\\
0.56	0.0162141414141414\\
0.61	0.0181676767676768\\
0.66	0.0200080808080808\\
0.71	0.0235393939393939\\
};
\addplot [color=green,only marks,mark=square,mark size=3,mark options={solid},forget plot]
plot [error bars/.cd, y dir = both, y explicit]
table[row sep=crcr, x index=0, y index=1, y error index=2]{%
0.01	0.000135353535353535	0.000116535209867648	0.000116535209867648\\
0.06	0.00192323232323232	0.000400395848681665	0.000400395848681665\\
0.11	0.00362626262626262	0.000427211294172798	0.000427211294172798\\
0.16	0.0048949494949495	0.0006280654456265	0.0006280654456265\\
0.21	0.00604848484848485	0.00087836956945094	0.00087836956945094\\
0.26	0.00758585858585858	0.000827243919599486	0.000827243919599486\\
0.31	0.00863232323232323	0.000926494581226226	0.000926494581226226\\
0.36	0.00987070707070708	0.000973366245497936	0.000973366245497936\\
0.41	0.0112121212121212	0.00108176997483355	0.00108176997483355\\
0.46	0.0126666666666667	0.001206956632977	0.001206956632977\\
0.51	0.0140545454545455	0.00140587299150875	0.00140587299150875\\
0.56	0.0162141414141414	0.00168036759050502	0.00168036759050502\\
0.61	0.0181676767676768	0.00171432093870653	0.00171432093870653\\
0.66	0.0200080808080808	0.00180989890499354	0.00180989890499354\\
0.71	0.0235393939393939	0.00243563597601269	0.00243563597601269\\
};
\end{axis}
\end{tikzpicture}%
%
%
\definecolor{mycolor1}{rgb}{1.00000,0.00000,1.00000}%
\begin{tikzpicture}

\begin{axis}[%
width=\figurewidth,
height=\figureheight,
at={(0.769in,0.47in)},
scale only axis,
xmin=0,
xmax=0.75,
ymin=-0.001,
ymax=0.5,
axis background/.style={fill=white},
ylabel={$\%$ upsets in top $10$},
xlabel={$\%$ missing},
axis x line*=bottom,
axis y line*=left
]
\addplot [color=red,solid,forget plot]
  table[row sep=crcr]{%
0.01	0.00181912144702842\\
0.06	0.0337448108508502\\
0.11	0.0562683124844591\\
0.16	0.0720982410222516\\
0.21	0.0997471200664035\\
0.26	0.123303963758984\\
0.31	0.126620595451609\\
0.36	0.130702110812749\\
0.41	0.148972213671404\\
0.46	0.178250873603048\\
0.51	0.224492394597072\\
0.56	0.224638627321438\\
0.61	0.264417302698498\\
0.66	0.234551297343867\\
0.71	0.274627356561567\\
};
\addplot [color=red,only marks,mark=x,mark size=3,mark options={solid},forget plot]
 plot [error bars/.cd, y dir = both, y explicit]
table[row sep=crcr, x index=0, y index=1, y error index=2]{%
0.01	0.00181912144702842	0.00623332050979586	0.00623332050979586\\
0.06	0.0337448108508502	0.0284338325212038	0.0284338325212038\\
0.11	0.0562683124844591	0.0393374786231545	0.0393374786231545\\
0.16	0.0720982410222516	0.0468579563076082	0.0468579563076082\\
0.21	0.0997471200664035	0.0501390476204148	0.0501390476204148\\
0.26	0.123303963758984	0.0664178118349512	0.0664178118349512\\
0.31	0.126620595451609	0.074189198984736	0.074189198984736\\
0.36	0.130702110812749	0.064333082267011	0.064333082267011\\
0.41	0.148972213671404	0.0840942109598517	0.0840942109598517\\
0.46	0.178250873603048	0.0957678832169431	0.0957678832169431\\
0.51	0.224492394597072	0.123908423550589	0.123908423550589\\
0.56	0.224638627321438	0.110697101282075	0.110697101282075\\
0.61	0.264417302698498	0.126840044962054	0.126840044962054\\
0.66	0.234551297343867	0.132454489355624	0.132454489355624\\
0.71	0.274627356561567	0.133283434087954	0.133283434087954\\
};
\addplot [color=green,solid,forget plot]
  table[row sep=crcr]{%
0.01	0.00181912144702842\\
0.06	0.0403038662783118\\
0.11	0.064224342218678\\
0.16	0.0816214849822165\\
0.21	0.11080766484789\\
0.26	0.136243763334382\\
0.31	0.144383934828576\\
0.36	0.155346025856219\\
0.41	0.155014996904956\\
0.46	0.188109419831916\\
0.51	0.247334321508248\\
0.56	0.238168972312592\\
0.61	0.285653146213652\\
0.66	0.241487972639675\\
0.71	0.295909118659119\\
};
\addplot [color=green,only marks,mark=square,mark size=3,mark options={solid},forget plot]
 plot [error bars/.cd, y dir = both, y explicit]
table[row sep=crcr, x index=0, y index=1, y error index=2]{%
0.01	0.00181912144702842	0.00623332050979586	0.00623332050979586\\
0.06	0.0403038662783118	0.0319767119008299	0.0319767119008299\\
0.11	0.064224342218678	0.0383083121967	0.0383083121967\\
0.16	0.0816214849822165	0.0469208259864053	0.0469208259864053\\
0.21	0.11080766484789	0.0571224057842225	0.0571224057842225\\
0.26	0.136243763334382	0.0675644603349876	0.0675644603349876\\
0.31	0.144383934828576	0.0770428186748816	0.0770428186748816\\
0.36	0.155346025856219	0.0721941504470028	0.0721941504470028\\
0.41	0.155014996904956	0.0854394876592115	0.0854394876592115\\
0.46	0.188109419831916	0.0951494655562224	0.0951494655562224\\
0.51	0.247334321508248	0.124572564390034	0.124572564390034\\
0.56	0.238168972312592	0.114524778053427	0.114524778053427\\
0.61	0.285653146213652	0.12719279860106	0.12719279860106\\
0.66	0.241487972639675	0.116435062661089	0.116435062661089\\
0.71	0.295909118659119	0.1532114501062	0.1532114501062\\
};
\addplot [color=mycolor1,solid,mark=triangle,mark size=3,mark options={solid},forget plot]
  table[row sep=crcr]{%
0.01	0.00317312661498708\\
0.06	0.0484678782489333\\
0.11	0.0806992606737325\\
0.16	0.0974408676186107\\
0.21	0.123293372160636\\
0.26	0.149836436391977\\
0.31	0.183346590907129\\
0.36	0.192314661466151\\
0.41	0.170737429511838\\
0.46	0.229182418804921\\
0.51	0.25261792017428\\
0.56	0.260154862886624\\
0.61	0.282872290494985\\
0.66	0.297119982477103\\
0.71	0.333610468307063\\
};
\addplot [color=mycolor1,only marks,mark=triangle,mark size=3,mark options={solid},forget plot]
 plot [error bars/.cd, y dir = both, y explicit]
table[row sep=crcr, x index=0, y index=1, y error index=2]{%
0.01	0.00317312661498708	0.00794672396132344	0.00794672396132344\\
0.06	0.0484678782489333	0.0302122818174844	0.0302122818174844\\
0.11	0.0806992606737325	0.040631577302081	0.040631577302081\\
0.16	0.0974408676186107	0.0499096420735051	0.0499096420735051\\
0.21	0.123293372160636	0.0606399975371254	0.0606399975371254\\
0.26	0.149836436391977	0.0688035219344232	0.0688035219344232\\
0.31	0.183346590907129	0.0762163593999396	0.0762163593999396\\
0.36	0.192314661466151	0.079179814072751	0.079179814072751\\
0.41	0.170737429511838	0.0838200930122928	0.0838200930122928\\
0.46	0.229182418804921	0.0979049735532107	0.0979049735532107\\
0.51	0.25261792017428	0.106689163015341	0.106689163015341\\
0.56	0.260154862886624	0.118755177503477	0.118755177503477\\
0.61	0.282872290494985	0.139376097021135	0.139376097021135\\
0.66	0.297119982477103	0.133677439086582	0.133677439086582\\
0.71	0.333610468307063	0.150642965097132	0.150642965097132\\
};
\addplot [color=blue,solid,forget plot]
  table[row sep=crcr]{%
0.01	0.00181912144702842\\
0.06	0.0407800567545022\\
0.11	0.0638054596463936\\
0.16	0.0792984224238599\\
0.21	0.115008682143765\\
0.26	0.143973365219802\\
0.31	0.146059547366936\\
0.36	0.154295394078845\\
0.41	0.160152051027456\\
0.46	0.191531874278785\\
0.51	0.245561775115598\\
0.56	0.236569570304211\\
0.61	0.264792511600662\\
0.66	0.237297645337119\\
0.71	0.28012735209407\\
};
\addplot [color=blue,only marks,mark=o,mark size=3,mark options={solid},forget plot]
 plot [error bars/.cd, y dir = both, y explicit]
table[row sep=crcr, x index=0, y index=1, y error index=2]{%
0.01	0.00181912144702842	0.00623332050979586	0.00623332050979586\\
0.06	0.0407800567545022	0.031903267624702	0.031903267624702\\
0.11	0.0638054596463936	0.0380541950731353	0.0380541950731353\\
0.16	0.0792984224238599	0.0466449216262757	0.0466449216262757\\
0.21	0.115008682143765	0.0540737530205649	0.0540737530205649\\
0.26	0.143973365219802	0.0746408467537126	0.0746408467537126\\
0.31	0.146059547366936	0.0771998513383786	0.0771998513383786\\
0.36	0.154295394078845	0.0656546418599027	0.0656546418599027\\
0.41	0.160152051027456	0.0892760797305568	0.0892760797305568\\
0.46	0.191531874278785	0.103812873757176	0.103812873757176\\
0.51	0.245561775115598	0.134710841263663	0.134710841263663\\
0.56	0.236569570304211	0.09195442057293	0.09195442057293\\
0.61	0.264792511600662	0.129388805775614	0.129388805775614\\
0.66	0.237297645337119	0.138086543509612	0.138086543509612\\
0.71	0.28012735209407	0.148124317381724	0.148124317381724\\
};
\end{axis}
\end{tikzpicture}%
\caption{On the left, Kendall's $\tau$-distances (the lower the better) between a known ranking and rankings obtained by various methods on this artificial dataset, i.e., least-squares (green squares \cite{YaoRanking}), Serialrank (magenta triangles \cite{FogelSpectralRanking}), Spectral Sync-Rank (blue circles \cite{Cucuringu}), dilation Laplacian (red crosses, $g = 0.1$). On the right, percentage of upsets in the top $10$ (the lower the better). The computations were repeated $50$ times, and the error bars are the standard deviations. \label{Fig:Missing_100_50repeats}}
\end{figure}
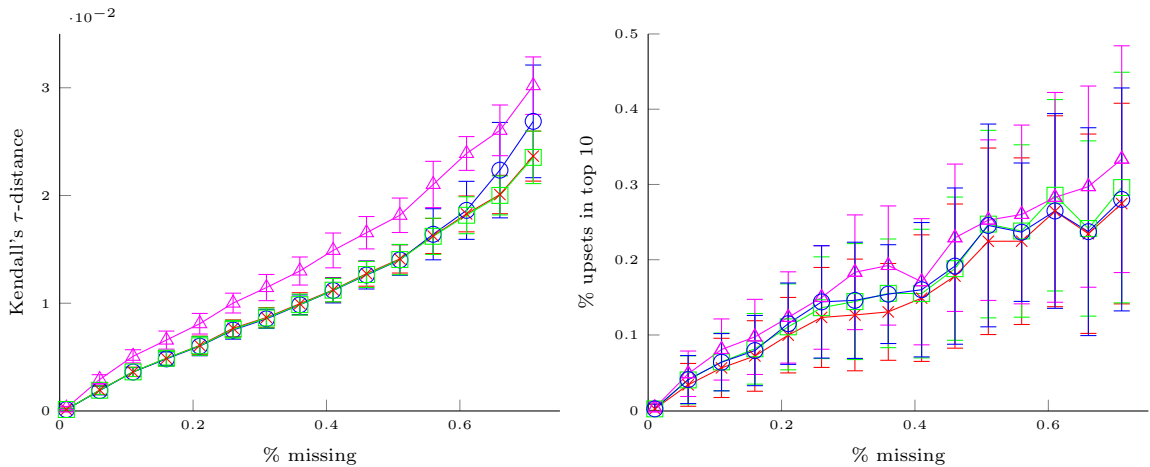
Moreover, our method based on the dilation Laplacian is  compared with the least-squares ranking \cite{YaoRanking} and two spectral methods, that is, SerialRank in the matching comparisons scheme \cite{FogelSpectralRanking}, the spectral version of Sync-Rank \cite{Cucuringu}. The features of these methods are outlined in Table~\ref{Tab:Methods}. In the presence of ties in the scores, the ranking is selected at random.
The results obtained on a set of artificial pairwise comparisons between $100$ objects are displayed in Figure \ref{Fig:Missing_100_50repeats}.
\begin{table}[h]\centering
\begin{tabular}{l | c c c c c}
  &  Sync-Rank~\cite{Cucuringu} & Hodgerank~\cite{YaoRanking}  & Serialrank~\cite{FogelSpectralRanking}  & Dilation & Random walk\\ 
\hline 
\rule{0pt}{3ex}Choice & $H_{ij} = \exp(\frac{\pi a_{ij}}{N-1})$ & \slash & matrix $S$ & $g>0$& $g>0$\\
 Problem & $H v_1 = \lambda_1 v_1$ & $\mathcal{L}_0 s+{\rm div}a = 0$ & $\mathcal{L}^{S}_0y =\lambda y$ & $\mathcal{L}_g v_0^{(g)} =\lambda_0^{(g)} v_0^{(g)}$ & $\pi P_g = \pi$\\
Type & Top eig. & least-squares & Fiedler eig. & Lowest eig.& Top eig.\\
Solution & $c\times v_1 \in \mathbb{C}^N$ & $s\in \mathbb{R}^N$ & $c\times y \in \mathbb{R}^N$ & $c\times v_0^{(g)}\in \mathbb{R}^N$& $c\times\pi^{(g)}\in \mathbb{R}^N$\\
Score of $i$ & $[v_{1}]_i/|[v_{1}]_i|$ & $s_i +{\rm cst}$ & $\pm y_i$ & $|c\times [v_{0}^{(g)}]_i|$ & $|c\times\pi_i|$\\
Rankings & $N$ & $1$ & $2$ & $1$ & $1$\\
Selection & min. upsets & \slash & min. upsets & \slash & \slash\\
 \hline
\end{tabular}
\caption{Features of the methods compared for ranking from pairwise comparisons.\label{Tab:Methods}}
\end{table}

On the one hand, in terms of Kendall's $\tau$-distance with respect to a known ranking (Criterion $1$), the numerical results are similar for Hodgerank, Sync-Rank and the dilation Laplacian method. Serialrank is known to be more robust to corrupted comparisons. Another accuracy measure (Criterion $2$), used for instance in \cite{FogelSpectralRanking}, is the number of upsets in the top-$k$, that is, the number of times an object in the top-$k$ is preferred to another object in the top-$k$ with a higher ranking. More precisely, if $C_k$ the subset of top-$k$ objects, the number of upsets in the top-$k$ is
$
l_k = \sum_{i,j\in C_k}1_{r_i>r_j}1_{a_{ij}<0},
$
where $1_{\cdot}$ is an indicator function and $r_i\in \{1,\dots,N\}$ is the ranking of the node $i\in V$.
Hence, the number of upsets is an efficiency measure that is independent  of any prior knowledge about the correct ranking.  This is meaningful since, in many applications, only the top-$k$ objects are interesting. 
 With respect to the number of upsets in the top $10$, the dilation ranking performs better than the other methods, as illustrated in Figure~\ref{Fig:Missing_100_50repeats}, while its Kendall's $\tau$-distance with the true ranking is the same of Hodgerank. However, the error bars (standard deviations) on the number of upsets on the right hand side of Figure~\ref{Fig:Missing_100_50repeats} are important so that the statistical significance of the reduced number of upsets is difficult to assess. Hence, in Figure~\ref{Hodge_vs_Dilation_Upsets}, the same simulation was repeated 200 times on graphs of 200 nodes in order to compare the mean number of upsets in the top-10 and top-20 for Hodgerank and the dilation ranking.
In order to determine the statistical significance of the difference between the two methods, the Wilcoxon signed rank test is used with significance level $0.05$ for the null hypothesis that the difference between the results comes from a distribution with zero median (In fact, if the test value is $1$, the null hypothesis is rejected). Concerning the upsets in the top-$10$, we observe on the \emph{lhs} of Figure~\ref{Hodge_vs_Dilation_Upsets} that the p-value of this test (dashed blue line) increases above $0.05$ for a large percentage of missing comparisons. Indeed, in this case, the difference between Hodgerank and Algorithm~\ref{Alg1} is significant only for a low number of missing comparisons. However, the number of upsets in the top-20 between the two methods is significantly different for all values of missing comparisons.

\begin{figure}[h]
\setlength\figureheight{4.5cm} 
\setlength\figurewidth{0.35\textwidth}
%
%
\begin{tikzpicture}

\begin{axis}[%
width=\figurewidth,
height=\figureheight,
at={(0.769in,0.47in)},
scale only axis,
xmin=0,
xmax=0.9,
xlabel={\% missing},
ymin=-0.01,
ymax=0.4,
ylabel={\% upsets top 10},
axis background/.style={fill=white},
axis x line*=bottom,
axis y line*=left,
xticklabel style={/pgf/number format/fixed},
yticklabel style={/pgf/number format/fixed}
]
\addplot [color=red,solid,forget plot]
  table[row sep=crcr]{%
0.01	0.00643036679083191\\
0.06	0.056424392164186\\
0.11	0.0905951565339159\\
0.16	0.115555524889872\\
0.21	0.14985211940044\\
0.26	0.170653229098055\\
0.31	0.186756559763413\\
0.36	0.208726875177313\\
0.41	0.231090827183207\\
0.46	0.251801534516895\\
0.51	0.270358880092638\\
0.56	0.281676622025312\\
0.61	0.284604370298542\\
0.66	0.334944783308367\\
0.71	0.338043425185066\\
0.76	0.362718879310404\\
0.81	0.36744489089342\\
};
\addplot [color=red,only marks,mark=x,mark options={solid},forget plot]
 table[row sep=crcr, y error plus index=2, y error minus index=3]{%
0.01	0.00643036679083191	0.0120665407235707	0.0120665407235707\\
0.06	0.056424392164186	0.0334020808345737	0.0334020808345737\\
0.11	0.0905951565339159	0.0486786813855953	0.0486786813855953\\
0.16	0.115555524889872	0.0536365796616268	0.0536365796616268\\
0.21	0.14985211940044	0.0699724658544126	0.0699724658544126\\
0.26	0.170653229098055	0.0783225385653001	0.0783225385653001\\
0.31	0.186756559763413	0.0781374331845478	0.0781374331845478\\
0.36	0.208726875177313	0.0872064125526041	0.0872064125526041\\
0.41	0.231090827183207	0.0978493585235164	0.0978493585235164\\
0.46	0.251801534516895	0.104281169745788	0.104281169745788\\
0.51	0.270358880092638	0.121557498570273	0.121557498570273\\
0.56	0.281676622025312	0.11896668688524	0.11896668688524\\
0.61	0.284604370298542	0.139500001661568	0.139500001661568\\
0.66	0.334944783308367	0.161383981347896	0.161383981347896\\
0.71	0.338043425185066	0.158393191187764	0.158393191187764\\
0.76	0.362718879310404	0.163907244017194	0.163907244017194\\
0.81	0.36744489089342	0.207042920638949	0.207042920638949\\
};
\addplot [color=green,solid,forget plot]
  table[row sep=crcr]{%
0.01	0.00879423806167992\\
0.06	0.0657921053788295\\
0.11	0.103408709147391\\
0.16	0.13017318089664\\
0.21	0.167751890546373\\
0.26	0.184536239187526\\
0.31	0.206331288809675\\
0.36	0.232057979576581\\
0.41	0.247779896195398\\
0.46	0.264668651487709\\
0.51	0.287925731610315\\
0.56	0.299706177925123\\
0.61	0.305174520449339\\
0.66	0.350285701816132\\
0.71	0.346325876028856\\
0.76	0.373597080722855\\
0.81	0.382046737984238\\
};
\addplot [color=green,only marks,mark=square,mark options={solid},forget plot]
 table[row sep=crcr, y error plus index=2, y error minus index=3]{%
0.01	0.00879423806167992	0.0137352497510695	0.0137352497510695\\
0.06	0.0657921053788295	0.0353478908944531	0.0353478908944531\\
0.11	0.103408709147391	0.051263272119493	0.051263272119493\\
0.16	0.13017318089664	0.0573552070714648	0.0573552070714648\\
0.21	0.167751890546373	0.0700448787126459	0.0700448787126459\\
0.26	0.184536239187526	0.0819536230629205	0.0819536230629205\\
0.31	0.206331288809675	0.0865992586145045	0.0865992586145045\\
0.36	0.232057979576581	0.093859913509995	0.093859913509995\\
0.41	0.247779896195398	0.102899764296586	0.102899764296586\\
0.46	0.264668651487709	0.113308448967173	0.113308448967173\\
0.51	0.287925731610315	0.128770469892	0.128770469892\\
0.56	0.299706177925123	0.13078003583588	0.13078003583588\\
0.61	0.305174520449339	0.144403085044137	0.144403085044137\\
0.66	0.350285701816132	0.156960244103059	0.156960244103059\\
0.71	0.346325876028856	0.162434681841662	0.162434681841662\\
0.76	0.373597080722855	0.175997695676631	0.175997695676631\\
0.81	0.382046737984238	0.195963429040985	0.195963429040985\\
};
\end{axis}

\begin{axis}[%
blue,
width=\figurewidth,
height=\figureheight,
at={(0.769in,0.47in)},
scale only axis,
axis y line*=right,
xmin=0,
xmax=0.9,
ymin=-0.01,
ymax=1.02,
ylabel={p-value Wilcoxon test},
]

\addplot [color=blue,dashed,forget plot]
  table[row sep=crcr]{%
0.01	6.55566199986336e-05\\
0.06	4.5573972232307e-10\\
0.11	1.77115658288588e-11\\
0.16	2.4559771159724e-14\\
0.21	1.57144370782495e-13\\
0.26	1.17517167624449e-07\\
0.31	8.60644029730102e-11\\
0.36	6.38255611987351e-10\\
0.41	4.05762848138017e-05\\
0.46	0.003486144883075\\
0.51	0.000366355364079371\\
0.56	0.00116390479350599\\
0.61	0.0038713971222178\\
0.66	0.0303046725824188\\
0.71	0.282161535213459\\
0.76	0.301926761033453\\
0.81	0.155816194654476\\
};
\addplot [color=blue,only marks,mark=asterisk,mark options={solid},forget plot]
  table[row sep=crcr]{%
0.01	1\\
0.06	1\\
0.11	1\\
0.16	1\\
0.21	1\\
0.26	1\\
0.31	1\\
0.36	1\\
0.41	1\\
0.46	1\\
0.51	1\\
0.56	1\\
0.61	1\\
0.66	1\\
0.71	0\\
0.76	0\\
0.81	0\\
};
\end{axis}
\end{tikzpicture}%
%
%
\begin{tikzpicture}

\begin{axis}[%
width=\figurewidth,
height=\figureheight,
at={(0.769in,0.47in)},
scale only axis,
xmin=0,
xmax=0.9,
xlabel={\% missing},
ymin=-0.01,
ymax=0.4,
ylabel={\% upsets top 20},
axis background/.style={fill=white},
axis x line*=bottom,
axis y line*=left,
xticklabel style={/pgf/number format/fixed},
yticklabel style={/pgf/number format/fixed}
]
\addplot [color=red,solid,forget plot]
  table[row sep=crcr]{%
0.01	0.00348776639287098\\
0.06	0.0280811305566467\\
0.11	0.0497623267595051\\
0.16	0.0654574738185867\\
0.21	0.0798394792623447\\
0.26	0.0938510768843916\\
0.31	0.103339928097464\\
0.36	0.114695224680003\\
0.41	0.135621225542951\\
0.46	0.150512793602637\\
0.51	0.159849511940438\\
0.56	0.177710073311175\\
0.61	0.193286184954217\\
0.66	0.211455915051672\\
0.71	0.237231231584294\\
0.76	0.254916896529695\\
0.81	0.277626405769326\\
};
\addplot [color=red,only marks,mark=x,mark options={solid},forget plot]
 table[row sep=crcr, y error plus index=2, y error minus index=3]{%
0.01	0.00348776639287098	0.00409216261166054	0.00409216261166054\\
0.06	0.0280811305566467	0.011511853821948	0.011511853821948\\
0.11	0.0497623267595051	0.0172546308251936	0.0172546308251936\\
0.16	0.0654574738185867	0.019681442922283	0.019681442922283\\
0.21	0.0798394792623447	0.0252707154243816	0.0252707154243816\\
0.26	0.0938510768843916	0.0286513460036957	0.0286513460036957\\
0.31	0.103339928097464	0.0279179971116468	0.0279179971116468\\
0.36	0.114695224680003	0.0319443860046913	0.0319443860046913\\
0.41	0.135621225542951	0.0371864665551645	0.0371864665551645\\
0.46	0.150512793602637	0.0457218311143162	0.0457218311143162\\
0.51	0.159849511940438	0.0450586629552084	0.0450586629552084\\
0.56	0.177710073311175	0.0505941370240969	0.0505941370240969\\
0.61	0.193286184954217	0.052778367584075	0.052778367584075\\
0.66	0.211455915051672	0.0638582301388556	0.0638582301388556\\
0.71	0.237231231584294	0.068366882869841	0.068366882869841\\
0.76	0.254916896529695	0.083186792384878	0.083186792384878\\
0.81	0.277626405769326	0.0976470017582946	0.0976470017582946\\
};
\addplot [color=green,solid,forget plot]
  table[row sep=crcr]{%
0.01	0.00444611030994886\\
0.06	0.0323715744797341\\
0.11	0.0553297146557418\\
0.16	0.0728809180651206\\
0.21	0.0876248593813235\\
0.26	0.103834252576805\\
0.31	0.113802552048541\\
0.36	0.126209379593276\\
0.41	0.146740646120016\\
0.46	0.160130449297195\\
0.51	0.174642024318892\\
0.56	0.192173294102788\\
0.61	0.207551874254853\\
0.66	0.228524954188806\\
0.71	0.254139515108222\\
0.76	0.270986735316796\\
0.81	0.29262319886907\\
};
\addplot [color=green,only marks,mark=square,mark options={solid},forget plot]
 table[row sep=crcr, y error plus index=2, y error minus index=3]{%
0.01	0.00444611030994886	0.0045288449856695	0.0045288449856695\\
0.06	0.0323715744797341	0.0126367722794317	0.0126367722794317\\
0.11	0.0553297146557418	0.0184241007803564	0.0184241007803564\\
0.16	0.0728809180651206	0.0215009446668178	0.0215009446668178\\
0.21	0.0876248593813235	0.0268116799541749	0.0268116799541749\\
0.26	0.103834252576805	0.0313185778161478	0.0313185778161478\\
0.31	0.113802552048541	0.0345942525156821	0.0345942525156821\\
0.36	0.126209379593276	0.0327443168080336	0.0327443168080336\\
0.41	0.146740646120016	0.0388539005158602	0.0388539005158602\\
0.46	0.160130449297195	0.0452768335648422	0.0452768335648422\\
0.51	0.174642024318892	0.0508772939002934	0.0508772939002934\\
0.56	0.192173294102788	0.0539227560092918	0.0539227560092918\\
0.61	0.207551874254853	0.0602752083069949	0.0602752083069949\\
0.66	0.228524954188806	0.0661950468479063	0.0661950468479063\\
0.71	0.254139515108222	0.0764607177720913	0.0764607177720913\\
0.76	0.270986735316796	0.0873520392908131	0.0873520392908131\\
0.81	0.29262319886907	0.0983512174359658	0.0983512174359658\\
};
\end{axis}
\begin{axis}[%
blue,       
width=\figurewidth,
height=\figureheight,
at={(0.769in,0.47in)},
scale only axis,
axis y line*=right,   
xmin=0,
xmax=0.9,
ymin=-0.01,
ymax=1.02,
ylabel={p-value Wilcoxon test}
]
\addplot [color=blue,dashed,forget plot]
  table[row sep=crcr]{%
0.01	2.16360195733814e-07\\
0.06	1.26479047202377e-15\\
0.11	4.55246556281392e-16\\
0.16	8.55323352631101e-19\\
0.21	1.24789172560119e-17\\
0.26	1.9302592210236e-16\\
0.31	1.62182487033744e-14\\
0.36	1.79078252973819e-15\\
0.41	4.55951141284057e-11\\
0.46	4.4740906634349e-07\\
0.51	1.63691943079451e-11\\
0.56	1.58083939691535e-10\\
0.61	1.36503686769943e-06\\
0.66	3.13196297917681e-06\\
0.71	3.57261531865163e-06\\
0.76	8.68151067412654e-05\\
0.81	0.00973081177644769\\
};
\addplot [color=blue,only marks,mark=asterisk,mark options={solid},forget plot]
  table[row sep=crcr]{%
0.01	1\\
0.06	1\\
0.11	1\\
0.16	1\\
0.21	1\\
0.26	1\\
0.31	1\\
0.36	1\\
0.41	1\\
0.46	1\\
0.51	1\\
0.56	1\\
0.61	1\\
0.66	1\\
0.71	1\\
0.76	1\\
0.81	1\\
};
\end{axis}
\end{tikzpicture}%
\caption{On the left, mean percentage of upsets in the top 10 for $g = 0.1$ (left axis). On the right, mean percentage of upsets in the top 20 (left axis). The least-squares (green squares \cite{YaoRanking}) and dilation Laplacian (red crosses) are compared. The simulation was repeated 200 times on a graph of 200 nodes. The dashed blue line is the $p$-value of the Wilcoxon signed rank test, while the blue stars indicate the result of the test at the 5$\%$ significance level (right axis).\label{Hodge_vs_Dilation_Upsets}}
\end{figure}
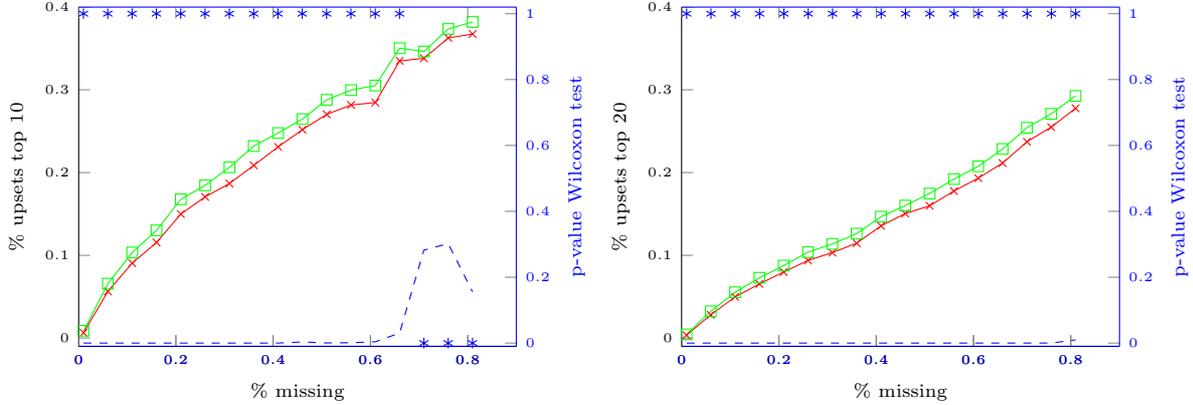

In Figure \ref{Fig:TopK}, the percentage of upsets is displayed with respect to the top-$k$ subset of a set of $50$ objects with $50$ percent of missing comparisons (the simulations were repeated $50$ times). The method based on the dilation Laplacian with $g=0.3$ gives a lower number of upsets in the top-$k$ objects, explained previously. The difference between the results of the various methods naturally decreases as $k$ increases.
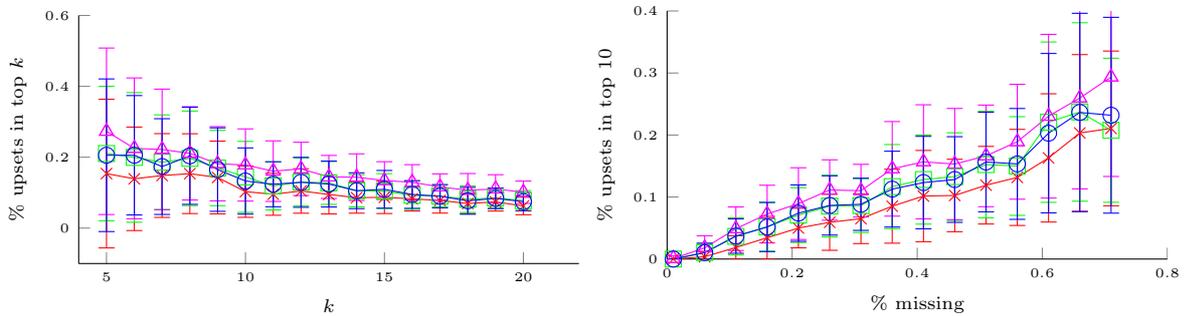
\begin{figure}[h]
\setlength\figureheight{0.2\textwidth} 
\setlength\figurewidth{0.4\textwidth}
%
%
\definecolor{mycolor1}{rgb}{1.00000,0.00000,1.00000}%
\begin{tikzpicture}

\begin{axis}[%
width=\figurewidth,
height=\figureheight,
at={(0.769in,0.47in)},
scale only axis,
xmin=4,
xmax=22,
ymin=-0.1,
ymax=0.6,
axis background/.style={fill=white},
ylabel={$\%$ upsets in top $k$},
xlabel={$k$},
axis x line*=bottom,
axis y line*=left
]
\addplot [color=red,solid,forget plot]
  table[row sep=crcr]{%
5	0.153857142857143\\
6	0.139127705627706\\
7	0.148402042402042\\
8	0.153641363452509\\
9	0.142862838967766\\
10	0.102923144468689\\
11	0.0954937541838265\\
12	0.104797925331553\\
13	0.0949564968765366\\
14	0.0858609310667537\\
15	0.0868596098487159\\
16	0.0825534487261211\\
17	0.0779173481736563\\
18	0.0694720806630826\\
19	0.0739674684210913\\
20	0.0623760137171565\\
};
\addplot [color=red,only marks,mark=x,mark size=3,mark options={solid},forget plot]
 plot [error bars/.cd, y dir = both, y explicit]
table[row sep=crcr, x index=0, y index=1, y error index=2]{%
5	0.153857142857143	0.209768768591875	0.209768768591875\\
6	0.139127705627706	0.14603858885392	0.14603858885392\\
7	0.148402042402042	0.118206141043864	0.118206141043864\\
8	0.153641363452509	0.112593590706975	0.112593590706975\\
9	0.142862838967766	0.102668945191584	0.102668945191584\\
10	0.102923144468689	0.0724995241035452	0.0724995241035452\\
11	0.0954937541838265	0.0592289527942784	0.0592289527942784\\
12	0.104797925331553	0.0627842872882403	0.0627842872882403\\
13	0.0949564968765366	0.0549801639700448	0.0549801639700448\\
14	0.0858609310667537	0.043718535617805	0.043718535617805\\
15	0.0868596098487159	0.0457512724348813	0.0457512724348813\\
16	0.0825534487261211	0.0341951049720744	0.0341951049720744\\
17	0.0779173481736563	0.0355619285790971	0.0355619285790971\\
18	0.0694720806630826	0.0284759996874953	0.0284759996874953\\
19	0.0739674684210913	0.0257728565984884	0.0257728565984884\\
20	0.0623760137171565	0.0246475101817928	0.0246475101817928\\
};
\addplot [color=green,solid,forget plot]
  table[row sep=crcr]{%
5	0.210476190476191\\
6	0.199453102453102\\
7	0.186100843600844\\
8	0.196951619759669\\
9	0.169509252254453\\
10	0.144581895419217\\
11	0.11901589035823\\
12	0.130094820364069\\
13	0.127669540953418\\
14	0.105443948481241\\
15	0.104561935876461\\
16	0.0925022686586917\\
17	0.090239102228292\\
18	0.0813853525942795\\
19	0.0851359553333645\\
20	0.0773888543780806\\
};
\addplot [color=green,only marks,mark=square,mark size=3,mark options={solid},forget plot]
 plot [error bars/.cd, y dir = both, y explicit]
table[row sep=crcr, x index=0, y index=1, y error index=2]{%
5	0.210476190476191	0.189768486057902	0.189768486057902\\
6	0.199453102453102	0.182915646044021	0.182915646044021\\
7	0.186100843600844	0.133447584167087	0.133447584167087\\
8	0.196951619759669	0.1333801798	0.1333801798\\
9	0.169509252254453	0.106227098060188	0.106227098060188\\
10	0.144581895419217	0.100137785643561	0.100137785643561\\
11	0.11901589035823	0.0677675300515031	0.0677675300515031\\
12	0.130094820364069	0.0678779089172993	0.0678779089172993\\
13	0.127669540953418	0.0607175562152856	0.0607175562152856\\
14	0.105443948481241	0.0454384818457862	0.0454384818457862\\
15	0.104561935876461	0.0490884181579129	0.0490884181579129\\
16	0.0925022686586917	0.0397593661274647	0.0397593661274647\\
17	0.090239102228292	0.0306381823200323	0.0306381823200323\\
18	0.0813853525942795	0.0373760500490483	0.0373760500490483\\
19	0.0851359553333645	0.028071531287243	0.028071531287243\\
20	0.0773888543780806	0.0283913930840874	0.0283913930840874\\
};
\addplot [color=mycolor1,solid,mark=triangle,mark size=3,mark options={solid},forget plot]
  table[row sep=crcr]{%
5	0.273285714285714\\
6	0.225002164502165\\
7	0.22200604950605\\
8	0.210940573959924\\
9	0.181628144650402\\
10	0.178142690079913\\
11	0.1607227162573\\
12	0.167109545826155\\
13	0.144380222290155\\
14	0.143760677318653\\
15	0.134151070147558\\
16	0.129443022849473\\
17	0.116244414642169\\
18	0.106916928725705\\
19	0.110412290861864\\
20	0.101236245194149\\
};
\addplot [color=mycolor1,only marks,mark=triangle,mark size=3,mark options={solid},forget plot]
 plot [error bars/.cd, y dir = both, y explicit]
table[row sep=crcr, x index=0, y index=1, y error index=2]{%
5	0.273285714285714	0.234979757857575	0.234979757857575\\
6	0.225002164502165	0.198843035585888	0.198843035585888\\
7	0.22200604950605	0.170295278217929	0.170295278217929\\
8	0.210940573959924	0.131595452661946	0.131595452661946\\
9	0.181628144650402	0.104788404201665	0.104788404201665\\
10	0.178142690079913	0.101876159576679	0.101876159576679\\
11	0.1607227162573	0.0855354577958106	0.0855354577958106\\
12	0.167109545826155	0.0752933384505441	0.0752933384505441\\
13	0.144380222290155	0.0607115113874827	0.0607115113874827\\
14	0.143760677318653	0.0658634602534468	0.0658634602534468\\
15	0.134151070147558	0.0532595292704468	0.0532595292704468\\
16	0.129443022849473	0.0496052300469768	0.0496052300469768\\
17	0.116244414642169	0.0368833837386959	0.0368833837386959\\
18	0.106916928725705	0.0470588730666851	0.0470588730666851\\
19	0.110412290861864	0.0402729828412439	0.0402729828412439\\
20	0.101236245194149	0.0315227510761049	0.0315227510761049\\
};
\addplot [color=blue,solid,forget plot]
  table[row sep=crcr]{%
5	0.205619047619048\\
6	0.205620490620491\\
7	0.173622988122988\\
8	0.204023146272372\\
9	0.165144206946192\\
10	0.132899442254475\\
11	0.123814617799351\\
12	0.129195013976326\\
13	0.12438607816165\\
14	0.105445372218399\\
15	0.109457651962838\\
16	0.094920784063567\\
17	0.0900184771219841\\
18	0.0770173475416762\\
19	0.0839202481509091\\
20	0.0745809176047989\\
};
\addplot [color=blue,only marks,mark=o,mark size=3,mark options={solid},forget plot]
 plot [error bars/.cd, y dir = both, y explicit]
table[row sep=crcr, x index=0, y index=1, y error index=2]{%
5	0.205619047619048	0.215351393935819	0.215351393935819\\
6	0.205620490620491	0.168434993957219	0.168434993957219\\
7	0.173622988122988	0.135113501550919	0.135113501550919\\
8	0.204023146272372	0.13697532056297	0.13697532056297\\
9	0.165144206946192	0.117758563046831	0.117758563046831\\
10	0.132899442254475	0.0934048588432162	0.0934048588432162\\
11	0.123814617799351	0.0635118330827559	0.0635118330827559\\
12	0.129195013976326	0.070526086643423	0.070526086643423\\
13	0.12438607816165	0.0646919723980314	0.0646919723980314\\
14	0.105445372218399	0.0506227516516955	0.0506227516516955\\
15	0.109457651962838	0.0532213188662203	0.0532213188662203\\
16	0.094920784063567	0.0391559973928336	0.0391559973928336\\
17	0.0900184771219841	0.0328645444461295	0.0328645444461295\\
18	0.0770173475416762	0.0378777630099729	0.0378777630099729\\
19	0.0839202481509091	0.0289599479212735	0.0289599479212735\\
20	0.0745809176047989	0.025499939078001	0.025499939078001\\
};
\end{axis}
\end{tikzpicture}%
%
%
\definecolor{mycolor1}{rgb}{1.00000,0.00000,1.00000}%
\begin{tikzpicture}

\begin{axis}[%
width=\figurewidth,
height=\figureheight,
at={(0.769in,0.47in)},
scale only axis,
xmin=0,
xmax=0.8,
ymin=0.,
ymax=0.4,
axis background/.style={fill=white},
ylabel={\% upsets in top $10$},
xlabel={\% missing},
axis x line*=bottom,
axis y line*=left
]
\addplot [color=red,solid,forget plot]
  table[row sep=crcr]{%
0.01	0\\
0.06	0.00376793172794307\\
0.11	0.0186175509480415\\
0.16	0.0343523457425896\\
0.21	0.0499732230984228\\
0.26	0.0592184073958551\\
0.31	0.0650632074887424\\
0.36	0.0851831306888789\\
0.41	0.101523297510757\\
0.46	0.10250247920971\\
0.51	0.119065991652066\\
0.56	0.131602849510859\\
0.61	0.16305922895474\\
0.66	0.203296381894679\\
0.71	0.210639370433488\\
};
\addplot [color=red,only marks,mark=x,mark size=3,mark options={solid},forget plot]
 plot [error bars/.cd, y dir = both, y explicit]
table[row sep=crcr, x index=0, y index=1, y error index=2]{%
0.01	0	0	0\\
0.06	0.00376793172794307	0.00872565050626263	0.00872565050626263\\
0.11	0.0186175509480415	0.0189846525306068	0.0189846525306068\\
0.16	0.0343523457425896	0.0345270439333097	0.0345270439333097\\
0.21	0.0499732230984228	0.0319015578357401	0.0319015578357401\\
0.26	0.0592184073958551	0.045145049942741	0.045145049942741\\
0.31	0.0650632074887424	0.0401395626278111	0.0401395626278111\\
0.36	0.0851831306888789	0.0596472372164651	0.0596472372164651\\
0.41	0.101523297510757	0.0735912303797404	0.0735912303797404\\
0.46	0.10250247920971	0.0584389567585467	0.0584389567585467\\
0.51	0.119065991652066	0.0625823415043009	0.0625823415043009\\
0.56	0.131602849510859	0.0774787901151756	0.0774787901151756\\
0.61	0.16305922895474	0.1034127238851	0.1034127238851\\
0.66	0.203296381894679	0.126374519216236	0.126374519216236\\
0.71	0.210639370433488	0.124775963576085	0.124775963576085\\
};
\addplot [color=green,solid,forget plot]
  table[row sep=crcr]{%
0.01	0\\
0.06	0.0103375984952843\\
0.11	0.0365066414839309\\
0.16	0.0521129184056391\\
0.21	0.0708587806448698\\
0.26	0.0853933829131391\\
0.31	0.0862231939896845\\
0.36	0.116797521568512\\
0.41	0.128151853325039\\
0.46	0.132942193609949\\
0.51	0.152309000240572\\
0.56	0.150006525021769\\
0.61	0.220690580403429\\
0.66	0.237159029051444\\
0.71	0.207635031635032\\
};
\addplot [color=green,only marks,mark=square,mark size=3,mark options={solid},forget plot]
 plot [error bars/.cd, y dir = both, y explicit]
table[row sep=crcr, x index=0, y index=1, y error index=2]{%
0.01	0	0	0\\
0.06	0.0103375984952843	0.0151101574660005	0.0151101574660005\\
0.11	0.0365066414839309	0.0298785856926278	0.0298785856926278\\
0.16	0.0521129184056391	0.0394362333361181	0.0394362333361181\\
0.21	0.0708587806448698	0.0441668631166169	0.0441668631166169\\
0.26	0.0853933829131391	0.0497581921247437	0.0497581921247437\\
0.31	0.0862231939896845	0.0434205334923525	0.0434205334923525\\
0.36	0.116797521568512	0.0678349676029757	0.0678349676029757\\
0.41	0.128151853325039	0.0718129357101763	0.0718129357101763\\
0.46	0.132942193609949	0.0706346363187994	0.0706346363187994\\
0.51	0.152309000240572	0.085885779601841	0.085885779601841\\
0.56	0.150006525021769	0.0795604436348846	0.0795604436348846\\
0.61	0.220690580403429	0.129311425390988	0.129311425390988\\
0.66	0.237159029051444	0.143903180001975	0.143903180001975\\
0.71	0.207635031635032	0.116187466228761	0.116187466228761\\
};
\addplot [color=mycolor1,solid,mark=triangle,mark size=3,mark options={solid},forget plot]
  table[row sep=crcr]{%
0.01	0.000898989898989899\\
0.06	0.017943662375313\\
0.11	0.0488615760114332\\
0.16	0.0725180135061487\\
0.21	0.0893442810974355\\
0.26	0.111332312983947\\
0.31	0.110236026741662\\
0.36	0.145440228893442\\
0.41	0.156773397854368\\
0.46	0.153136937354226\\
0.51	0.166328250278825\\
0.56	0.18913853335758\\
0.61	0.230281083983096\\
0.66	0.259698184075388\\
0.71	0.293496807114454\\
};
\addplot [color=mycolor1,only marks,mark=triangle,mark size=3,mark options={solid},forget plot]
 plot [error bars/.cd, y dir = both, y explicit]
table[row sep=crcr, x index=0, y index=1, y error index=2]{%
0.01	0.000898989898989899	0.00444913874763435	0.00444913874763435\\
0.06	0.017943662375313	0.0196197391387878	0.0196197391387878\\
0.11	0.0488615760114332	0.0351643486434482	0.0351643486434482\\
0.16	0.0725180135061487	0.0465009870096554	0.0465009870096554\\
0.21	0.0893442810974355	0.0578447052636508	0.0578447052636508\\
0.26	0.111332312983947	0.0486740532199607	0.0486740532199607\\
0.31	0.110236026741662	0.0424825745889469	0.0424825745889469\\
0.36	0.145440228893442	0.0762365220163278	0.0762365220163278\\
0.41	0.156773397854368	0.0920037741162472	0.0920037741162472\\
0.46	0.153136937354226	0.0897211793261853	0.0897211793261853\\
0.51	0.166328250278825	0.0819459178720505	0.0819459178720505\\
0.56	0.18913853335758	0.092636040643336	0.092636040643336\\
0.61	0.230281083983096	0.131992073728614	0.131992073728614\\
0.66	0.259698184075388	0.146541808145252	0.146541808145252\\
0.71	0.293496807114454	0.160343258798549	0.160343258798549\\
};
\addplot [color=blue,solid,forget plot]
  table[row sep=crcr]{%
0.01	0\\
0.06	0.0103375984952843\\
0.11	0.0370558330903981\\
0.16	0.0516187663588555\\
0.21	0.0740289755541888\\
0.26	0.0864329020875467\\
0.31	0.0882866344345864\\
0.36	0.112724606083063\\
0.41	0.123284852151923\\
0.46	0.128187480017111\\
0.51	0.156472179987195\\
0.56	0.153374405177091\\
0.61	0.202944588693041\\
0.66	0.236396651422193\\
0.71	0.231889855242796\\
};
\addplot [color=blue,only marks,mark=o,mark size=3,mark options={solid},forget plot]
 plot [error bars/.cd, y dir = both, y explicit]
table[row sep=crcr, x index=0, y index=1, y error index=2]{%
0.01	0	0	0\\
0.06	0.0103375984952843	0.0151101574660005	0.0151101574660005\\
0.11	0.0370558330903981	0.0277405576306795	0.0277405576306795\\
0.16	0.0516187663588555	0.0395909651879822	0.0395909651879822\\
0.21	0.0740289755541888	0.0454525922235851	0.0454525922235851\\
0.26	0.0864329020875467	0.0477815649251956	0.0477815649251956\\
0.31	0.0882866344345864	0.0420648147810222	0.0420648147810222\\
0.36	0.112724606083063	0.0611464412855034	0.0611464412855034\\
0.41	0.123284852151923	0.0745680472651365	0.0745680472651365\\
0.46	0.128187480017111	0.0690753164059837	0.0690753164059837\\
0.51	0.156472179987195	0.0803385361107129	0.0803385361107129\\
0.56	0.153374405177091	0.0893786253995217	0.0893786253995217\\
0.61	0.202944588693041	0.128523571950985	0.128523571950985\\
0.66	0.236396651422193	0.159877408753085	0.159877408753085\\
0.71	0.231889855242796	0.157885505863325	0.157885505863325\\
};
\end{axis}
\end{tikzpicture}%
\caption{Percentage of upsets in the top-$k$ (the lower the better) as a function of $k$ (left) and percentage of upsets in the top $10$ (the lower the better) as a function of the percentage of missing comparisons, for various recent methods, i.e., least-squares (green squares \cite{YaoRanking}), Serialrank (magenta triangles \cite{FogelSpectralRanking}), Spectral Sync-Rank (blue circles \cite{Cucuringu}), dilation Laplacian (red crosses, $g = 0.3$). \label{Fig:TopK}}
\end{figure}
For the same kind of artificial graph, the number of upsets in the top $10$ is displayed with respect the the percentage of missing comparisons in Figure~\ref{Fig:TopK}.

In order to assess the statistical significance of the results, the rankings given by the dilation Laplacian with $g = 0.3$ and Hodgerank are compared separately in Figure~\ref{Hodge_vs_Dilation_Upsets03} on a graph of $200$ nodes with randomly sampled missing comparisons. Indeed, for $g=0.3$, the dilation Laplacian yields significantly fewer upsets in the top 20, however we observe that Kendall's $\tau$-distance with the true ranking is sightly higher than Hodgerank. Hence, there is a trade-off between a low number of upsets in the top part and a low Kendall's $\tau$-distance.
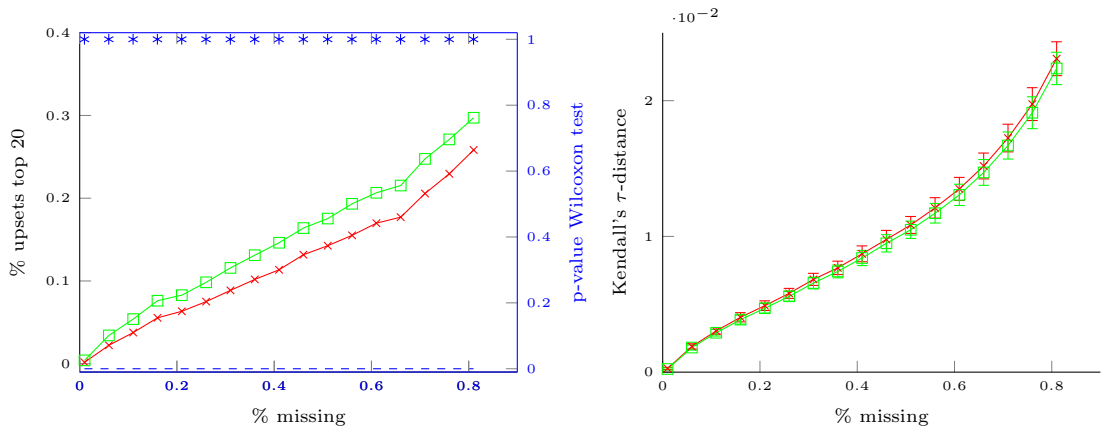
\begin{figure}[h]
\setlength\figureheight{4.5cm} 
\setlength\figurewidth{0.35\textwidth}
%
%
\begin{tikzpicture}

\begin{axis}[%
width=\figurewidth,
height=\figureheight,
at={(0.769in,0.47in)},
scale only axis,
xmin=0,
xmax=0.9,
xlabel={\% missing},
ymin=-0.01,
ymax=0.4,
ylabel={\% upsets top 20},
axis background/.style={fill=white},
axis x line*=bottom,
axis y line*=left
]
\addplot [color=red,solid,forget plot]
  table[row sep=crcr]{%
0.01	0.00168233763812507\\
0.06	0.0222836411305742\\
0.11	0.0376758257455462\\
0.16	0.0554851916672363\\
0.21	0.063391685031637\\
0.26	0.0750587237099405\\
0.31	0.0886389283595967\\
0.36	0.101907622921353\\
0.41	0.113511614636232\\
0.46	0.131729493982567\\
0.51	0.142605086338506\\
0.56	0.155114942716854\\
0.61	0.169826793798427\\
0.66	0.177099256491181\\
0.71	0.205688074214268\\
0.76	0.229521723574946\\
0.81	0.258271891612667\\
};
\addplot [color=red,only marks,mark=x,mark options={solid},forget plot]
 table[row sep=crcr, y error plus index=2, y error minus index=3]{%
0.01	0.00168233763812507	0.00311475822393703	0.00311475822393703\\
0.06	0.0222836411305742	0.0105685465236575	0.0105685465236575\\
0.11	0.0376758257455462	0.0147417417650764	0.0147417417650764\\
0.16	0.0554851916672363	0.0198596533503421	0.0198596533503421\\
0.21	0.063391685031637	0.0211686736327685	0.0211686736327685\\
0.26	0.0750587237099405	0.0221283089988737	0.0221283089988737\\
0.31	0.0886389283595967	0.025470695702589	0.025470695702589\\
0.36	0.101907622921353	0.0312587148765036	0.0312587148765036\\
0.41	0.113511614636232	0.0364029116190535	0.0364029116190535\\
0.46	0.131729493982567	0.0385947783505076	0.0385947783505076\\
0.51	0.142605086338506	0.0445477550484087	0.0445477550484087\\
0.56	0.155114942716854	0.0501257074747034	0.0501257074747034\\
0.61	0.169826793798427	0.0497183098460833	0.0497183098460833\\
0.66	0.177099256491181	0.0571436141533203	0.0571436141533203\\
0.71	0.205688074214268	0.0691597444114423	0.0691597444114423\\
0.76	0.229521723574946	0.0820119894988833	0.0820119894988833\\
0.81	0.258271891612667	0.0913488984801351	0.0913488984801351\\
};
\addplot [color=green,solid,forget plot]
  table[row sep=crcr]{%
0.01	0.0040766325549533\\
0.06	0.0342120318809598\\
0.11	0.0539485544524387\\
0.16	0.0761528458008495\\
0.21	0.0828665319744654\\
0.26	0.0983850432942352\\
0.31	0.115783526912681\\
0.36	0.131158795579975\\
0.41	0.146200159355363\\
0.46	0.163928744503091\\
0.51	0.175417976531832\\
0.56	0.193218769626461\\
0.61	0.206565655031674\\
0.66	0.21531962176492\\
0.71	0.247515508940072\\
0.76	0.27106147178917\\
0.81	0.297226138458661\\
};
\addplot [color=green,only marks,mark=square,mark options={solid},forget plot]
 table[row sep=crcr, y error plus index=2, y error minus index=3]{%
0.01	0.0040766325549533	0.00464014957351686	0.00464014957351686\\
0.06	0.0342120318809598	0.0128266009098805	0.0128266009098805\\
0.11	0.0539485544524387	0.0176631275310949	0.0176631275310949\\
0.16	0.0761528458008495	0.0230650520378204	0.0230650520378204\\
0.21	0.0828665319744654	0.0259485297167543	0.0259485297167543\\
0.26	0.0983850432942352	0.0261082672998957	0.0261082672998957\\
0.31	0.115783526912681	0.0310454052412304	0.0310454052412304\\
0.36	0.131158795579975	0.0384262410855951	0.0384262410855951\\
0.41	0.146200159355363	0.0410041699948762	0.0410041699948762\\
0.46	0.163928744503091	0.0478183819435901	0.0478183819435901\\
0.51	0.175417976531832	0.0485113071942895	0.0485113071942895\\
0.56	0.193218769626461	0.054241786920021	0.054241786920021\\
0.61	0.206565655031674	0.0590013462515479	0.0590013462515479\\
0.66	0.21531962176492	0.0656927499655219	0.0656927499655219\\
0.71	0.247515508940072	0.0725005029372706	0.0725005029372706\\
0.76	0.27106147178917	0.0931404684658391	0.0931404684658391\\
0.81	0.297226138458661	0.102197180137503	0.102197180137503\\
};
\end{axis}

\begin{axis}[%
blue,
width=\figurewidth,
height=\figureheight,
at={(0.769in,0.47in)},
scale only axis,
axis y line*=right,
xmin=0,
xmax=0.9,
ymin=-0.01,
ymax=1.02,
ylabel={p-value Wilcoxon test},
]
\addplot [color=blue,dashed,forget plot]
  table[row sep=crcr]{%
0.01	2.47307015262576e-13\\
0.06	5.38146906919442e-29\\
0.11	1.36435094620443e-31\\
0.16	1.16625112149376e-31\\
0.21	4.4512755056805e-29\\
0.26	4.70315417461643e-30\\
0.31	8.18254673959954e-29\\
0.36	3.38745833736857e-28\\
0.41	1.63313948772205e-26\\
0.46	1.79523770992387e-24\\
0.51	5.25825186668884e-21\\
0.56	2.24971242748843e-20\\
0.61	2.46175966042319e-20\\
0.66	2.40535671870766e-16\\
0.71	9.30398991547274e-15\\
0.76	1.2548435754353e-11\\
0.81	5.64299321774058e-08\\
};
\addplot [color=blue,only marks,mark=asterisk,mark options={solid},forget plot]
  table[row sep=crcr]{%
0.01	1\\
0.06	1\\
0.11	1\\
0.16	1\\
0.21	1\\
0.26	1\\
0.31	1\\
0.36	1\\
0.41	1\\
0.46	1\\
0.51	1\\
0.56	1\\
0.61	1\\
0.66	1\\
0.71	1\\
0.76	1\\
0.81	1\\
};
\end{axis}
\end{tikzpicture}%
%
%
\begin{tikzpicture}

\begin{axis}[%
width=\figurewidth,
height=\figureheight,
at={(0.758in,0.481in)},
scale only axis,
xmin=0,
xmax=0.9,
xlabel={\% missing},
ymin=0,
ymax=0.025,
ylabel={$\text{Kendall's }\tau\text{-distance}$},
axis background/.style={fill=white},
axis x line*=bottom,
axis y line*=left
]
\addplot [color=red,solid,forget plot]
  table[row sep=crcr]{%
0.01	0.000270226130653267\\
0.06	0.00189195979899498\\
0.11	0.00303492462311558\\
0.16	0.00404246231155779\\
0.21	0.00489761306532663\\
0.26	0.00580263819095477\\
0.31	0.00683379396984924\\
0.36	0.00768190954773869\\
0.41	0.00871695979899498\\
0.46	0.00977751256281407\\
0.51	0.0108430904522613\\
0.56	0.0120998743718593\\
0.61	0.0134983668341709\\
0.66	0.0151880653266332\\
0.71	0.0172464824120603\\
0.76	0.0197456030150754\\
0.81	0.0230905778894472\\
};
\addplot [color=red,only marks,mark=x,mark options={solid},forget plot]
 plot [error bars/.cd, y dir = both, y explicit]
 table[row sep=crcr, y error plus index=2, y error minus index=3]{%
0.01	0.000270226130653267	8.01497860378433e-05	8.01497860378433e-05\\
0.06	0.00189195979899498	0.000195305679829077	0.000195305679829077\\
0.11	0.00303492462311558	0.000236727887861604	0.000236727887861604\\
0.16	0.00404246231155779	0.000339625130795651	0.000339625130795651\\
0.21	0.00489761306532663	0.00035788261891456	0.00035788261891456\\
0.26	0.00580263819095477	0.000367920275098329	0.000367920275098329\\
0.31	0.00683379396984924	0.000445786084712163	0.000445786084712163\\
0.36	0.00768190954773869	0.000493519694286142	0.000493519694286142\\
0.41	0.00871695979899498	0.000583986778380399	0.000583986778380399\\
0.46	0.00977751256281407	0.000665864985768637	0.000665864985768637\\
0.51	0.0108430904522613	0.00062440340430585	0.00062440340430585\\
0.56	0.0120998743718593	0.000752072692023802	0.000752072692023802\\
0.61	0.0134983668341709	0.000848727577938287	0.000848727577938287\\
0.66	0.0151880653266332	0.000952831410447162	0.000952831410447162\\
0.71	0.0172464824120603	0.00101957595560404	0.00101957595560404\\
0.76	0.0197456030150754	0.00121023655950762	0.00121023655950762\\
0.81	0.0230905778894472	0.00124773868151769	0.00124773868151769\\
};
\addplot [color=green,solid,forget plot]
  table[row sep=crcr]{%
0.01	0.00023178391959799\\
0.06	0.00181155778894472\\
0.11	0.00290201005025126\\
0.16	0.00386896984924623\\
0.21	0.00470967336683417\\
0.26	0.00558869346733668\\
0.31	0.00656356783919598\\
0.36	0.00742022613065327\\
0.41	0.00840741206030151\\
0.46	0.00949057788944724\\
0.51	0.0104886934673367\\
0.56	0.0117072864321608\\
0.61	0.0130600502512563\\
0.66	0.0147131909547739\\
0.71	0.0166958542713568\\
0.76	0.0191130653266332\\
0.81	0.0223752512562814\\
};
\addplot [color=green,only marks,mark=square,mark options={solid},forget plot]
 plot [error bars/.cd, y dir = both, y explicit]
 table[row sep=crcr, y error plus index=2, y error minus index=3]{%
0.01	0.00023178391959799	7.86692684054201e-05	7.86692684054201e-05\\
0.06	0.00181155778894472	0.00018506395006968	0.00018506395006968\\
0.11	0.00290201005025126	0.000240548585088628	0.000240548585088628\\
0.16	0.00386896984924623	0.000326117790697218	0.000326117790697218\\
0.21	0.00470967336683417	0.000353079311323086	0.000353079311323086\\
0.26	0.00558869346733668	0.000361793174768157	0.000361793174768157\\
0.31	0.00656356783919598	0.000432187449812576	0.000432187449812576\\
0.36	0.00742022613065327	0.000473807935546208	0.000473807935546208\\
0.41	0.00840741206030151	0.000558954191338472	0.000558954191338472\\
0.46	0.00949057788944724	0.000639306022103855	0.000639306022103855\\
0.51	0.0104886934673367	0.000636604925115369	0.000636604925115369\\
0.56	0.0117072864321608	0.000723457924905005	0.000723457924905005\\
0.61	0.0130600502512563	0.000784471020203305	0.000784471020203305\\
0.66	0.0147131909547739	0.000949490503049446	0.000949490503049446\\
0.71	0.0166958542713568	0.000998206069981178	0.000998206069981178\\
0.76	0.0191130653266332	0.00117453542276173	0.00117453542276173\\
0.81	0.0223752512562814	0.00119005478870751	0.00119005478870751\\
};
\end{axis}
\end{tikzpicture}%
\caption{On the left, mean percentage of upsets in the top 20 for $g=0.3$ (left axis). The least-squares (green squares \cite{YaoRanking}) and dilation Laplacian (red crosses, $g = 0.3$) are compared. The simulation was repeated 200 times on a graph of 200 nodes. The dashed blue line is the $p$-value of the Wilcoxon signed rank test, while the blue stars indicate the result of the test at the 5$\%$ significance level (right axis). On the right, Kendall's $\tau$-distance with the known ranking (error bars are standard deviations).\label{Hodge_vs_Dilation_Upsets03}}
\end{figure}

To summarize, in the comparison between Hodgerank and the dilation ranking, we observe that:
\begin{itemize}
\item for $g=0.1$, both methods yield a similar Kendall's $\tau$-distance (see, \emph{lhs} of Figure~\ref{Fig:Missing_100_50repeats}) while the dilation ranking has fewer upsets in its top part (see, Figure~\ref{Hodge_vs_Dilation_Upsets}).
\item for $g=0.3$, the dilation ranking yields a worse Kendall's $\tau$-distance (see, \emph{rhs} of Figure~\ref{Hodge_vs_Dilation_Upsets03}), however the number of upsets in its top part is reduced.
\end{itemize}

\subsubsection{Corrupted comparisons and inconsistencies}
We recall that the normalized score $|v_0^{(g)}|$ is determined in order to minimize the objective function 
\[
\sum_{\{i,j\in V|[i,j]\in E\}}\Big[D_gv^{(g)}_0\Big]_{ij}^2 = \lambda_0^{(g)},
\]
with the deformed gradient $[D_gv^{(g)}_0]_{ij} = e^{ga_{ij}/2}[v_{0}^{(g)}]_j-e^{-ga_{ij}/2}[v_{0}^{(g)}]_i$ for all $[i,j]\in E$. The absolute values of the deformed gradient on the edges of the graph effectively rank the inconsistent comparisons.
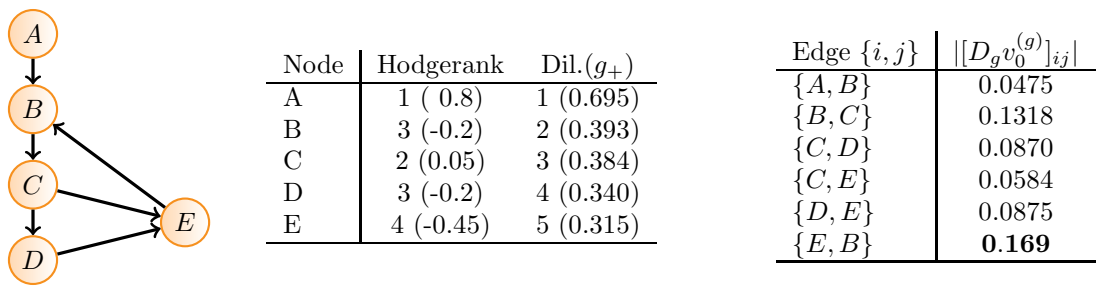
\begin{figure}[h]
\centering
\begin{minipage}{0.2\textwidth}
\begin{tikzpicture}[auto, thick]
\foreach \place/\name in {{(1,4)/a}, {(1,3)/b}, {(1,2)/c}, {(1,1)/d},{(3,1.5)/e}}
    \node[Node2] (\name) at \place {};
\foreach \source/\dest in {a/b,b/c,c/d,d/e,e/b,c/e}
    \path[->,very thick] (\source) edge (\dest);
\draw (1,4) node{$A$};
\draw (1,3) node{$B$};
\draw (1,2) node{$C$};
\draw (1,1) node{$D$};
\draw (3,1.5) node{$E$};
\end{tikzpicture}
\end{minipage}
\begin{minipage}{0.4\textwidth}
\begin{tabular}{l |  c c}
   Node &  Hodgerank &  Dil.($g_+$)\\
   \hline
A   &  1 ( 0.8)  & 1 (0.695)\\
B   &  3 (-0.2)  & 2 (0.393)\\
C   &  2 (0.05)  & 3 (0.384)\\
D   &  3 (-0.2)  & 4 (0.340)\\
E   &  4 (-0.45)  & 5 (0.315)\\
  \hline  
\end{tabular}
\end{minipage}
\begin{minipage}{0.25\textwidth}
\begin{tabular}{l | c }
Edge $\{i,j\}$ & $|[D_gv^{(g)}_0]_{ij}|$\\
\hline
$\{A,B\}$  & $0.0475$\\
$\{B,C\}$  & $0.1318$\\
$\{C,D\}$  & $0.0870$\\
$\{C,E\}$  & $0.0584$\\
$\{D,E\}$  & $0.0875$\\
$\{E,B\}$  & ${\bf 0.169}$ \\
\hline
\end{tabular}
\end{minipage}
\caption{\label{Fig:DirectComparison} On the left, rankings and scores (in parenthesis)  with an inconsistency. On the right, absolute value of the deformed gradient $|[D_g v^{(g)}_0]_{ij}|$ on the edges of the graph, for $g = 0.3$ (the largest value is in bold).The residual of the Hodgerank method is smaller than $10^{-14}$, while for the dilation Laplacian, we have $g= 0.3$ with $\lambda_0^{(g)} = 0.1054$.} 
\end{figure}

For a larger value of $g$, the dilation Laplacian allows to give more importance to the direct comparisons in the top part of the ranking. In this case, the dilation Laplacian method differs even more significantly from Hodgerank. This effect is illustrated in Figure \ref{Fig:DirectComparison}, where the edge between $B$ and $E$ may be seen as a corruption. In this example, Hodgerank gives an equal score to $B$ and $D$, and, furthermore, the score of $B$ is lower than the score of $C$. This is in contradiction with the direct measurements $B\succ C$, and $E\succ B$, i.e., there are two upsets. For the value $g = 0.3$, the score obtained by the dilation Laplacian emphasizes the direct comparison $B\succ C$, so that the score of $B$ is larger than the scores of $C$, $D$ and $E$. Indeed, in the graph of Figure~\ref{Fig:DirectComparison}, the absolute value of the deformed gradient is found to be the largest on the edge $\{E,B\}$. Namely, this example illustrates that the deformed gradient can rank the inconsistent edges. The least-squares method has a residual close to zero ($\leq 10^{-14}$), so that a similar treatment of inconsistencies can not be done in the same way.

As a matter of fact, inconsistencies are difficult to remove. Here, among the two alternative, it seems more intuitive to assume that $B$ is better than $E$ since $B\succ C \succ D\succ E$, so that the dilation Laplacian method gives a sensible result with less ties (the ranking has only one upset).
Finally, Sync-Rank gives a ranking where the only upset is the edge from $B$ to $C$, which is an admissible possibility.

In order to illustrate the methods in the presence of corruptions, an artificial graph with $50$ percent of missing comparisons (sampled uniformly at random) was built following the same methodology as before. An additional percentage of the remaining comparisons is then corrupted, i.e., the comparisons are reversed.
In Figure~\ref{Fig:Top10MissingCorrupted}, we observe that Serialrank gives a lower number of upsets in the top $10$ for a large number of corrupted comparisons. It is known that Serialrank is more robust to corruptions since its similarity measure relies on repeated comparisons. This feature may be desirable although Serialrank has a worse performance in the absence of corruptions. Furthermore, Serialrank can not rank the nodes in the absence of repeated comparisons, for instance, in the case of the line graph of Figure \ref{Fig:Line}. For a small number of corruptions, the dilation Laplacian method gives a smaller number of upsets.

\begin{figure}[h]\centering
\setlength\figureheight{4cm} 
\setlength\figurewidth{12cm}
%
%
\definecolor{mycolor1}{rgb}{1.00000,0.00000,1.00000}%
\begin{tikzpicture}

\begin{axis}[%
width=\figurewidth,
height=\figureheight,
at={(0.769in,0.47in)},
scale only axis,
xmin=-0.001,
xmax=0.1,
ymin=0,
ymax=0.45,
axis background/.style={fill=white},
ylabel={\% upsets in top $10$},
xlabel={\% corrupted},
axis x line*=bottom,
axis y line*=left
]
\addplot [color=red,solid,forget plot]
  table[row sep=crcr]{%
0	0.0676744851113757\\
0.005	0.100241308190334\\
0.01	0.111106619089193\\
0.015	0.15875015040375\\
0.02	0.153728685914404\\
0.025	0.183628812307589\\
0.03	0.190383437708916\\
0.035	0.21417155361483\\
0.04	0.22329484379132\\
0.045	0.234815043455361\\
0.05	0.240072027511914\\
0.055	0.25895555857038\\
0.06	0.278917233247743\\
0.065	0.281868360705961\\
0.07	0.283132641031526\\
0.075	0.298400864371453\\
0.08	0.273703994155798\\
0.085	0.315804348734972\\
0.09	0.292954991030654\\
0.095	0.327797042022173\\
0.1	0.318333033037643\\
};
\addplot [color=red,only marks,mark=x,mark size=3,mark options={solid},forget plot]
 plot [error bars/.cd, y dir = both, y explicit]
table[row sep=crcr, x index=0, y index=1, y error index=2]{%
0	0.0676744851113757	0.0435168024925222	0.0435168024925222\\
0.005	0.100241308190334	0.0560338396810968	0.0560338396810968\\
0.01	0.111106619089193	0.0528761659481602	0.0528761659481602\\
0.015	0.15875015040375	0.0653692651482659	0.0653692651482659\\
0.02	0.153728685914404	0.0737228028438698	0.0737228028438698\\
0.025	0.183628812307589	0.0687395886174842	0.0687395886174842\\
0.03	0.190383437708916	0.0810929119457592	0.0810929119457592\\
0.035	0.21417155361483	0.0836127365863232	0.0836127365863232\\
0.04	0.22329484379132	0.0840388440507911	0.0840388440507911\\
0.045	0.234815043455361	0.0831855154406194	0.0831855154406194\\
0.05	0.240072027511914	0.0879373366581172	0.0879373366581172\\
0.055	0.25895555857038	0.08593661453068	0.08593661453068\\
0.06	0.278917233247743	0.0960408456448024	0.0960408456448024\\
0.065	0.281868360705961	0.0806655452122433	0.0806655452122433\\
0.07	0.283132641031526	0.0911908275396911	0.0911908275396911\\
0.075	0.298400864371453	0.110898921022213	0.110898921022213\\
0.08	0.273703994155798	0.0804092818674277	0.0804092818674277\\
0.085	0.315804348734972	0.0919946792660921	0.0919946792660921\\
0.09	0.292954991030654	0.0829402316174874	0.0829402316174874\\
0.095	0.327797042022173	0.0886816158266547	0.0886816158266547\\
0.1	0.318333033037643	0.0898915166090912	0.0898915166090912\\
};
\addplot [color=green,solid,forget plot]
  table[row sep=crcr]{%
0	0.0930333184243449\\
0.005	0.12376839353361\\
0.01	0.129979363902791\\
0.015	0.169674258016222\\
0.02	0.168733935661949\\
0.025	0.200317388278976\\
0.03	0.200889634824745\\
0.035	0.227604189645991\\
0.04	0.221347252984056\\
0.045	0.240852644541034\\
0.05	0.254082305418905\\
0.055	0.271330261276629\\
0.06	0.280019448975651\\
0.065	0.272591551749135\\
0.07	0.277629129401842\\
0.075	0.299383369995899\\
0.08	0.275915208013246\\
0.085	0.317283831335091\\
0.09	0.299904933616241\\
0.095	0.325337487436939\\
0.1	0.312451775671816\\
};
\addplot [color=green,only marks,mark=square,mark size=3,mark options={solid},forget plot]
 plot [error bars/.cd, y dir = both, y explicit]
table[row sep=crcr, x index=0, y index=1, y error index=2]{%
0	0.0930333184243449	0.0465183044366574	0.0465183044366574\\
0.005	0.12376839353361	0.0561974514993484	0.0561974514993484\\
0.01	0.129979363902791	0.0567307336268805	0.0567307336268805\\
0.015	0.169674258016222	0.0625078832963222	0.0625078832963222\\
0.02	0.168733935661949	0.0850145678951732	0.0850145678951732\\
0.025	0.200317388278976	0.0779108611773723	0.0779108611773723\\
0.03	0.200889634824745	0.0834389662437914	0.0834389662437914\\
0.035	0.227604189645991	0.0904368324060344	0.0904368324060344\\
0.04	0.221347252984056	0.0887389093927328	0.0887389093927328\\
0.045	0.240852644541034	0.0847969257459724	0.0847969257459724\\
0.05	0.254082305418905	0.0920872694822257	0.0920872694822257\\
0.055	0.271330261276629	0.0902230687487305	0.0902230687487305\\
0.06	0.280019448975651	0.0931937755819433	0.0931937755819433\\
0.065	0.272591551749135	0.0724319823581288	0.0724319823581288\\
0.07	0.277629129401842	0.0923305289715321	0.0923305289715321\\
0.075	0.299383369995899	0.11688331303817	0.11688331303817\\
0.08	0.275915208013246	0.0918985104684488	0.0918985104684488\\
0.085	0.317283831335091	0.100587902548595	0.100587902548595\\
0.09	0.299904933616241	0.0943201171306574	0.0943201171306574\\
0.095	0.325337487436939	0.0863733536834554	0.0863733536834554\\
0.1	0.312451775671816	0.0861076276850679	0.0861076276850679\\
};
\addplot [color=mycolor1,solid,mark=triangle,mark size=3,mark options={solid},forget plot]
  table[row sep=crcr]{%
0	0.117463778337502\\
0.005	0.137508657848278\\
0.01	0.135806108081153\\
0.015	0.170784879368886\\
0.02	0.158271934034544\\
0.025	0.17939347630652\\
0.03	0.159680312208246\\
0.035	0.193737426642608\\
0.04	0.204162892360393\\
0.045	0.216460379591018\\
0.05	0.205377410887043\\
0.055	0.217292532881272\\
0.06	0.237391213918402\\
0.065	0.24730389528397\\
0.07	0.239057732740264\\
0.075	0.248124646029233\\
0.08	0.248032977148876\\
0.085	0.248150451829177\\
0.09	0.259236064576348\\
0.095	0.312939638969038\\
0.1	0.31116073865682\\
};
\addplot [color=mycolor1,only marks,mark=triangle,mark size=3,mark options={solid},forget plot]
 plot [error bars/.cd, y dir = both, y explicit]
table[row sep=crcr, x index=0, y index=1, y error index=2]{%
0	0.117463778337502	0.0596259362278871	0.0596259362278871\\
0.005	0.137508657848278	0.0629986095600844	0.0629986095600844\\
0.01	0.135806108081153	0.0629636520451169	0.0629636520451169\\
0.015	0.170784879368886	0.0676499219354841	0.0676499219354841\\
0.02	0.158271934034544	0.0691940782045909	0.0691940782045909\\
0.025	0.17939347630652	0.0729301863963973	0.0729301863963973\\
0.03	0.159680312208246	0.0728160891540194	0.0728160891540194\\
0.035	0.193737426642608	0.0788930617977241	0.0788930617977241\\
0.04	0.204162892360393	0.0915283808856313	0.0915283808856313\\
0.045	0.216460379591018	0.0835066536265645	0.0835066536265645\\
0.05	0.205377410887043	0.0763991614838886	0.0763991614838886\\
0.055	0.217292532881272	0.0990640295136563	0.0990640295136563\\
0.06	0.237391213918402	0.0903519676801555	0.0903519676801555\\
0.065	0.24730389528397	0.0773501787888409	0.0773501787888409\\
0.07	0.239057732740264	0.0743903810281514	0.0743903810281514\\
0.075	0.248124646029233	0.103848766716969	0.103848766716969\\
0.08	0.248032977148876	0.0859183983353929	0.0859183983353929\\
0.085	0.248150451829177	0.0802177943424586	0.0802177943424586\\
0.09	0.259236064576348	0.0878931907058181	0.0878931907058181\\
0.095	0.312939638969038	0.0974911718833788	0.0974911718833788\\
0.1	0.31116073865682	0.0882015071906943	0.0882015071906943\\
};
\addplot [color=blue,solid,mark=o,mark size=3,mark options={solid},forget plot]
  table[row sep=crcr]{%
0	0.0862928425387077\\
0.005	0.119141923565809\\
0.01	0.12322172695682\\
0.015	0.165131300257103\\
0.02	0.168504687556404\\
0.025	0.200054115481699\\
0.03	0.196073114979198\\
0.035	0.226871878902971\\
0.04	0.224794348157327\\
0.045	0.237647235604265\\
0.05	0.259405059809214\\
0.055	0.262690835057708\\
0.06	0.286585844000727\\
0.065	0.276391387088753\\
0.07	0.276982369713073\\
0.075	0.301065101600531\\
0.08	0.274329253617831\\
0.085	0.326686071144215\\
0.09	0.303173121640697\\
0.095	0.33349879326481\\
0.1	0.318188119874554\\
};
\addplot [color=blue,only marks,mark=o,mark size=3,mark options={solid},forget plot]
 plot [error bars/.cd, y dir = both, y explicit]
table[row sep=crcr, x index=0, y index=1, y error index=2]{%
0	0.0862928425387077	0.0470192322633289	0.0470192322633289\\
0.005	0.119141923565809	0.057677258067284	0.057677258067284\\
0.01	0.12322172695682	0.052684736449638	0.052684736449638\\
0.015	0.165131300257103	0.0614033917301455	0.0614033917301455\\
0.02	0.168504687556404	0.0809614253523052	0.0809614253523052\\
0.025	0.200054115481699	0.0773327250074487	0.0773327250074487\\
0.03	0.196073114979198	0.0821284665483353	0.0821284665483353\\
0.035	0.226871878902971	0.0802642916434888	0.0802642916434888\\
0.04	0.224794348157327	0.0830813306223071	0.0830813306223071\\
0.045	0.237647235604265	0.0814142095704228	0.0814142095704228\\
0.05	0.259405059809214	0.0945996776599062	0.0945996776599062\\
0.055	0.262690835057708	0.0917964537307278	0.0917964537307278\\
0.06	0.286585844000727	0.0960923352317236	0.0960923352317236\\
0.065	0.276391387088753	0.0751913931055894	0.0751913931055894\\
0.07	0.276982369713073	0.089638284848412	0.089638284848412\\
0.075	0.301065101600531	0.124326947386108	0.124326947386108\\
0.08	0.274329253617831	0.086021510225749	0.086021510225749\\
0.085	0.326686071144215	0.0933567614076073	0.0933567614076073\\
0.09	0.303173121640697	0.0864654070492151	0.0864654070492151\\
0.095	0.33349879326481	0.0871676209146329	0.0871676209146329\\
0.1	0.318188119874554	0.0797054917408921	0.0797054917408921\\
};
\end{axis}
\end{tikzpicture}%
\caption{Percentage of upsets in the top $10$ (the lower the better) in a set of $100$ objects with $50$ percent of missing comparisons and  various percentage of corrupted comparisons. The methods compared are least-squares (green squares \cite{YaoRanking}), Serialrank (magenta triangles \cite{FogelSpectralRanking}), Spectral Sync-Rank (blue circles \cite{Cucuringu}), dilation Laplacian (red crosses, $g = 0.3$). \label{Fig:Top10MissingCorrupted}}
\end{figure}
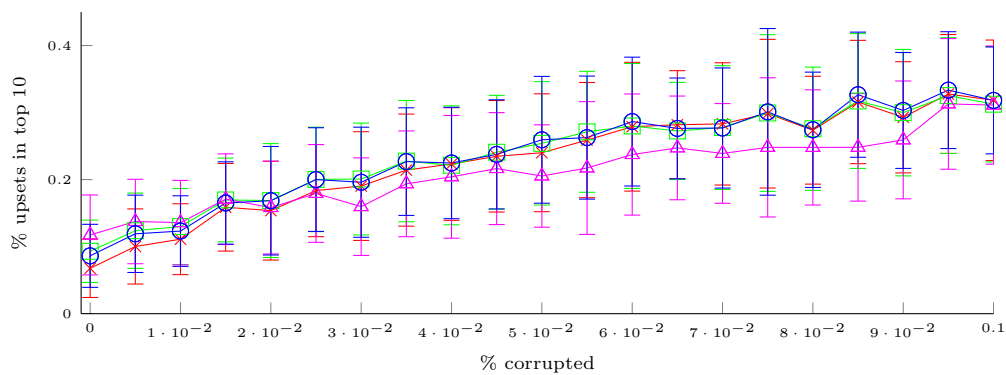

\subsection{Cardinal comparisons}

\subsubsection{Cardinal comparisons in the absence of noise}
As an application of the dilation Laplacian,  our method is used for ranking currencies based on the data taken from the Currency Converter Yahoo! Finance on $6$ November 2007, as considered in \cite{YaoRanking}.

\begin{table}[h]
\begin{minipage}{.5\linewidth}
\begin{tabular}{lccr}
   Currency & Dilation &  Hodgerank & Ratio\\
   \hline
USD   & 0.3118 &    1.7097 &  0.1823\\
JPY   & 0.0027 &    0.0149 &  0.1823  \\
EUR   & 0.4539 &    2.4889 &  0.1824  \\
CAD   & 0.3393 &    1.8610 &  0.1823  \\
GBP   & 0.6508 &    3.5690 &  0.1824  \\
AUD   & 0.2895 &    1.5878 &  0.1823  \\
CHF   & 0.2725 &    1.4946 &  0.1824  \\
  \hline
\end{tabular}
\centering\caption{Universal equivalent of currencies. \label{Currencies}}
\end{minipage}
\begin{minipage}{.5\linewidth}
\vspace{0.75cm}
\begin{tabular}{lcr}
   Official & SerialRank &  Dilation \\
   \hline
Man City & Man City   & Man City   \\
Liverpool & Chelsea    & Liverpool   \\
Chelsea   & Liverpool  & Chelsea   \\
Arsenal  & Arsenal  & Arsenal   \\
Everton  & Everton   & Everton   \\
   \hline
      &     &     \\
\end{tabular}
\centering\caption{Top 5 ranking of the English premier league 2013-2014 ($g = 0.1$). \label{League}}
\end{minipage}

\end{table}
The least-squares method of Hodgerank \cite{YaoRanking} provides a universal equivalent of the currencies. The dilation Laplacian is constructed by choosing $W$ as a binary weight matrix and the exchange rate between two currencies directly provides us with $s_{ij} = e^{a_{ij}}$ and  the inverse exchange rate $s_{ji} =e^{-a_{ij}}$. The eigenvector $v_0$ gives us directly the universal equivalent currency, so that the dilation Laplacian is here advantageous with respect to other spectral methods. Table \ref{Currencies} shows that the universal currencies found by our method and Hodgerank are proportional.
The ratio of the universal equivalent of each currency is used in Table \ref{Currencies} for the comparison of the methods. Because a universal equivalent currency is always defined up to a global factor, both methods give equivalent results. The accuracy of the method based on the dilation Laplacian is given by the smallest eigenvalue $\lambda_0 < 10^{-5}$. Typically, this financial network has no inconsistency in absence of arbitrage opportunities. Naturally, in this case, the eigenvector $v_0$ has to be normalized so that the product of its values on the objects is equal to unity.

The problem of ranking teams in the England premier league season 2013-2014 was also addressed using the dilation Laplacian.
There exist many different possible preprocessings of the data in order to construct the network of pairwise comparisons, and which influence the ranking \cite{Cucuringu}. However, we have chosen one of the simplest constructions, which does not take into account the number of goals scored. 
The comparisons between teams are defined as in \cite{FogelSpectralRanking}, as the average result (win (+1), loss (-1) or tie (0)) of home and away matches for each pair of teams (i.e., for instance, $a_{ij} =1/2$ results of the average of one victory of $i$ against $j$ and one tie).
Because each comparison has the same weight and since we use here $a_{ij}\in\{ -1,-1/2,0,1/2,1\}$, we choose to use the dilation Laplacian (\ref{DilationLaplacian}) with $g =0.1$.
In Table \ref{League}, we observe that the ranking obtained by the dilation Laplacian reproduces the official ranking for the top ranked teams, whereas it differs from it in the lower part of the ranking.
In Figure \ref{Fig:League}, the number of upsets in various top-$k$ for the ranking of the England Premier League 2013-2014 is displayed. The ranking obtained thanks to the dilation Laplacian often gives the smallest number of upsets.
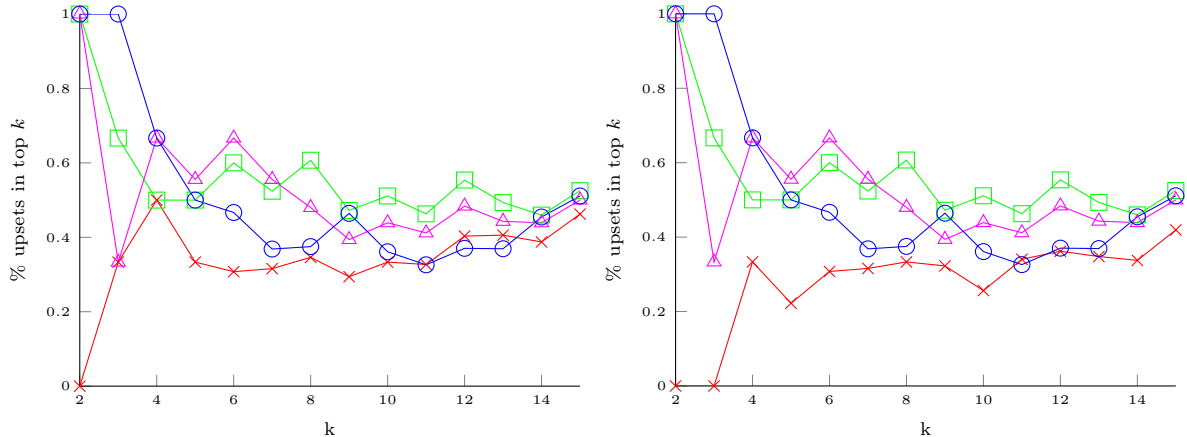
\begin{figure}[h]
\setlength\figureheight{0.3\textwidth} 
\setlength\figurewidth{0.4\textwidth}
%
%
\definecolor{mycolor1}{rgb}{1.00000,0.00000,1.00000}%
\begin{tikzpicture}

\begin{axis}[%
width=\figurewidth,
height=\figureheight,
at={(1.276in,0.909in)},
scale only axis,
unbounded coords=jump,
xmin=2,
xmax=15,
ymin=0,
ymax=1,
axis background/.style={fill=white},
title style={font=\bfseries},
ylabel={$\%$ upsets in top $k$},
xlabel={k},
axis x line*=bottom,
axis y line*=left
]
\addplot [color=red,solid,mark=x,mark size=3,mark options={solid},forget plot]
  table[row sep=crcr]{%
1	nan\\
2	0\\
3	0.333333333333333\\
4	0.5\\
5	0.333333333333333\\
6	0.307692307692308\\
7	0.315789473684211\\
8	0.346153846153846\\
9	0.294117647058824\\
10	0.333333333333333\\
11	0.326923076923077\\
12	0.403225806451613\\
13	0.405797101449275\\
14	0.3875\\
15	0.462365591397849\\
};
\addplot [color=green,solid,mark=square,mark size=3,mark options={solid},forget plot]
  table[row sep=crcr]{%
1	nan\\
2	1\\
3	0.666666666666667\\
4	0.5\\
5	0.5\\
6	0.6\\
7	0.523809523809524\\
8	0.607142857142857\\
9	0.472222222222222\\
10	0.511111111111111\\
11	0.462962962962963\\
12	0.553846153846154\\
13	0.493333333333333\\
14	0.459770114942529\\
15	0.525252525252525\\
};
\addplot [color=mycolor1,solid,mark=triangle,mark size=3,mark options={solid},forget plot]
  table[row sep=crcr]{%
1	nan\\
2	1\\
3	0.333333333333333\\
4	0.666666666666667\\
5	0.555555555555556\\
6	0.666666666666667\\
7	0.555555555555556\\
8	0.48\\
9	0.393939393939394\\
10	0.439024390243902\\
11	0.411764705882353\\
12	0.483870967741935\\
13	0.442857142857143\\
14	0.439024390243902\\
15	0.5\\
};
\addplot [color=blue,solid,mark=o,mark size=3,mark options={solid},forget plot]
  table[row sep=crcr]{%
1	nan\\
2	1\\
3	1\\
4	0.666666666666667\\
5	0.5\\
6	0.466666666666667\\
7	0.368421052631579\\
8	0.375\\
9	0.464285714285714\\
10	0.361111111111111\\
11	0.326086956521739\\
12	0.37037037037037\\
13	0.369230769230769\\
14	0.454545454545455\\
15	0.511111111111111\\
};
\end{axis}
\end{tikzpicture}%
%
%
\definecolor{mycolor1}{rgb}{1.00000,0.00000,1.00000}%
\begin{tikzpicture}

\begin{axis}[%
width=\figurewidth,
height=\figureheight,
at={(1.166in,0.565in)},
scale only axis,
unbounded coords=jump,
xmin=2,
xmax=15,
ymin=0,
ymax=1,
axis background/.style={fill=white},
ylabel={$\%$ upsets in top $k$},
xlabel={k},
axis x line*=bottom,
axis y line*=left
]
\addplot [color=red,solid,mark=x,mark size=3,mark options={solid},forget plot]
  table[row sep=crcr]{%
1	nan\\
2	0\\
3	0\\
4	0.333333333333333\\
5	0.222222222222222\\
6	0.307692307692308\\
7	0.315789473684211\\
8	0.333333333333333\\
9	0.32258064516129\\
10	0.256410256410256\\
11	0.340425531914894\\
12	0.362068965517241\\
13	0.347826086956522\\
14	0.3375\\
15	0.419354838709677\\
};
\addplot [color=green,solid,mark=square,mark size=3,mark options={solid},forget plot]
  table[row sep=crcr]{%
1	nan\\
2	1\\
3	0.666666666666667\\
4	0.5\\
5	0.5\\
6	0.6\\
7	0.523809523809524\\
8	0.607142857142857\\
9	0.472222222222222\\
10	0.511111111111111\\
11	0.462962962962963\\
12	0.553846153846154\\
13	0.493333333333333\\
14	0.459770114942529\\
15	0.525252525252525\\
};
\addplot [color=mycolor1,solid,mark=triangle,mark size=3,mark options={solid},forget plot]
  table[row sep=crcr]{%
1	nan\\
2	1\\
3	0.333333333333333\\
4	0.666666666666667\\
5	0.555555555555556\\
6	0.666666666666667\\
7	0.555555555555556\\
8	0.48\\
9	0.393939393939394\\
10	0.439024390243902\\
11	0.411764705882353\\
12	0.483870967741935\\
13	0.442857142857143\\
14	0.439024390243902\\
15	0.5\\
};
\addplot [color=blue,solid,mark=o,mark size=3,mark options={solid},forget plot]
  table[row sep=crcr]{%
1	nan\\
2	1\\
3	1\\
4	0.666666666666667\\
5	0.5\\
6	0.466666666666667\\
7	0.368421052631579\\
8	0.375\\
9	0.464285714285714\\
10	0.361111111111111\\
11	0.326086956521739\\
12	0.37037037037037\\
13	0.369230769230769\\
14	0.454545454545455\\
15	0.511111111111111\\
};
\end{axis}
\end{tikzpicture}%
\caption{Number of upsets in the top-$k$ (the lower the better) in the England Premier League 2013-2014 with: least-squares (green squares \cite{YaoRanking}), Serialrank (magenta triangles \cite{FogelSpectralRanking}), Spectral Sync-Rank (blue circles \cite{Cucuringu}), dilation Laplacian (red crosses, left $g=0.1$ and right $g = 0.5$) \label{Fig:League}}
\end{figure}

\subsubsection{Noisy cardinal comparisons}
In this section, we assume that, in the absence of noise, the edge flow is given by $a_{ij} = r_i-r_j$, for all $i$ and $j\in V$ and where $r_i$ and $r_j\in \{1,\dots,N\}$.
Following \cite{Cucuringu}, the robustness of the ranking methods are also studied for noisy cardinal comparisons. Two noise models are considered:
\begin{itemize}
\item Multiplicative Uniform Noise: MUN$(N,p,\eta)$. In this model, the measurements are given by $a_{ij} = r_i-r_j +\epsilon_{ij}$ with probability $p$ and $a_{ij} =0$ with probability $1-p$. The noise follows the discrete uniform distribution $\epsilon_{ij}\sim \mathcal{U}([-\eta (r_i-r_j),\eta (r_i-r_j)])$. Furthermore, the maximum absolute value of $a_{ij}$ is restricted to be $N-1$ (see \cite{Cucuringu}, for more details).
\item Erd\"os-R\'enyi Outliers: ERO$(N,p,\eta)$. In this model, the measurements are given by $a_{ij} = r_i-r_j $ with probability $p(1-\eta)$, while we have a random noise $a_{ij} \sim \mathcal{U}([-(N-1),(N-1)])$ (discrete uniform distribution) with probability $p\eta$ and no comparison $a_{ij} =0$ with probability $1-p$.
\end{itemize}
As mentioned in \cite{Cucuringu}, the MUN model is more realistic, while ERO model considers both perfect measurements and purely random comparisons.

The dilation Laplacian method is not expected to be very robust with respect to noise, whereas Sync-Rank relies on a robust loss function. We include in this section the ranking obtained thanks to the centrality score $\mu$ obtained from Algorithm~\ref{AlgRW}. The results of the algorithms are illustrated in Figure~\ref{Fig:MUN} and Figure~\ref{Fig:ERO}, where noisy pairwise cardinal comparisons of $N= 200$ objects were simulated.
In the case of MUN, we observe in Figure~\ref{Fig:MUN} that the dilation Laplacian method has a better performance for a low noise level, while Sync-Rank is more robust at a larger noise level. The centrality score $\mu$ is more accurate for a large $\eta$. The simulations were repeated $20$ times. The variance is large so that the error bars are not displayed. Notice that for a low noise Kendall's $\tau$-distance can simply vanish so that there is then a missing point in the logarithmic plot of Figure~\ref{Fig:MUN}.
\begin{figure}[h]
\setlength\figureheight{0.25\textwidth} 
\setlength\figurewidth{0.4\textwidth}
%
%
\begin{tikzpicture}

\begin{axis}[%
width=\figurewidth,
height=\figureheight,
at={(0.769in,0.47in)},
scale only axis,
unbounded coords=jump,
xmin=0,
xmax=0.2,
xlabel={$\eta$},
ymin=-7,
ymax=0,
ylabel={log Kendall-$\tau$ distance},
axis background/.style={fill=white}
]
\addplot [color=red,solid,mark=x,mark options={solid},forget plot]
  table[row sep=crcr]{%
0.001	-inf\\
0.011	-4.54448644679409\\
0.021	-2.86998549958596\\
0.031	-2.03779817603451\\
0.041	-1.4875538613793\\
0.051	-1.23832931999148\\
0.061	-1.10855223012801\\
0.071	-1.02728195018192\\
0.081	-0.980357002730401\\
0.091	-0.950032012214455\\
0.101	-0.924816293018017\\
0.111	-0.902519782126725\\
0.121	-0.899849991386203\\
0.131	-0.887662376512499\\
0.141	-0.878959244201816\\
0.151	-0.876484328268697\\
0.161	-0.867880515444042\\
0.171	-0.859735864173976\\
0.181	-0.857171824809628\\
0.191	-0.853318881734502\\
};
\addplot [color=blue,solid,mark=o,mark options={solid},forget plot]
  table[row sep=crcr]{%
0.001	-6.68964952169788\\
0.011	-2.77174485728846\\
0.021	-1.82483827852723\\
0.031	-1.71145394336908\\
0.041	-1.69734964894278\\
0.051	-1.65318241032861\\
0.061	-1.5842857902855\\
0.071	-1.48727586892778\\
0.081	-1.46770064460668\\
0.091	-1.43777280885462\\
0.101	-1.4425669717151\\
0.111	-1.44066564871415\\
0.121	-1.43156509760611\\
0.131	-1.44214178130311\\
0.141	-1.42804329265124\\
0.151	-1.42291143151867\\
0.161	-1.40951155513805\\
0.171	-1.41998626573822\\
0.181	-1.42341196073448\\
0.191	-1.41461112094521\\
};
\addplot [color=green,solid,mark=asterisk,mark options={solid},forget plot]
  table[row sep=crcr]{%
0.001	-3.60672070311374\\
0.011	-2.98071989715211\\
0.021	-1.69072563686301\\
0.031	-1.57390222766976\\
0.041	-1.51996359526135\\
0.051	-1.47967223816996\\
0.061	-1.45021397834062\\
0.071	-1.42922808964435\\
0.081	-1.4069945770579\\
0.091	-1.38384008887922\\
0.101	-1.36681201370179\\
0.111	-1.35031624003947\\
0.121	-1.33798768303811\\
0.131	-1.33194894527769\\
0.141	-1.31084665508044\\
0.151	-1.31136404233303\\
0.161	-1.29365563134172\\
0.171	-1.27825602698524\\
0.181	-1.29086541267218\\
0.191	-1.25813541447464\\
};
\end{axis}
\end{tikzpicture}%
%
%
\begin{tikzpicture}

\begin{axis}[%
width=\figurewidth,
height=\figureheight,
at={(0.993in,0.671in)},
scale only axis,
unbounded coords=jump,
xmin=0,
xmax=0.2,
xlabel={$\eta$},
ymin=-6,
ymax=0,
ylabel={log Kendall-$\tau$ distance},
axis background/.style={fill=white}
]
\addplot [color=red,solid,mark=x,mark options={solid},forget plot]
  table[row sep=crcr]{%
0.001	-inf\\
0.011	-5.5776591070836\\
0.021	-3.35269440876408\\
0.031	-2.26294168217803\\
0.041	-1.54216091758197\\
0.051	-1.18834183804178\\
0.061	-1.00531074886033\\
0.071	-0.89976657863049\\
0.081	-0.854236449683414\\
0.091	-0.827273150484824\\
0.101	-0.81182037359541\\
0.111	-0.795806156035971\\
0.121	-0.788410879032484\\
0.131	-0.782478285784034\\
0.141	-0.777805002244956\\
0.151	-0.772848656677347\\
0.161	-0.768566797236066\\
0.171	-0.767110207671742\\
0.181	-0.766439482370801\\
0.191	-0.761638562594803\\
};
\addplot [color=blue,solid,mark=o,mark options={solid},forget plot]
  table[row sep=crcr]{%
0.001	-inf\\
0.011	-2.83669048784766\\
0.021	-1.91136671800115\\
0.031	-1.81021260548695\\
0.041	-1.81388176981183\\
0.051	-1.81354275019167\\
0.061	-1.63589023824113\\
0.071	-1.51631719948624\\
0.081	-1.49736629831715\\
0.091	-1.47750091203692\\
0.101	-1.47043002312973\\
0.111	-1.46333821975565\\
0.121	-1.45553133992912\\
0.131	-1.46875330211843\\
0.141	-1.45881642488912\\
0.151	-1.46730822698209\\
0.161	-1.45830326225663\\
0.171	-1.45397079869665\\
0.181	-1.43795270493866\\
0.191	-1.46788600647173\\
};
\addplot [color=green,solid,mark=asterisk,mark options={solid},forget plot]
  table[row sep=crcr]{%
0.001	-inf\\
0.011	-2.8526766570667\\
0.021	-1.77402815499674\\
0.031	-1.81497666324621\\
0.041	-1.83746933214899\\
0.051	-1.80152052151313\\
0.061	-1.75770767758268\\
0.071	-1.74015918679028\\
0.081	-1.71851834791429\\
0.091	-1.68033496471091\\
0.101	-1.67712375977026\\
0.111	-1.63565807304165\\
0.121	-1.62864122186967\\
0.131	-1.59170042721076\\
0.141	-1.58484335588283\\
0.151	-1.56180898696983\\
0.161	-1.55231928163004\\
0.171	-1.53570420269834\\
0.181	-1.53037365003892\\
0.191	-1.52524457273612\\
};
\end{axis}
\end{tikzpicture}%
\caption{Natural logarithm of Kendall's $\tau$-distance (the lower the better) in the noise model MUN$(N,p,\eta)$, with $N=200$, $p = 0.2$ (left) and $p = 1$ (right). We compare Spectral Sync-Rank (blue circles), dilation Laplacian (red crosses, $g = 0.1/N$) and the random walk scores $\mu$ (\ref{eq:mu}) (green stars, $g = 1/N$).\label{Fig:MUN}}
\end{figure}
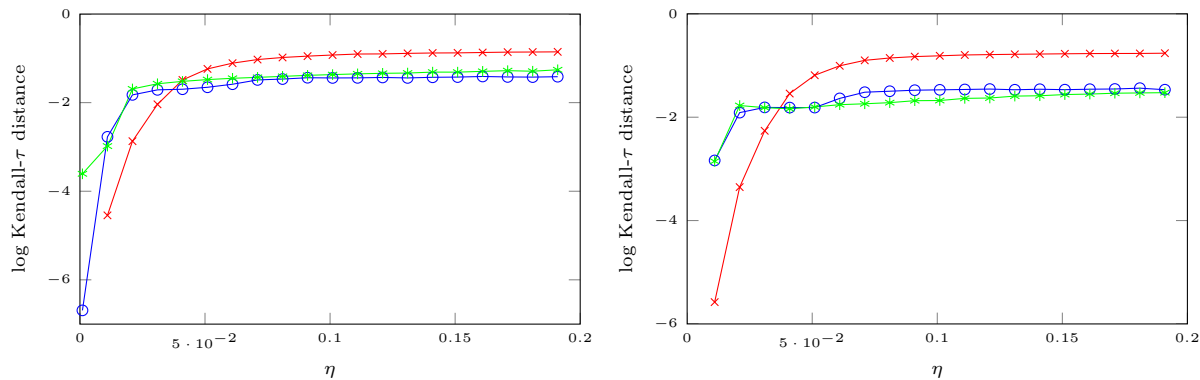
The same simulations were repeated $20$ times in the ERO noise model and the results are illustrated in Figure~\ref{Fig:ERO}, where, for clarity, the large error bars are not displayed. Obviously, Sync-Rank gives better results in this case. The methods based on the dilation Laplacian are less robust in the case of purely random comparisons. The centrality score $\mu$ is here very sensitive to the outliers.

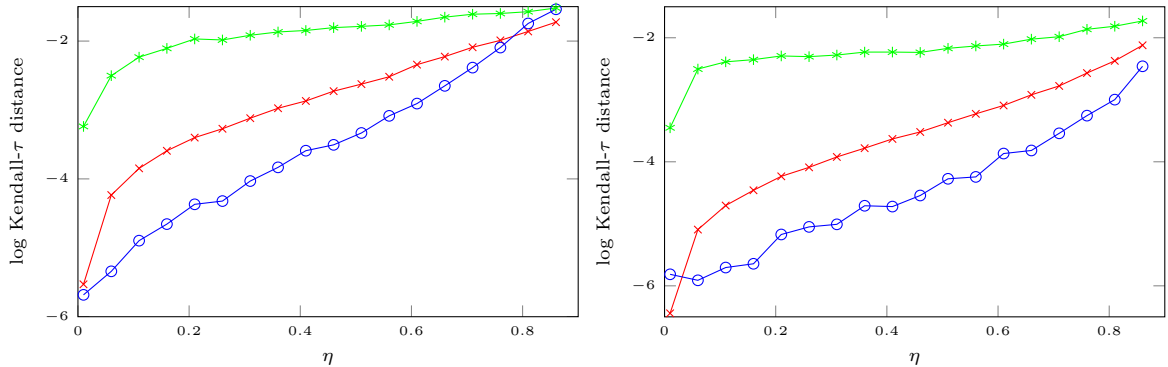
\begin{figure}[h]
\setlength\figureheight{0.25\textwidth} 
\setlength\figurewidth{0.4\textwidth}
%
%
\begin{tikzpicture}

\begin{axis}[%
width=\figurewidth,
height=\figureheight,
at={(0.769in,0.47in)},
scale only axis,
xmin=0,
xmax=0.9,
xlabel={$\eta$},
ymin=-6,
ymax=-1.5,
ylabel={log Kendall-$\tau$ distance},
axis background/.style={fill=white}
]
\addplot [color=red,solid,mark=x,mark options={solid},forget plot]
  table[row sep=crcr]{%
0.01	-5.52966027001093\\
0.06	-4.23429999065116\\
0.11	-3.84691539374076\\
0.16	-3.59210848705085\\
0.21	-3.4006447705607\\
0.26	-3.27251914102407\\
0.31	-3.11892248963691\\
0.36	-2.97474796222075\\
0.41	-2.86916603617787\\
0.46	-2.72670298720666\\
0.51	-2.62458467608622\\
0.56	-2.51764270521226\\
0.61	-2.34162893736742\\
0.66	-2.22517049755446\\
0.71	-2.08999959484725\\
0.76	-1.98980296553579\\
0.81	-1.86047958538203\\
0.86	-1.72552718858768\\
};
\addplot [color=blue,solid,mark=o,mark options={solid},forget plot]
  table[row sep=crcr]{%
0.01	-5.68265055095277\\
0.06	-5.34144520005243\\
0.11	-4.89620796628684\\
0.16	-4.65593465963355\\
0.21	-4.36884753018394\\
0.26	-4.32176861862436\\
0.31	-4.03129898914795\\
0.36	-3.83254094967213\\
0.41	-3.59078648927438\\
0.46	-3.50588810493057\\
0.51	-3.33370311996732\\
0.56	-3.08511283218062\\
0.61	-2.90690721564522\\
0.66	-2.64970425982135\\
0.71	-2.38512043569093\\
0.76	-2.09409103703572\\
0.81	-1.74421200079046\\
0.86	-1.5386493342609\\
};
\addplot [color=green,solid,mark=asterisk,mark options={solid},forget plot]
  table[row sep=crcr]{%
0.01	-3.23870043733068\\
0.06	-2.50180235051334\\
0.11	-2.2333192269477\\
0.16	-2.10561011733987\\
0.21	-1.96899748671885\\
0.26	-1.98375683174658\\
0.31	-1.91558079924798\\
0.36	-1.86777261804754\\
0.41	-1.84598750744878\\
0.46	-1.80473779692543\\
0.51	-1.78640448604425\\
0.56	-1.76516646962119\\
0.61	-1.71405191729739\\
0.66	-1.65282810983872\\
0.71	-1.6101731072467\\
0.76	-1.59945026698781\\
0.81	-1.57347191028891\\
0.86	-1.51973961244391\\
};
\end{axis}
\end{tikzpicture}%
%
%
\begin{tikzpicture}

\begin{axis}[%
width=\figurewidth,
height=\figureheight,
at={(0.758in,0.481in)},
scale only axis,
xmin=0,
xmax=0.9,
xlabel={$\eta$},
ymin=-6.5,
ymax=-1.5,
ylabel={log Kendall-$\tau$ distance},
axis background/.style={fill=white}
]
\addplot [color=red,solid,mark=x,mark options={solid},forget plot]
  table[row sep=crcr]{%
0.01	-6.44294728450538\\
0.06	-5.09302056755637\\
0.11	-4.70496275853067\\
0.16	-4.45920076383828\\
0.21	-4.23438669460058\\
0.26	-4.08950703571284\\
0.31	-3.92187032651163\\
0.36	-3.78037781267124\\
0.41	-3.63127446217122\\
0.46	-3.5189458701766\\
0.51	-3.36938469356079\\
0.56	-3.22634713304886\\
0.61	-3.09170096158456\\
0.66	-2.92023796919208\\
0.71	-2.77667747180544\\
0.76	-2.56897125354918\\
0.81	-2.37226915807413\\
0.86	-2.12136338704551\\
};
\addplot [color=blue,solid,mark=o,mark options={solid},forget plot]
  table[row sep=crcr]{%
0.01	-5.81418078434398\\
0.06	-5.91088081843736\\
0.11	-5.70503954584625\\
0.16	-5.64428174754859\\
0.21	-5.17197253955697\\
0.26	-5.04957463980189\\
0.31	-5.0083138700842\\
0.36	-4.70885706108789\\
0.41	-4.72204283215467\\
0.46	-4.54342293118424\\
0.51	-4.27409655884365\\
0.56	-4.24362020967981\\
0.61	-3.86688871419874\\
0.66	-3.81745544940477\\
0.71	-3.54097942474184\\
0.76	-3.2557600737014\\
0.81	-2.99771917075069\\
0.86	-2.46004736399437\\
};
\addplot [color=green,solid,mark=asterisk,mark options={solid},forget plot]
  table[row sep=crcr]{%
0.01	-3.45180326360162\\
0.06	-2.50442760821387\\
0.11	-2.38822235740524\\
0.16	-2.35276818663845\\
0.21	-2.29316975770571\\
0.26	-2.30455940247007\\
0.31	-2.27857907467797\\
0.36	-2.23105941836239\\
0.41	-2.23087230574343\\
0.46	-2.23606596545741\\
0.51	-2.17042241301137\\
0.56	-2.13219583404048\\
0.61	-2.1023450521507\\
0.66	-2.02249141306967\\
0.71	-1.98313596890993\\
0.76	-1.86345520048917\\
0.81	-1.81419006923467\\
0.86	-1.72879174773344\\
};
\end{axis}
\end{tikzpicture}%
\caption{Natural logarithm of Kendall's $\tau$-distance (the lower the better) in the noise model ERO$(N,p,\eta)$, with $N=200$, $p = 0.2$ (left) and $p = 1$ (right). We compare Spectral Sync-Rank (blue circles), dilation Laplacian (red crosses, $g = 0.1/N$) and the random walk scores $\mu$ (\ref{eq:mu}) (green stars, $g = 1/N$).\label{Fig:ERO}}
\end{figure}

\FloatBarrier
\section{Conclusions}
In this paper, deformed Laplacians generalizing existing Laplacians were defined. As particular cases of this construction, we have emphasized the relevance of the dilation Laplacians for ranking the nodes of networks constructed from pairwise comparisons. Relations with discrete Hodge theory were highlighted.
Furthermore, our method is based on a real-valued analogue of the connection Laplacian used in the Sync-Rank method of \cite{Cucuringu}.

The ranking method proposed in this paper relies on the computation of the least eigenvector of a dilation Laplacian, which was proved to take the same sign on a whole connected component of the comparison graph. Hence, the absolute value of this eigenmode provides \emph{directly} a ranking score, since no choice of sign is required, in analogy with the computation of the Perron eigenvector of a stochastic matrix.

On the example considered, the ranking obtained thanks to the dilation Laplacian is shown empirically to have a lower number of upsets in the top-$k$ as shown by the numerical simulations. \\

{\bf Acknowledgements}\\
\noindent {\small The authors thank the following organizations. EU: The research leading to these results has received funding from the European Research Council under the European Union's Seventh Framework Programme (FP7/2007-2013) / ERC AdG A-DATADRIVE-B (290923). This paper reflects only the authors' views, the Union is not liable for any use that may be made of the contained information.
Research Council KUL: GOA/10/09 MaNet, CoE PFV/10/002 (OPTEC),
BIL12/11T; PhD/Postdoc grants
Flemish Government:
FWO: projects: G.0377.12 (Structured systems), G.088114N (Tensor based
data similarity); PhD/Postdoc grants
IWT: projects: SBO POM (100031); PhD/Postdoc grants
iMinds Medical Information Technologies SBO 2014
Belgian Federal Science Policy Office: IUAP P7/19 (DYSCO, Dynamical
systems, control and optimization, 2012-2017).}\\
\appendix

\section{The general construction of the deformed Laplacian and connections with other works \label{AppDefLap}}

\subsection{Connection Laplacian and $SO(d)$}
In order to recover the connection Laplacian of \cite{SingerWu}, it suffices to choose $T_{[i,j]}$ as a real orthogonal matrix of $SO(d)$. 
Then, the Laplacian becomes
\begin{equation}
\big[L_{T}v\big]_{i}=\sum_{j\in V}w_{ij}\Big(v_i-O_{ij}v_j\Big).\label{ConnectionLaplacian}
\end{equation}
where we have defined the orthogonal matrix
$O_{ij} = (T_{[i,j]})^2,
$ with $O_{ij}^\intercal = O_{ij}^{-1} =  O_{ji}$, so that (\ref{ConnectionLaplacian}) is the connection Laplacian of Singer and Wu. Given any $O_{ij}\in SO(d)$, a classical result of Lie theory is that one can write $O_{ij}=\exp (o_{ij})$ with $o_{ij}\in \mathfrak{so}(d)$ an antisymmetric matrix. Hence, we choose $T_{(ij)}=\exp (o_{ij}/2)$.
\subsection{Magnetic Laplacian and $U(1)$}
The so-called magnetic Laplacian \cite{Shubin,deVerdiere,Berkolaiko} is obtained by choosing
$T_{[i,j]} = \exp({\rm i}\theta \alpha_{ij}/2),$
where $\alpha_{ij}$ is a skew-symmetric function of the oriented edges, i.e. $\alpha_{ij} = -\alpha_{ji}$.
Therefore, the general Laplacian becomes
\[\big[L_{T}v\big]_i=\sum_{j\in V}w_{ij}\Big(v_i- e^{{\rm i}\theta \alpha_{ij}}v_j\Big).\label{MagneticLaplacian}
\]
\subsection{Signed Laplacian and $\mathbb{Z}_2$}
The signed Laplacian \cite{Kunegis} used in the case of signed networks is simply obtained as a follows. We denote $\sigma_{ij} = \pm 1$ the sign of each edge.
Let us define $\Sigma_{ij} = 0$ if $\sigma_{ij} = 1$ and $\Sigma_{ij} = 1$ if $s_{i,j} = -1$. Then, we define 
$T_{[i,j]} = \exp({\rm i}\pi \Sigma_{ij}/2),$
and using $\exp({\rm i}\pi\Sigma_{ij})=\sigma_{ij}$, we find 
\[\big[L_{T}v\big]_i=\sum_{j\in V}w_{ij}\Big(v_i-\sigma_{ij} v_j\Big).\label{SignedLaplacian}
\]
\subsection{Dilation Laplacian and $\mathbb{R}^{+}_{0}$}
The dilation Laplacian is obtained by choosing $T_{[i,j]}\in\mathbb{R}^+$ for each directed edge $[i,j]$. 
The dilation group is here $\mathbb{R}^+_{0}$ for the multiplication of positive numbers.
\section{Proofs of the main results\label{AppProof}}
\begin{proof}[Proof of Proposition~\ref{PropConvexOptim}]
The objective function is purely quadratic on the feasible space. The Hessian of this function is strictly positive definite on the feasible space since $f^\intercal v_0 = 0$ and $\mathcal{L}_0$ is strictly positive definite on $(v_0)^\perp$. Indeed, since the grapg $\mathcal{G}$ is connected,  $v_0$ is the only eigenvector of $\mathcal{L}_0$ with a zero eigenvalue. Hence, the first order optimality conditions $\mathcal{L}_0f = g\mathcal{W}^{\rm diff}v_0$ and $f^\intercal v_0 = 0$ are here sufficient conditions. The vector $f^\star = g\mathcal{L}_0^\dagger\mathcal{W}^{\rm diff}v_0$ is the unique solution since $v_0^\intercal g\mathcal{L}_0^\dagger\mathcal{W}^{\rm diff}v_0 = 0$.
\end{proof}

\begin{proof}[Proof of Lemma~\ref{LemmaMoorePenrose}]
A first order perturbative expansion of the infinitesimal dilation Laplacian gives a correction to the combinatorial Laplacian: 
\[[\mathcal{L}_{g}f]_i=[\mathcal{L}_{0}f]_i-g [\mathcal{W}^{\rm diff}f]_i + \mathcal{O}(g^2),\]
where $[\mathcal{W}^{\rm diff}f]_i=w^{\rm diff}_i f_i$ for all $f\in \mathbb{R}^N$ and all $i \in V$. We introduce an expansion in powers of $g$ of the lowest eigenvector and lowest eigenvalue \[v^{(g)}_{0} = v_{0} +gf^{\star}+ \mathcal{O}(g^2)\mbox{ and }\lambda^{(g)}_0 = \lambda^{(0)}_0 + g \lambda^\star +\mathcal{O}(g^2),\]
where $\lambda^{(0)}_0 = 0$, since it corresponds to the lesat eigenvalue of the combinatorial Laplacian. We further assume the orthogonality of the first order perturbation, i.e., $f^{\star \intercal}v_{0} = 0$. By substituting these expansions in the eigenvalue equation $\mathcal{L}_{g}v^{(g)}_{0} = \lambda^{(g)}_0 v^{(g)}_{0}$, we obtain
\[g\mathcal{L}_{0}f^{\star} -g \mathcal{W}v_0 + \mathcal{O}(g^2) = g \lambda^\star v_{0} + \mathcal{O}(g^2).\]
Using $v_{0}^\intercal f^{\star} = 0$ and $v_{0}^\intercal\mathcal{W}v_0= 0$, we find $\lambda^\star = 0$. Therefore, the perturbation $f^{\star}$ satisfies
$\mathcal{L}_{0}f^{\star} - \mathcal{W}v_0  = 0$.
By using again $ v_{0}^\intercal\mathcal{W}v_0 = 0$, we obtain 
$f^{\star} =\mathcal{L}_{0}^{\dagger}\mathcal{W}v_0$
as stated in (\ref{eq:SeriesEigenVec}).
\end{proof}
\begin{proof}[Proof of Proposition~\ref{Prop2}]
($\Leftarrow$) If, for all edges $\{i,j\}\in E_u$, $a_{ij} = h_i-h_j$, we can write the dilation Laplacian as 
\[\big[\mathcal{L}_{g} v\big]_i = \sum_{j \in V}w_{ij}e^{g h_j}\big(e^{-g h_i}v_i-e^{-g h_j}v_j\big),\]
so that a zero eigenvector is indeed $[v_{0}]_i =  c \times e^{g h_i}$.

($\Rightarrow$) Assume that the graph has only one connected component. If the graph has many connected components, each of them can be treated separately. Consider an eigenvector $v$ of eigenvalue zero. Then, we have
\[ v^\intercal \mathcal{L}_{g}v= \frac{1}{2} \sum_{i,j\in V}w_{ij}\Big(e^{g a_{ij}/2}v_j-e^{-g a_{ij}/2}v_i\Big)^{2} = 0.\]
Since each term of the sum has to vanish, then for all edges $[i,j]$, we have $v_i =  e^{-g a_{ij}}v_j$
Then, $v$ necessarily takes  the same sign on all the nodes in the connected graph, so that we can write $v_i > 0$ for all $i\in V$. Hence, we can find a vector $u$ such that $v_i= e^{-u_i}$ for all node $i$. Substituting this last relation in (\ref{ScalingRelation}), we find $ga_{ij} = u_{j}-u_{i}$. Then, we define $h_i = u_{i}/g$, which proves our result.
\end{proof}
\begin{proof}[Proof of Lemma~\ref{LemmaInconsistent}]
Let $a_{ij} = h_i-h_j+\epsilon_{ij}$  with $\epsilon_{ij} =-\epsilon_{ji}$. Then, by definition, we have
\[
[\diag(e^{gh})\mathcal{L}_g(a,W)\diag(e^{gh})v]_i = \sum_{j\in V}w_{ij}e^{g(h_i+h_j)}(e^{g\epsilon_{ji}}v_i-v_j)=[\mathcal{L}_g(\epsilon,\bar{W})v]_i,
\]
with $\bar{w}_{ij} = w_{ij}e^{g(h_i+h_j)}$.
\end{proof}
\begin{proof}[Proof of Theorem~\ref{Thm1}]
We first show that $v^\star_i\geq 0$ for each node $i\in V$.
The eigenvector of smallest eigenvalue is a solution to the following optimization problem: 
\begin{equation}
\mathop{\rm minimize} _{v\in \mathbb{R}_\star^N} F(v),\ {\rm with}\ F(v) = \frac{\sum_{\{i,j\}\in E}w_{ij}\Big(s_{ji}v_j-s_{ij}v_i\Big)^{2}}{\sum_{i \in V}v_i^{2}}.\label{eq:minimizeF}
\end{equation}
Assume that the normalized solution  $v^\star$ of~(\ref{eq:minimizeF}) does not have the same sign everywhere. Then, using the inequality  
$\big|x-y \big|\geq\big||x|-|y|\big|$, we have
\[\Big(s_{ji}v^\star_j-s_{ij}v^\star_i\Big)^{2}\geq\Big(s_{ji}\big|v^\star_j\big|-s_{ij}\big|v^\star_i\big|\Big)^{2}\]
and hence, $|v^\star|$ has a lower objective value (the absolute value is taken entry-wise). Moreover, $|v^\star|$ and $v^\star$ have the same norm. This contradicts the assumption that the solution $v^\star_i$ does not have the same sign everywhere.

We show now that $v^\star_i> 0$ for all nodes $i\in V$. Firstly, assume that there exists one node $i_\star\in V$ such that $v^\star_{i_\star} = 0$. We separate the terms involving $i_\star$ in the numerator of the objective in (\ref{eq:minimizeF}), leading to
\[\sum_{\{\{i,j\}\in E |i\neq i_\star \neq j\}}w_{ij}\Big(s_{ji}v^\star_j-s_{ij}v^\star_i\Big)^{2} + \sum_{j\in V}w_{i_\star j}s^2_{ji_\star}(v^\star_j)^2.\]
By choosing a $0<u \leq (s_{j i_\star}/s_{i_\star j})v^\star_j$ for all $j\in V$ such that $\{i_\star,j\}\in E_u$, we have
\[\sum_{j\in V}w_{i_\star j}s^2_{ji_\star}(v^\star_j)^2\geq  \sum_{j\in V}w_{i_\star j}\Big(s_{ji_\star}v^\star_j-s_{i_\star j}u\Big)^{2}, \]
and,
\[\sum_{i \in V}(v^\star_i)^{2}\leq \sum_{\{i \in V| i\neq i_\star\}}(v^\star_i)^{2} + u^{2}.\]
Therefore, defining $\tilde{v}^\star_i = v^\star_i$ for all $i\in V$ such that $i\neq i_\star$, and $\tilde{v}^\star_{i_\star} = u$, we have $F(v^\star)\geq F(\tilde{v}^\star)$, which contradicts the assumption that $v^\star$ is a solution of the minimization problem.
Assume now that there is a finite number of nodes $i_\star^{(1)}, \dots, i_\star^{(m)}\in V$ such that $v^\star_{i_\star^{(\ell)}}=0$ for all $\ell\in\{1,\dots,m\}$. Then, by choosing $\tilde{v}^\star_i = v^\star_i$ for all $i\in V$ such that  $i\notin \{i_\star^{(1)}, \dots, i_\star^{(m)}\}$ and 
\[\tilde{v}^\star_{i_\star^{(\ell)}} = \min_{\ell\in\{1,\dots,m\}}\min_{j\in V}\frac{s_{[j,i^{(\ell)}_\star]}v^\star_j}{s_{[i^{(\ell)}_\star,j]}},\]
for all $\ell\in\{1,\dots,m\}$, we have $F(v^\star)\geq F(\tilde{v}^\star)$, contradicting again our initial assumption.

Finally, let us prove that the solution $v^\star$, with $\|v^\star\|_2 = 1$, is unique. If the eigenspace associated to the minimal eigenvalue has a dimension larger than $1$, one can find $\tilde{v}^\star$ in this eigenspace satisfying $\sum_{i\in V}\tilde{v}^\star_i v^\star_i= 0$. Clearly, if $\tilde{v}^\star$ has the same sign everywhere, this is impossible. Therefore, the dimension of the eigenspace is $1$.
\end{proof}
\begin{proof}[Proof of Proposition~\ref{CheegerEqual}]
 We have naturally that $\lambda_0^{(g)}\leq \eta^{[2]}_{\mathcal{G}}$, since $(\mathbb{R}_\star^+)^N\subset \mathbb{R}^N_\star$. Following Theorem \ref{Thm1}, it exists an eigenvector $v^{(g)}_0$ of the dilation Laplacian of smallest eigenvalue, satisfying $[v^{(g)}_0]_i>0$ for all $i\in V$. Therefore, choosing $v =  v^{(g)}_0\in (\mathbb{R}_\star^+)^N$, we have \[\lambda_0^{(g)} = \eta^{[2]}(v_0)\geq  \min_{v\in (\mathbb{R}_\star^+)^N}\eta(v) = \eta^{[2]}_{\mathcal{G}},\]
and therefore, $\lambda_0^{(g)} = \eta^{[2]}_{\mathcal{G}}$.
\end{proof}
\begin{proof}[Proof of Lemma~\ref{Lem:frustration}]
This is a simple consequence of Proposition~\ref{Prop2}.
\end{proof}
\begin{proof}[Proof of Theorem~\ref{Thm2}]
(i) We first prove the {\it lhs} of (\ref{eq:frustrationCheeger}). Denote $\|v\|_2^2 = \sum_i v_i^2$. For $v\in \mathbb{R}_\star^+$, and for all $i\in V$, we have
$\|v\|_2\geq v_i>0$, which yields 
$\frac{1}{N}\sum_{i\in V} v_i\leq  \|v\|_2$. Furthermore, for all $v\in \mathbb{R}_\star^+$, we find
\[
|s^{1/2}_{ij}v_j  - s^{-1/2}_{ij}v_i|\leq s^{1/2}_{ij}v_j  + s^{-1/2}_{ij}v_i\leq \|v\|_2+ s^{1/2}_{\rm max}\|v\|_2,
\] 
since, without loss of generality, if $s^{1/2}_{ij}\geq 1$, then $s^{1/2}_{ij}\leq s^{1/2}_{\rm max}$ and $(s^{1/2}_{ij})^{-1}\leq 1$.
Therefore, we obtain, for all $v\in \mathbb{R}_\star^+$,
\[
\frac{\sum_{i,j\in V}w_{ij}|s^{1/2}_{ij}v_j  - s^{-1/2}_{ij}v_i|^2}{\sum_{i\in V} v_i^2}\leq\|v\|_2(1+ s^{1/2}_{\rm max})\frac{\sum_{i,j\in V}w_{ij}|s^{1/2}_{ij}v_j  - s^{-1/2}_{ij}v_i|}{\|v\|_2\sum_{i\in V} v_i/N}.
\]
The result is obtained by taking the minimum over $v\in \mathbb{R}_\star^+$ of both sides in the above inequality.

(ii)  We now prove the {\it lhs} of (\ref{eq:frustrationCheeger}). Let $v$ be the eigenvector with eigenvalue $\lambda_0$. By Cauchy-Schwartz, we have 
\[\sum_{i,j\in V}w_{ij}|s^{1/2}_{ij}v_j  - s^{-1/2}_{ij}v_i|\leq\sqrt{\sum_{i,j\in V}w_{ij}} \sqrt{\sum_{i,j\in V}w_{ij}|s^{1/2}_{ij}v_j  - s^{-1/2}_{ij}v_i|^2},\]
and thanks to the triangle inequality $(\sum_{i\in V} v_i^2)^{1/2}\leq\sum_{i\in V}|v_i|$. Therefore, combining these inequalities, we obtain
\[\frac{\sum_{i,j\in V}w_{ij}|s^{1/2}_{ij}v_j  - s^{-1/2}_{ij}v_i|}{\sum_{i\in V}|v_i|}\leq\sqrt{{\rm vol}(\mathcal{G})} \sqrt{\frac{\sum_{i,j\in V}w_{ij}|s^{1/2}_{ij}v_j  - s^{-1/2}_{ij}v_i|^2}{\sum_{i\in V} v_i^2}}.\]
By dividing both sides by ${\rm vol}(\mathcal{G})$, we find (\ref{eq:frustrationCheeger}).

\end{proof}

\begin{proof}[Proof of Proposition~\ref{UpperboundLeastEigHodge}]
This is an application of the Rayleigh quotient formula, the fact that $\lambda_0^{(g)}\leq \eta_{a,W}^{[2]}(e^{gh})$ and of the inequality
\[
e^{g(h_i+h_j)}\leq \frac{1}{2}\sum_{k\in V} e^{2gh_k},
\]
for all $i,j\in V$, following from $2xy\leq x^2 + y^2$ for all $x,y\in \mathbb{R}$.
\end{proof}

\begin{proof}[Proof of Lemma~\ref{LemmaMajorization}]
Since $a_{ij} = h_i-h_j + \epsilon_{ij}$, we have
\[
\frac{v^\intercal\mathcal{L}_g(a,W)v}{v^\intercal v}= \frac{\sum_{i,j\in V}w_{ij}e^{g(h_i+h_j)}\Big(e^{g\epsilon_{ji/2}}(v_i e^{-gh_i})-e^{g\epsilon_{ij/2}}(v_je^{-g h_j})\Big)^2}{\sum_{i\in V}e^{2gh_i}(v_i e^{-gh_i})^2}.\]
Then, using the inequality
\[
\sum_{i\in V}e^{2gh_i}(v_i e^{-gh_i})^2\leq \sum_{i\in V}e^{2gh_i}\sum_{j\in V}(v_j e^{-gh_j})^2,
\]
and denoting $\bar{v}_g=\diag(e^{-gh})v$, we find
\[
\frac{v^\intercal\mathcal{L}_g(a,W)v}{v^\intercal v}\geq \frac{\sum_{i,j\in V}w_{ij}e^{g(h_i+h_j)}\Big(e^{g\epsilon_{ji/2}}[\bar{v}_g]_i-e^{g\epsilon_{ij/2}}[\bar{v}_g]_j\Big)^2}{\sum_{i\in V}e^{2gh_i}\sum_{j\in V}[\bar{v}_g]_j^2} = \frac{\bar{v}^\intercal_g\mathcal{L}_g(\epsilon,\bar{W})\bar{v}_g}{\bar{v}_g^\intercal\bar{v}_g},\]
with $\bar{W} = \diag(\frac{e^{gh}}{\|e^{gh}\|_2})W\diag(\frac{e^{gh}}{\|e^{gh}\|_2})$. This proves our statement.
\end{proof}
\begin{proof}[Proof of Theorem~\ref{Thm3}]
For all $v\in \mathbb{R}^N$ such that $v\neq 0$, the generalized Rayleigh quotient reads
\begin{equation}
R(v) = \frac{\sum_{i,j\in V}w_{ij}\Big(e^{-g a_{ij}/2}v_i - e^{g a_{ij}/2}v_j\Big)^{2}}{\sum_{i,j\in V}w_{ij}\Big(e^{-g a_{ij}/2}v_i\Big)^{2}}.\label{eq:GenRayleigh}
\end{equation}
Following the same arguments as in the proof of Theorem~\ref{Thm1}, if $v_i> 0$ for all $i\in S\subset V$ and  $v_i\leq 0$ for all $i\in V\setminus S$, the numerator can always be made smaller by choosing $\tilde{v}_i> 0$ for all $i\in V\setminus S$. It is also obvious that the denominator of  (\ref{eq:GenRayleigh}) is larger if $v_i> 0$ for all $i\in V$.
\end{proof}

\section{Relationship with Witten's deformed Laplacian\label{AppWitten}}
Actually, we can write the dilation Laplacian
\[[\mathcal{L}_g v ]_{i} = \sum_{j\in V}w_{ij}\big(e^{-g a_{ij}}v_i-v_j\big),
\]
as a Schr\"odinger operator \cite{CyconBook,ColindeVerdiereBook}
 $\mathcal{L}_g = \mathcal{L}_{0} + V_g$,
where $\mathcal{L}_0$ is the combinatorial Laplacian and  with the potential 
$[V_g]_{ii} = \sum_{\{ j \in V\}}w_{ij}(e^{-g a_{ij}}-1)$.
In the case where $a_{ij} = [{\rm d}h]_{ij}\equiv h_j-h_i$, this deformed Laplacian takes an interesting form.
This observation leads us to present the following analogy.
In a famous paper \cite{WittenMorse}, Witten defined a deformed Laplacian in order to prove the Morse inequalities of  differential topology. Subsequently, Forman defined a discrete Morse theory for cell complexes \cite{FormanWittenMorse,FormanMorse}. Here, we follow the construction of Forman and exhibit a deformation which corresponds to a deformed Laplacian presented in the current paper. However, we do not need to consider the discrete Morse theory in its generality.

The idea of Forman is to deform the Hodge Laplacian by first defining a real-valued function $\sigma$ on the simplicial complex, i.e. $\sigma$ should take a real value on each vertex, each edge, each triangle, etc. A Morse function is then supposed to satisfy certain properties that we do not need here.
For our purpose, let us define the function $\sigma$ only on the nodes and edges, as follows: $\sigma(i) = h_i$ and $\sigma(\{i,j\}) = (h_i+h_j)/2$. Let us notice that we do not require here $\sigma$ to be a Morse function in the sense of Forman, because we are not interested here in the topology of the network. The Witten deformed gradient is defined by
${\rm d}_{t} = e^{-t\sigma}{\rm d}e^{t\sigma}$.
Using the notations of \cite{FormanWittenMorse,FormanMorse}, we obtain the correspondence
\[{\rm d}_{t}^*{\rm d}_{t} = \mathcal{L}_{g}, \quad {\rm with } \ a = {\rm d}h\ {\rm and } \ g =t.
\]
Contrary to Forman, we only defined here the deformed exterior derivative for the node space. Defining the deformed exterior derivative for the simplicial complex would require a more complete definition of $\sigma$.

\section*{References}
\bibliography{References}

\begin{thebibliography}{10}

\bibitem{ChungBook}
F.~Chung.
\newblock {\em Spectral Graph Theory}.
\newblock Am. Math. Soc., 1997.

\bibitem{Coifman05geometricdiffusions}
R.~R. Coifman, S.~Lafon, A.~B. Lee, M.~Maggioni, F.~Warner, and S.~Zucker.
\newblock Geometric diffusions as a tool for harmonic analysis and structure
  definition of data: Diffusion maps.
\newblock {\em Proceedings of the National Academy of Sciences}, pages
  7426--7431, 2005.

\bibitem{BelkinManifold}
M.~Belkin, P.~Niyogi, and V.~Sindhwani.
\newblock Manifold regularization: A geometric framework for learning from
  labeled and unlabeled examples.
\newblock {\em Journal of Machine Learning Research}, 7:2399--2434, 2006.

\bibitem{YaoRanking}
X.~Jiang, L.-H. Lim, Y.~Yao, and Y.~Ye.
\newblock {Statistical ranking and combinatorial Hodge theory}.
\newblock {\em Math. Program.}, 127(1):203--244, 2011.

\bibitem{Negahban}
S.~Negahban, S.~Oh, and D.~Shah.
\newblock Iterative ranking from pair-wise comparisons.
\newblock In {\em Advances in Neural Information Processing Systems}, volume~3,
  pages 2474--2482, 2012.

\bibitem{FogelSpectralRanking}
F.~Fogel, A.~d'Aspremont, and M.~Vojnovic.
\newblock Spectral ranking using seriation.
\newblock {\em Journal of Machine Learning Research}, 17(88):1--45, 2016.

\bibitem{Vigna}
S.~Vigna.
\newblock {Spectral ranking, arXiv:0912.0538}, 2009.

\bibitem{Cucuringu}
M.~Cucuringu.
\newblock Sync-rank: Robust ranking, constrained ranking and rank aggregation
  via eigenvector and sdp synchronization.
\newblock {\em IEEE Transactions on Network Science and Engineering},
  3(1):58--79, 2016.

\bibitem{YuStella}
S.~Yu.
\newblock Angular embedding: A robust quadratic criterion.
\newblock {\em IEEE Trans. Pattern Anal. Mach. Intell.}, 34(1):158--173,
  January 2012.

\bibitem{Tran}
N.M. Tran.
\newblock Hodgerank is the limit of perron rank.
\newblock {\em Mathematics of Operations Research}, (41):643--647, 2016.

\bibitem{ELSNER}
L.~Elsner and P.~van~den Driessche.
\newblock Max-algebra and pairwise comparison matrices.
\newblock {\em Linear Algebra and its Applications}, 385:47 -- 62, 2004.

\bibitem{JMLR:osting}
B.~Osting, C.~Brune, and S.~J. Osher.
\newblock Optimal data collection for informative rankings expose
  well-connected graphs.
\newblock {\em Journal of Machine Learning Research}, 15:2981 -- 3012, 2014.

\bibitem{Osting2016540}
{Analysis of crowdsourced sampling strategies for HodgeRank with sparse random
  graphs}.
\newblock {\em Applied and Computational Harmonic Analysis}, 41(2):540 -- 560,
  2016.

\bibitem{Howard}
S.D. Howard, D.~Cochran, W.~Moran, and F.R Cohen.
\newblock Estimation and registration on graphs.
\newblock {\em arxiv:1010.2983}.

\bibitem{BoumalSyncRot}
N.~Boumal, A.~Singer, P.-A. Absil, and V.~D. Blondel.
\newblock {Cramer-Rao bounds for synchronization of rotations}.
\newblock {\em {Information and Inference}}, 3(1):1--39, 2014.

\bibitem{Bandeira_OdCheeger}
A.~S. Bandeira, A.~Singer, and D.~A. Spielman.
\newblock A {C}heeger inequality for the graph connection {L}aplacian.
\newblock {\em SIAM J. Matrix Anal. Appl.}, 34(4):1611--1630, 2013.

\bibitem{Kunegis}
J.~Kunegis, S.~Schmidt, A.~Lommatzsch, J.~Lerner, E.~W. De~Luca, and
  S.~Albayrak.
\newblock {\em Spectral Analysis of Signed Graphs for Clustering, Prediction
  and Visualization}, chapter~48, pages 559--570.

\bibitem{SingerWu}
A.~Singer and H.T. Wu.
\newblock {Vector Diffusion Maps and the Connection Laplacian}.
\newblock {\em Commun Pure Appl Math.}, 65(8):1067--1144, 2012.

\bibitem{Shubin}
M.~A. Shubin.
\newblock {Discrete magnetic Laplacian}.
\newblock {\em Comm. Math. Phys.}, 164:259--275, 1994.

\bibitem{deVerdiere}
Y.~Colin~de Verdi\`ere.
\newblock {Magnetic interpretation of the nodal defect on graphs}.
\newblock {\em Analysis and PDE}, 6(5):1235--1242, 2013.

\bibitem{Berkolaiko}
G.~Berkolaiko.
\newblock {Nodal count of graph eigenfunctions via magnetic perturbations}.
\newblock {\em Analysis and PDE}, 6(5):1213--1233, 2013.

\bibitem{KenyonVectorBundle}
R.~Kenyon.
\newblock {Spanning forests and the vector bundle Laplacian}.
\newblock {\em {The Annals of Probability}}, 39(5):1983--2017, 2011.

\bibitem{WittenMorse}
E.~Witten.
\newblock {Supersymmetry and Morse theory}.
\newblock {\em J. Diff. Geom.}, 17:661--692, 1982.

\bibitem{CyconBook}
H.L. Cycon, R.~G. Froese, W.~Kirsch, and B.~Simon.
\newblock {\em {Schr\"odinger Operators with applications to quantum Mechanics
  and Global Geometry}}.
\newblock Springer, 2008.

\bibitem{FormanWittenMorse}
R.~Forman.
\newblock {Witten-Morse theory for cell complexes}.
\newblock {\em {Topology}}, 37(5):945--979, 1998.

\bibitem{ColindeVerdiereBook}
Y.~Colin~de Verdi\`ere.
\newblock {\em {Spectres de graphes}}.
\newblock Soci\'et\'e math\'ematique de France, Paris, 1998.

\bibitem{FormanMorse}
R.~Forman.
\newblock {Morse Theory for Cell Complexes}.
\newblock {\em {Advances in Mathematics}}, 134(1):90--145, 1998.

\end{thebibliography}
\bibliographystyle{unsrt}

\end{document}